\let\hat\widehat
\let\tilde\widetilde
\newtheorem{theorem}{Theorem}[section]
\newtheorem{lemma}[theorem]{Lemma}
\newtheorem{corollary}[theorem]{Corollary}
\newtheorem{proposition}[theorem]{Proposition}
\newtheorem{remark}[theorem]{Remark}
\newtheorem{fact}[theorem]{Fact}
\newtheorem{definition}[theorem]{Definition}
\DeclareMathOperator*{\argmax}{arg\,max}
\DeclareMathOperator*{\argmin}{arg\,min}
\DeclareMathOperator*{\esssup}{esssup}
\newcommand{\indep}{\rotatebox[origin=c]{90}{$\models$}}
\newcommand{\mep}{m}
\newcommand{\GLR}{\mathrm{GL}}
\newcommand{\KL}{\mathrm{KL}}
\newcommand{\SR}{\mathrm{SR}}
\newcommand{\ASR}{\mathrm{aSR}}
\newcommand{\MSR}{\mathrm{mSR}}
\newcommand{\CS}{\mathrm{CU}}
\newcommand{\ACS}{\mathrm{aCU}}
\newcommand{\MCS}{\mathrm{mCU}}
\newcommand{\e}{\mathrm{e}}
\newcommand{\op}{\mathrm{op}}
\newcommand{\n}{n}
\newcommand{\st}{N}
\title{E-detectors: a nonparametric framework for\\ sequential change detection}
\author[1]{Jaehyeok Shin\thanks{shinjaehyeok@google.com}}
\author[2]{Aaditya Ramdas\thanks{aramdas@cmu.edu}}
\author[2]{Alessandro Rinaldo\thanks{arinaldo@cmu.edu}}
\affil[1]{Google}
\affil[2]{Carnegie Mellon University }
\begin{document}
	\maketitle

 \begin{abstract}
  Sequential change detection is a classical problem with a variety of applications. However, the majority of prior work has been parametric, for example, focusing on exponential families. We develop a fundamentally new and general framework for  sequential change detection when the pre- and post-change distributions are nonparametrically specified (and thus  composite). Our procedures come with clean, nonasymptotic bounds on the average run length (frequency of false alarms). In certain nonparametric cases (like sub-Gaussian or sub-exponential), we also provide near-optimal bounds on the detection delay following a changepoint. The primary technical tool that we introduce is called an \emph{e-detector}, which is composed of sums of e-processes---a fundamental generalization of nonnegative supermartingales---that are started at consecutive times. We first introduce simple Shiryaev-Roberts and CUSUM-style e-detectors, and then show how to design their mixtures in order to achieve both statistical and computational efficiency. Our e-detector framework can be instantiated to recover classical likelihood-based procedures for parametric problems, as well as yielding the first change detection method for many nonparametric problems. As a running example, we tackle the problem of detecting changes in the mean of a bounded random variable without i.i.d.\ assumptions, with an application to tracking the performance of a basketball team over multiple seasons.
\end{abstract}

\section{Introduction}

Suppose we observe sequentially a stream of random variables $X_1,X_2, \dots$, whose marginal distributions may change at some unknown time, or {\it changepoint}, $\nu$. To take one concrete example that we generalize later, denote the data stream by $X_1, \dots, X_v \sim P_{\mu_0}$ and $X_{v+1}, X_{v+2}, \dots \sim P_{\mu}$ where $P_{\mu_0}$ and $P_{\mu}$ are some probability distributions with parameters $\mu_0, \mu \in \mathbb{R}$, respectively.  Let $\mathbb{P}_{\nu}$ and $\mathbb{E}_{\nu}$ denote probability and expectation, respectively, with respect to the distribution of the entire infinite data stream, when the change occurs at time $\nu$. If there is no change, we think of $\nu$ as being equal to $\infty$, and we let $\mathbb{P}_\infty$ and $\mathbb{E}_\infty$ refer to the corresponding probability and expectation.

 We are concerned with designing sequential changepoint detection procedures to determine, at each time, whether a changepoint has occurred in the near past that (i) provide non-asymptotic false alarm guarantees, (ii) allow for non-parametric classes of pre- and post-change distributions and (iii) are computationally efficient. Formally, a sequential changepoint detection algorithm consists of a data-dependent stopping rule $\st^* \geq 1$.  If not specified explicitly, the underlying filtration with respect to which $\st^*$ is defined is assumed  to be the natural filtration generated by the data stream $X_1, X_2, \cdots$, but in some cases it is beneficial to coarsen the filtration. If the stopping time $\st^*$ is finite, then we declare that a changepoint has been detected, in the sense that sufficient evidence has accumulated to support the hypothesis that the data generating distribution has changed.  If the algorithm never stops, we set $N^*=\infty$ and no changepoint is proclaimed. To evaluate the performances of detection algorithms, we can check how quickly the algorithm can detect a change in the distribution while controlling the frequency of false alarms. 

 To control false alarms, we adopt the average run length (ARL)  metric~\citep{page1954continuous}, defined by 
 \begin{equation}\label{eq:ARL}
 \text{ARL} := \mathbb{E}_\infty \st^*.
 \end{equation}
 We say that the ``ARL is controlled at level $\alpha$'' if
$\mathbb{E}_\infty\st^* \geq 1/\alpha$. 
An equivalent error metric is the  False Alarm Rate (FAR), which is the reciprocal of the ARL, and we would like to ensure that the FAR is at most $\alpha$, and we call a  sequential change detection procedure as ``valid'' if it satisfies the above constraint.

 A widely-used measure of the speed of detection after a changepoint is the worst average delays conditioned on the least favorable observations before the change \citep{lorden1971procedures}, or conditioned on the event that the algorithm stops after the changepoint \citep{pollak1975approximations}. These are defined by 
 \begin{align}
     \mathcal{J}_L(\st^*)&:= \sup_{\nu \geq 0} \esssup  \mathbb{E}_{\nu} \left[[\st^* - \nu]_+ \mid \mathcal{F}_\nu \right], \label{eq::worst_delay}\\
    \mathcal{J}_P(\st^*)&:= \sup_{\nu \geq 0} \mathbb{E}_{\nu} \left[\st^* - \nu \mid \st^* > \nu\right], \label{eq::ave_delay}
 \end{align}
 where the subscripts indicate the authors, 
  and it is known that $\mathcal{J}_P(\st^*) \leq \mathcal{J}_L(\st^*)$ \citep{moustakides2008sequential}.

 Our implicit objective is to (approximately) minimize $\mathcal{J}_L(\st^*)$ or $\mathcal{J}_P(\st^*)$ while guaranteeing that the ARL is controlled at a prespecified level $\alpha\in(0,1)$. We differ from other work in that our focus is on nonparametric and composite pre- and post-change distributions, as well as on deriving nonasymptotic guarantees. In several such settings, it is apriori unclear how to define \emph{any} valid sequential change detection algorithm, let alone an optimal one. Accordingly, we first address the design problem, and then move to  questions involving (approximate) optimality for detection delays.

 \subsection{Prior work and our contributions}
 
 If both pre- and post-change parameters $\mu_0, \mu$ are known and the distributions have densities $p_{\mu_0}$ and $p_\mu$ with respect to some common reference measure, then the CUSUM procedure \citep{page1954continuous} has been known to achieve the optimal worst average delay (exactly for $\mathcal{J}_L(\st^*)$ and asymptotically for $\mathcal{J}_P(\st^*)$ as $\alpha \to 0$) among all procedures controlling ARL at the same level \citep{lorden1971procedures, moustakides1986optimal, ritov1990decision, lai1998information}. Recall that the CUSUM procedure is defined by the stopping time
\(\st_{\CS}^* := \inf\left\{\n \geq 1: M^\CS_\n  \geq c^{\CS}_\alpha\right\},
\)
where $c^{\CS}_\alpha > 0$ is a constant chosen so that $\mathbb{E}_\infty(\st_{\CS}^*) = 1/\alpha$, and the test statistic $M^\CS_n$ is defined by the recursive formula
\begin{equation} \label{eq::CUSUM_formula}
    M^\CS_\n = \frac{p_{\mu}(X_\n)}{p_{\mu_0}(X_\n)} \cdot \max\left\{M^\CS_{\n-1}, 1\right\},~~M^\CS_0 := 0.
\end{equation}
The Shiryaev–Roberts (SR) procedure \citep{shiryaev1963optimum, roberts1966comparison} is defined by the stopping time
\( \st_{\SR}^* := \inf\left\{\n \geq 1: M^\SR_\n  \geq c^{\SR}_\alpha\right\},
\)
where $c_\alpha^\SR > 0$ is a constant chosen so that $\mathbb{E}_\infty(\st_{\SR}^*) = 1/\alpha$,  and the test statistic $M^\SR_n$ is obtained recursively as
\begin{equation} \label{eq::SR_formula}
    M^\SR_\n = \frac{p_{\mu}(X_\n)}{p_{\mu_0}(X_\n)} \cdot \left[M^\SR_{\n-1}+ 1\right],~~M^\SR_0 := 0.
\end{equation}
Unlike  CUSUM, the SR procedure  does not achieve exact minimax optimality for $\mathcal{J}_L(\st^*)$. However, the SR procedure and its generalized versions enjoy strong \emph{asymptotic} optimality guarantees \citep{pollak1985optimal, polunchenko2010optimality,tartakovsky2012third, tartakovsky2014sequential}.

The literature sometimes assumes that the pre-change distribution is known or can be approximated with a high precision by using the previous history. However, the post-change distribution is typically unknown and is assumed to belong to a family of distributions $\mathcal{P} := \left\{p_\mu : \mu \in \Theta\right\}$. In this case, one natural approach would be to replace the unknown parameter $\mu$ with an estimator $\hat{\mu}$. If we use the maximum likelihood estimator (MLE), then we obtain the CUSUM procedure based on the generalized likelihood ratio (GLR) rule \citep[e.g., see][]{barnard1959control,willsky1976generalized,siegmund1995using}.  For those not familiar with sequential change detection, \cite{lai2001sequential} provides a good overview.

\paragraph*{Limitations of prior work.} Though the usage of GLR statistic for the sequential change detection problem  has a long history and often yields good empirical performance, the current literature has two main limitations. 

First, most existing methodology relies on parametric assumptions on the family of distributions (eg: exponential families), both pre-change and post-change. There have been attempts to move away from this setting, and we will discuss these later. However, a general framework for deriving sequential change detection procedures in general nonparametric or composite settings has not been previously presented in the generality that we do here.

Second, the study of statistical properties has typically focused on the asymptotic regime of $\alpha \to 0$, unless the GLR statistic is defined on a well-separated post-change parameter space. In this paper, we guarantee nonasymptotic control on the ARL at a prespecified level (such as $\alpha=0.001$). In fact, in many settings considered, we do not know of any existing method to guarantee (even asymptotic) ARL control.

(We think that in theory and practice, the first is a bigger issue than the second. Luckily, our solution for the first automatically handles the second. Indeed, in the composite and nonparametric settings we consider, it is quite unclear how to control the ARL in any asymptotic sense.)

Finally, a direct online implementation of the GLR rule is infeasible since the memory and computation time both increase at least linearly as $\n \to \infty$. One natural approach to tackle this online implementation issue is to use window-limited versions of the GLR rule \citep{willsky1976generalized, lai1995sequential}. For instance, a simple form of the window-limited GLR rule can be defined by, at each time $n$, computing $\hat{\mu}$ over only times $n-W$ to $n$ for a properly chosen window size $W >0$. However, the optimal choice of window size $W$ has been studied only in the asymptotic setting ($\alpha \to 0$). For a fixed $\alpha$, the optimal window size depends on the difference between the pre- and post-change distributions, which is unknown.

\paragraph*{Our contributions.} We present a general framework for sequential change detection, focusing on (parametric and nonparametric) settings with composite pre- and post-change distributions, and nonasymptotic guarantees on the ARL:
\begin{enumerate}
    \item We introduce the concept of an \emph{e-detector} that underlies our construction of sequential change detection procedures.  The e-detector utilizes a generalization of the underlying martingale structure of likelihood ratios in classical sequential change detection procedures. This e-detector framework is applicable in nonparametric settings including sub-Gaussian, sub-exponential, and bounded random variables, among many others. In such settings, there is no common reference measure and likelihood ratios cannot be directly defined, thus composite nonnegative supermartingales, or more generally ``e-processes'', must be employed in their place. 
    \item Despite handling composite pre- and post-change distributions, even without an iid (independent and identically distributed) assumption on the data, our e-CUSUM and e-SR sequential change detection procedures based on e-detectors can always nonasymptotically control the ARL at level $\alpha$. 
    
    \item Nonasymptotic bounds on the worst average delay are derived in special cases for nonparametric distributions with exponential tail decay (such as sub-Gaussian or sub-exponential), and they match the rate of known lower bounds for exponential families as $\alpha \to 0$.
    \item Computationally feasible algorithms are presented, in order to run our procedures in an online fashion without windowing. Practical strategies to choose hyperparameters are discussed. These are based on an adaptive mixture method, with the number of mixture components growing slowly over time.
\end{enumerate}
 
Our procedures have natural gambling interpretations, and our work can be viewed as setting the foundations for a game-theoretic approach to sequential change detection. 
Before discussing the general framework in detail, in the following subsection, we present a motivating real-world example involving bounded random variables to illustrate how our nonparametric framework can be easily used in settings in which it is nontrivial to apply other common methods.

\subsection{Example: A changepoint in Cleveland Cavaliers 2011 - 2018} \label{subSec::intro_Cav_example}
The Cleveland Cavaliers are an American professional basketball team. We use the Cavaliers' game point records over 2010-11 to 2017-18 NBA seasons to illustrate how our proposed nonparametric sequential change detection algorithm can be applied to detect an interesting changepoint in the Cavaliers' recent history. \footnote{The R code to reproduce all the plots and simulation results of the paper is available at \url{https://github.com/shinjaehyeok/e_detector_paper}.}

    The left plot in \cref{fig::cavs_score} shows the difference between the scores of the Cavaliers and those of their opposing teams (also known as Plus-Minus) in all the games from the 2010-11 to the 2017-18 regular seasons.  Each red line refers to the yearly average difference score in each season. Roughly, this  value shows how well the Cavaliers performed against their opponent in terms of scoring. Typically, if a seasonal average is positive (or negative) then we may say that the Cavaliers showed a strong (or poor) performance in the corresponding season. The right plot shows a changepoint detected in early 2015; 
    NBA fans may recall one major cause of the sharp improvement --- LeBron James returned to the Cavaliers in 2014. However, how can we detect such a change on the fly by only tracking the Plus-Minus for each game?    
 
This type of question fits well into the sequential change detection framework. Let $X_1, X_2,\dots$ be the sequence of Plus-Minus stats we observe sequentially. After observing a poor performance of the Cavaliers in 2010-11 season, we define the Cavaliers' pre-change distribution on the Plus-Minus stats as follows: the average Plus-Minus of the team is less than or equal to $\mu_{0} := -1$. Now, we are interested in detecting a meaningful performance improvement on the fly by defining the post-change distribution as follows: the average Plus-Minus of the team is greater than $\mu_{1}:= 1$. Here, the gap $|\mu_1 - \mu_0|$ between averages of Plus-Minus in pre- and post-changes refers to the degree of improvement we consider as a significant one.

\begin{figure*}
    \begin{center}
    \includegraphics[width=0.5\textwidth]{./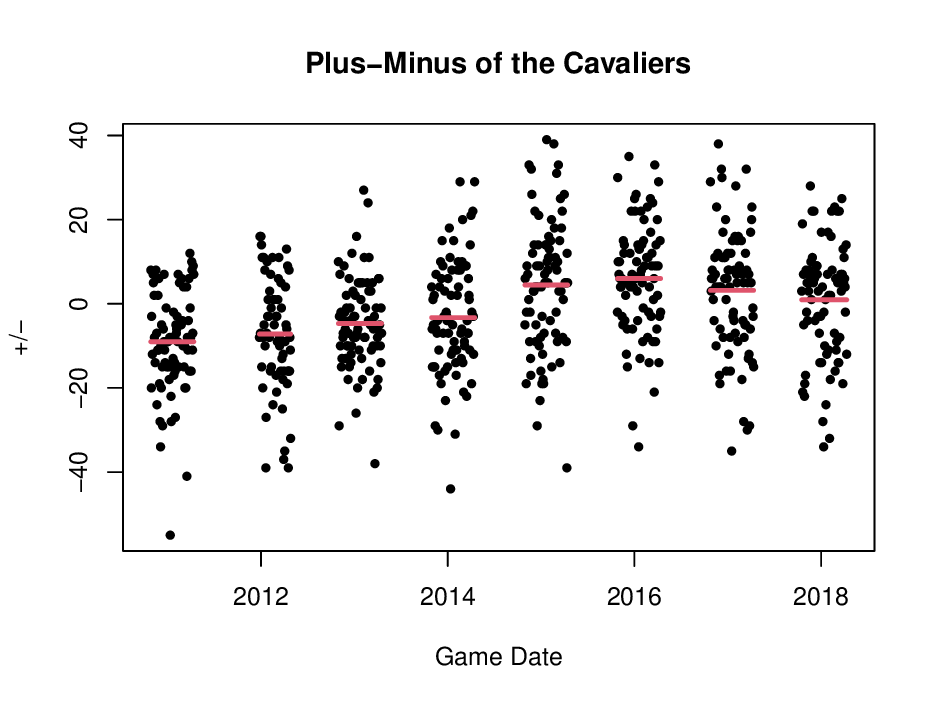}%
    \includegraphics[width=0.5\textwidth]{./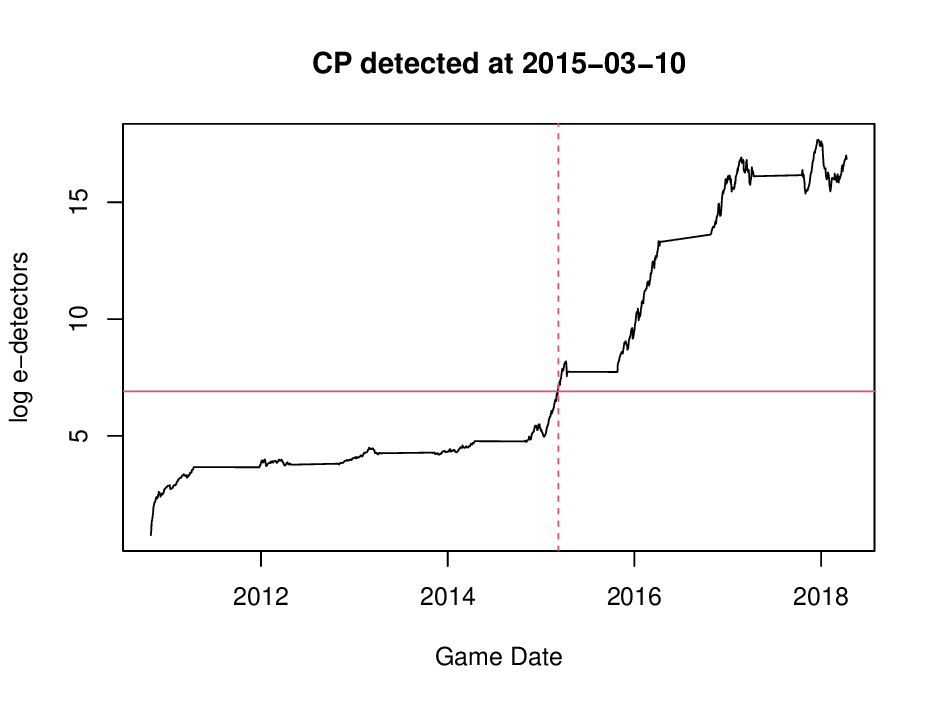}
    \end{center}
    \caption{\em Left: Plus-Minus of the Cavaliers from 2010-11 to 2017-18 seasons. Each horizontal red line corresponds to the seasonal average. Right: The sample path of (the logarithm of) one of our e-detectors. The horizontal red line is the threshold equal to $\log(1/\alpha)$ controlling the ARL by $1/\alpha = 1000$. In this example, the procedure detects the changepoint in the Plus-Minus of the Cavaliers at the end of 2014-15 season.}
    \label{fig::cavs_score}
\end{figure*}

Although the formulation of the problem is simple as described above, it is still nontrivial to fit this problem into commonly used sequential change detection procedures for the following reasons. First, it is not easy to choose a proper parametric model to fit observed Plus-Minus stats since they are integer-valued samples with varying mean and variance over seasons as \cref{fig::cavs_score} illustrates. Second, even if we can choose a proper model, it is difficult to find a threshold to detect the changepoint since we are interested in detecting any changes larger than $|\mu_1 - \mu_0|$ instead of a fixed post-change. Many common methods have been relying on a high-quality simulator or large enough sample history for pre-change observations to compute a valid threshold. For this example, however, it is hard to access such tools since the Cavaliers' overall Plus-Minus stats are difficult to model directly, and it is tricky to justify using existing records to get a valid threshold as the team's overall performance varies a lot around the 2010-11 season. 

Based on the introduced framework of the sequential change detection procedure using e-detectors, we detour difficulties in the commonly used methods illustrated above as follows. First, due to the nonparametric nature of the new framework, we do not need to choose any parametric model to fit the data. Instead, we simply assume the absolute value of each Plus-Minus stat is bounded by a large number --- in this example, we set $80$ as the boundary. Though we set a conservatively large boundary, our detection procedure is variance-adaptive so that we can detect the changepoint efficiently without specifying the variance of observations. Second, the nonasymptotic analysis of the new framework makes it possible to choose an explicit detection boundary, which is equal to $\log(1/\alpha)$, to build a sequential change detection procedure controlling the ARL by $1/\alpha$ for any given $\alpha \in (0,1)$. In this example, we choose $\alpha = 10^{-3}$ to make ARL is larger than at least 10 regular seasons of games. 
The right plot in \cref{fig::cavs_score} shows the log of e-detectors on which we build the sequential change detection procedure. The red horizontal line corresponds to the detection boundary given by $\log(1/\alpha)$. We can check the procedure detects the changepoint of the Plus-Minus stat of the Cavaliers in the middle of the 2014-15 season. See \cref{subSec::bounded_rv} for the detailed explanation about how we construct the sequential change detection procedure based on the general methodology we introduce in the paper.

\paragraph*{Paper outline.} The rest of the paper is organized as follows. In \cref{sec::general}, we introduce a general framework about how to build composite, nonparametric, and nonasymptotic sequential change detection procedures using e-detectors. \cref{sec::mixture_general} extends the previous framework to the case where we have a set of e-detectors and explain how to use a mixture method to combine multiple e-detectors effectively. In \cref{sec::mixture_simple_exponential}, we introduce an exponential structure of e-detectors that makes it possible to design a near-optimal detection procedure with an explicit upper bound on worst average delays. Based on the proposed framework, \cref{sec::examples} presents two canonical examples of Bernoulli (parametric) and bounded random variables (nonparametric) cases with real data applications of the Cavaliers 2011-2018 statistics. We conclude with a discussion, and defer proofs to the supplement.

\section{Nonparametric sequential change detection using e-detectors} \label{sec::general}

\subsection{Problem Setup}
    Let  $\mathcal{P}$ denote the set of possible pre-change distributions, which could in general be a nonparametric class.  We do not assume the observations in the sequence to be independent or identically distributed: each $P \in \mathcal P$ is a distribution over an infinite sequence of random variables.

    We will assume throughout that up to the unknown changepoint $\nu$, the observations $X_1, \dots, X_\nu$  follow a distribution $P \in \mathcal P$. The remaining  observations $X_{\nu+1}, X_{\nu+2}, \dots$ are drawn from a distribution $Q$ in a class of post-change distributions $\mathcal{Q}$. 
   In this case, we let $\mathbb{P}_{P,\nu,Q}, \mathbb{E}_{P,\nu,Q}$ and $\mathbb{V}_{P,\nu,Q}$ denote probability, expectation and variance operators over the entire data stream.

   If there never is a change, we will use the notation $\mathbb{P}_{P,\infty}, \mathbb{E}_{P,\infty}$ and $\mathbb{V}_{P, \infty}$. Also, if a change occurs at the beginning ($\nu =0)$ then we use $\mathbb{P}_{0, Q}, \mathbb{E}_{0, Q}$ and $\mathbb{V}_{0, Q}$. Note that technically $\mathbb{P}_{0, Q} = \mathbb{P}_{Q,\infty}$, but we use the former to denote that $Q$ is a post-change distribution and a changepoint has occurred at the very start, and the latter to denote that $Q$ is a pre-change distribution and a changepoint never occurs.
  
Let  $\mathcal{F}:= \{\mathcal{F}_n\}_{n \geq 0}$ be a filtration where we let $\mathcal{F}_0 := \{\emptyset, \Omega\}$ for simplicity. Let $M := \{M_n\}_{n \geq 0}$ be a nonnegative adapted process  with respect to the filtration $\mathcal{F}$. If required, we define $M_\infty := \limsup_{n \to \infty} M_n$ and $\mathcal F_\infty:=\sigma(\bigcup_{n \geq 0} \mathcal F_n)$ \citep[see, for example,][]{durrett2019probability}. It is common to consider the natural filtration $\mathcal{F}_n := \sigma(X_1,\dots, X_n)$, but there are situations where restricting the filtration could be advantageous (for example, when there are nuisance parameters). There are yet other situations when enlarging the filtration with external randomness can be useful.

Let $\mathcal{T}$ denote the set of all stopping times with respect to $\mathcal{F}$, but we will later see that it typically suffices to consider finite stopping times, or those with finite expectation. 

  \begin{remark} 
  In our paper, the changepoint $\nu$ and the post-change observations also do not need to be independent of the pre-change observations, and they do not need to be identically distributed. In other words, $\nu$ could be a stopping time, and $Q$ could itself depend on the pre-change data. It may be helpful to imagine an adversary who adaptively decides at each step whether or not to ``stop'' the pre-change data. If they choose to stop at time $\nu$, there are two options for post-change points: (a) then can pick a distribution $Q$, draw a sequence $Y_1,Y_2,\dots,$ from $Q$, and reveal $X_{\nu+i}=Y_i$ sequentially, or (b) the pick a distribution $Q$ and draw $X_{\nu+1},X_{\nu+2},\dots$ from $Q \mid \mathcal F_\nu$. In situations without a common reference measure, setting (b) may be tricky to formally define (especially if the pre-change data have probability zero under $Q$), so one may think of setting (a).
  All of our results on ARL do not require any assumptions on the changepoint $\nu$ or post-change distribution of the data. When analyzing detection delay, we will typically assume that $X_{\nu+1},X_{\nu+2},\dots$ are independent of the pre-change data, and are drawn from $Q$ as if the time $\nu$ was reset to zero; we believe this can be relaxed in future work.
  But for much of the paper, it is also okay to assume that the post-change data is drawn a data-dependent $Q$ (or drawn from $Q \mid \mathcal F_{\nu}$).
  \end{remark}
  
  With the appropriate definitions and setup in place, we can now define our central concept, an e-detector.

\subsection{What is an e-detector?}

  \begin{definition}[$\mathcal{P}$-$\e$-detector]
  The process $M$ is called an \emph{e-detector} with respect to the class of pre-change  distributions $\mathcal{P}$ if it satisfies the property
\begin{equation} \label{eq::e_detector_cond}
 \mathbb{E}_{P, \infty}\left[M_{\tau} \right]\leq \mathbb{E}_{P, \infty}\left[\tau\right] ,~~\forall \tau \in \mathcal{T}, ~~\forall P \in \mathcal{P}.
\end{equation}
  \end{definition}
  For brevity, we refer to $M$ as ``an e-detector for $\mathcal P$'', or if $\mathcal P$ is understood from context, then simply ``an e-detector''.
   If a stopping time $\tau$ has a nonzero probability of being infinite, then  inequality~\eqref{eq::e_detector_cond} is trivially satisfied. Thus, the condition is only really required to hold for stopping times with finite expectation under some $P$. This latter set of stopping times depends on $\mathcal P$, and so in order to not complicate notation, we continue to simply consider all stopping times $\mathcal T$. (Also note that the condition can only be satisfied if process $M$ is integrable under any $P \in \mathcal P$, so this is implicitly assumed to be the case in what follows.)

   By the linearity of expectation and Tonelli's theorem, an average (or ``a mixture'') of e-detectors is also an e-detector. More formally, if $\{M^a\}_{a \in A}$ is a set of e-detectors (where $a$ is a tuning parameter), then so is $\int M^a d\mu(a)$ for any fixed probability distribution $\mu$ over $A$. Later in this paper, we will in particular use finite mixtures of the form $(M^1 + M^2 + \dots + M^K)/K$ in order to adapt to the unknown post-change distribution. (In fact, we will develop more sophisticated mixtures whose number of components grows slowly with time.) For later reference, we state the above as a proposition:
  
  \begin{proposition} \label{prop::mixture_is_e_detector}
     Let $\{M^a\}_{a \in A}$ be a set of e-detectors. Then for any probability measure $\mu$ on $A$, the mixture of e-detectors, $\int M^a d\mu(a)$ forms a valid e-detector.
  \end{proposition}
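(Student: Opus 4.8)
The plan is to verify the defining inequality~\eqref{eq::e_detector_cond} directly for the mixture $M^\mu := \int_A M^a \, d\mu(a)$, reducing it to the same inequality for each component $M^a$ via an interchange of expectation and integration. First I would fix an arbitrary stopping time $\tau \in \mathcal{T}$ and an arbitrary pre-change distribution $P \in \mathcal{P}$. The goal is to show $\mathbb{E}_{P,\infty}[M^\mu_\tau] \leq \mathbb{E}_{P,\infty}[\tau]$. By definition $M^\mu_\tau = \int_A M^a_\tau \, d\mu(a)$ pointwise on the sample space (on the event $\{\tau < \infty\}$; and on $\{\tau = \infty\}$ the inequality is trivial as noted after the definition, so we may assume $\mathbb{E}_{P,\infty}[\tau] < \infty$ without loss of generality).

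Next I would write $\mathbb{E}_{P,\infty}[M^\mu_\tau] = \mathbb{E}_{P,\infty}\!\left[\int_A M^a_\tau \, d\mu(a)\right]$ and apply Tonelli's theorem to swap the order of integration: since each $M^a$ is a nonnegative process and $\mu$ is a (nonnegative) probability measure, the integrand $(a,\omega) \mapsto M^a_\tau(\omega)$ is nonnegative and jointly measurable, so Tonelli applies without any integrability precondition and gives $\mathbb{E}_{P,\infty}[M^\mu_\tau] = \int_A \mathbb{E}_{P,\infty}[M^a_\tau] \, d\mu(a)$. Now invoke the e-detector property of each $M^a$: $\mathbb{E}_{P,\infty}[M^a_\tau] \leq \mathbb{E}_{P,\infty}[\tau]$ for every $a \in A$. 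Integrating this bound over $a$ against the probability measure $\mu$ yields $\int_A \mathbb{E}_{P,\infty}[M^a_\tau] \, d\mu(a) \leq \int_A \mathbb{E}_{P,\infty}[\tau] \, d\mu(a) = \mathbb{E}_{P,\infty}[\tau]$, where the last equality uses $\mu(A) = 1$. Chaining the two displays gives $\mathbb{E}_{P,\infty}[M^\mu_\tau] \leq \mathbb{E}_{P,\infty}[\tau]$, which is exactly~\eqref{eq::e_detector_cond} for $M^\mu$. Since $\tau$ and $P$ were arbitrary, $M^\mu$ is a $\mathcal{P}$-e-detector.

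There is no serious obstacle here; the statement is essentially a bookkeeping consequence of linearity of the integral and nonnegativity. The only point requiring a modicum of care is the joint measurability of $(a,\omega) \mapsto M^a_\tau(\omega)$, which is needed to legitimately apply Tonelli's theorem; in practice one assumes (or it is clear from the construction, e.g. $a$ ranging over a nice parameter set with $M^a$ continuous in $a$) that the family is jointly measurable, and the paper's phrasing ("by the linearity of expectation and Tonelli's theorem") signals that this is being taken for granted. I would also remark in passing that $M^\mu$ is adapted and nonnegative since each $M^a$ is, and that it inherits $\mathcal{P}$-integrability from the displayed bound, so it qualifies as a process of the required type. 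That is all that is needed.
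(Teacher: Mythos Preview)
Your proposal is correct and follows exactly the approach the paper indicates: the paper does not give a formal proof but simply remarks that the claim follows ``by the linearity of expectation and Tonelli's theorem,'' which is precisely the argument you have written out in detail.
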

  
 An e-detector provides a quantification  of evidence for whether a changepoint has occurred or not, and may be continuously monitored, stopped and easily interpreted -- e.g, a steep and steady increase of the process in recent times should be taken as an indication that a change has taken place. 
   The following theorem shows how one can immediately obtain a sequential change detection procedure from an e-detector $M$. 
   
  \begin{theorem}\label{prop::ARL_control_e_detector}
  For any $\alpha \in (0,1)$ and e-detector $M$, if we declare a changepoint at the stopping time
  \begin{equation}\label{eq::basic_cp_st}
      N^* := \inf \left\{n \geq 1 : M_n \geq 1/\alpha\right\},
  \end{equation}
  then we have
  \begin{equation}
     \inf_{P \in \mathcal{P}}\mathbb{E}_{P,\infty} N^* \geq 1/\alpha,
  \end{equation}
 That is, the sequential change detection procedure in \eqref{eq::basic_cp_st} controls the ARL at level $\alpha$.
    \end{theorem} 
The informal proof is one line long: dropping subscripts, the definition of an e-detector implies that $\mathbb{E}N^* \geq \mathbb{E}M_{N^*}$, but by definition of $N^*$, we know that $M_{N*}\geq 1/\alpha$ if $N^*$ is finite. (If $N^*$ is not almost surely finite, the claim holds anyway.)
The full proof is   in~\cref{appen::proofs_for_general}. 

   
   \subsection{Constructing an e-detector based on a sequence of e-processes} \label{subSec::witness_based_on_e_value}
 
   The central building block of our e-detector is called an e-process. E-processes are newly-developed tools that have been shown to play a  fundamental role in sequential hypothesis testing, especially in composite, nonparametric settings. E-processes are generalizations of nonnegative martingales and supermartingales, and in particular, \emph{e-processes are nonparametric and composite generalizations of likelihood ratios}. They have strong game-theoretic roots, and have found utility in the meta-analysis, as well as for the purposes of anytime-valid inference in the presence of continuous monitoring \citep{ramdas2020admissible,ramdas2021testing,grunwald2019safe,howard2020time,howard2021time}. The properties of e-processes have not yet been explored in changepoint analysis, and we undertake this effort here.
   
   To understand their definition, we briefly forget about sequential change detection and consider testing the null hypothesis that $X_1, X_2, \dots \sim P$ for some $P \in \mathcal{P}$. An e-process for $\mathcal{P}$, called a $\mathcal P$-e-process, is a  sequence of nonnegative random variables $(E_t)_{t\geq 1}$ such that for any $P \in \mathcal{P}$ and any stopping time $\tau$, we have $\mathbb{E}_P[E_\tau] \leq 1$. As before, the underlying filtration can be that of the data or that of $(E_t)$, or some enlargement of these. The value of $E_t$ measures evidence against the null (larger values, more evidence). A level-$\alpha$ sequential test can be obtained by rejecting the null as soon as $E_t$ exceeds $1/\alpha$; this is a consequence of Ville's inequality~\citep{ville1939study,howard2020time}. 
   
   As a result of the optional stopping theorem, nonnegative $\mathcal{P}$-martingales (i.e., the process is a nonnegative $P$-martingales simultaneously for every $P \in \mathcal{P}$) and $\mathcal{P}$-supermartingales are examples of e-processes. However, e-processes are a distinct and more general class of processes. In fact, there exist natural classes $\mathcal{P}$ for which the only $\mathcal{P}$-martingales are constants, and the only $\mathcal{P}$-supermartingales are decreasing sequences, but there are e-processes for $\mathcal{P}$ that can increase to infinity when the data are not from $\mathcal{P}$. See \cite{ramdas2021testing} for one such example, arising from  sequentially testing exchangeability of a binary sequence, and \cite{gangrade2023sequential} for another example arising from testing log-concavity. 
   
   In this subsection, we show how to leverage e-processes to build e-detectors. 
\begin{definition}[$\e_j$-process]
      For any $j \geq 1$,  $\Lambda^{(j)}:= \{\Lambda_n^{(j)}\}_{n \geq 1}$ is called an $\e_j$-process for $\mathcal{P}$ if it is a nonnegative adapted process such that $\Lambda_{1}^{(j)}= \cdots =\Lambda_{j-1}^{(j)} = 1$, and
   \begin{equation} \label{eq::e_value_def}
      \sup_{P\in\mathcal{P}}\mathbb{E}_{P, \infty}\left[\Lambda_{\tau}^{(j)} \mid \mathcal{F}_{j-1}\right] \leq 1,~~\forall \tau \in \mathcal{T}.
  \end{equation}
\end{definition}

 When $j = 1$, the $\e_j$-process is simply a standard e-process, which tests whether the data distribution is different from the proclaimed null (pre-change) distribution. For $j > 1$, each $\e_j$-process can be viewed as an e-process that begins at time $j$, and tests whether the data occurring after time $j$ are well explained by the null hypothesis.

\begin{definition}[SR and CUSUM e-detectors]
   Based on a sequence of e-processes $\{\Lambda^{(j)}\}_{j \geq 1}$, define SR and CUSUM e-detectors $M^\SR$ and $M^\CS$, respectively by $M_0^\SR = M_0^\CS:= 0$ and for each $n \geq 1$,
   \begin{align} \label{eq:SER:CS}
       M_n^\SR &:= \sum_{j=1}^n \Lambda_{n}^{(j)}, \text{~ and ~}~
       M_n^\CS := \max_{j \in [n]} \Lambda_{n}^{(j)}.
   \end{align}
   \end{definition}
 It is not hard to check that the above processes are indeed e-detectors, meaning they satisfy~\eqref{eq::e_detector_cond}.

 \begin{remark}
    If each e-process starts with an initial weight smaller than 1 such that the sum of all initial weights is less than or equal to 1, then corresponding SR and CUSUM procedures control the probability of the false alarm, $\sup_{P \in \mathcal P} \mathbb{P}_{P,\infty}(\tau <\infty) \leq \alpha$, which is a much more stringent error metric than the ARL. The price to pay is that if there is a change at time $\nu$, then the detection delay will no longer be independent of $\nu$, and will typically increase logarithmically with $\nu$ (so that the worst average delay is unbounded).
\end{remark}

\subsection{Constructing computationally efficient e-detectors using baseline increments}

 In general, it may take $O(n)$ time to update the aforementioned e-detectors at time $n$. In order to construct e-detectors that can be updated online in sublinear time and memory (or even near-constant time and memory), it turns out to be computationally convenient to use a common ``baseline'' increment in order to build the underlying $\e_j$-processes, as we do below. Effectively, this amounts to using $\e_j$-processes that are $\mathcal P$-supermartingales, which is a special case of particular interest.

  \begin{definition}[Baseline increment] \label{def::baseline_increment}
     A nonnegative, adapted process $L:=\{L_n\}_{n \geq 1}$ is called a baseline increment if for each $n \geq 1$, we have
  \begin{equation} \label{eq::baseline_increment}
      \sup_{P\in\mathcal{P}}\mathbb{E}_{P, \infty}\left[L_n \mid \mathcal{F}_{n-1}\right] \leq 1.
  \end{equation}
  \end{definition}
It is easy to check that if $L^1$ and $L^2$ are baseline increments, and $A^1$ and $A^2$ are nonnegative and predictable processes (meaning that $A^1_n$ and $A^2_n$ are both $\mathcal F_{n-1}$-measurable) such that $A^1 + A^2$ is strictly positive, then the mixture $(A^1 L^1 + A^2 L^2)/(A^1 + A^2)$ also forms a baseline increment. In short, ``predictable mixtures'' retain the baseline increment property.

Comparing \eqref{eq::baseline_increment} to \eqref{eq::e_value_def}, we see that a baseline increment $L$ is not itself an $\e_j$-process, because the expectation in \eqref{eq::baseline_increment} applies only at fixed times $n$, with the conditioning being on the previous step $n-1$, but \eqref{eq::e_value_def} calculates expectations at any stopping time, and conditions on $j-1$. It is best to think about the baseline increment as the multiplicative increment that forms the $\e_j$-process, as follows.

\begin{definition}[Baseline $\e_j$-process] For a given baseline increment $L:= \{L_n\}_{n \geq 1}$, we define the corresponding ``baseline $\e_j$-process'' $\Lambda^{(j)}$,  for each $j, n \in \mathbb{N}$, as below:
\begin{equation} \label{eq::baseline_e_value}
    \Lambda_n^{(j)} := \begin{cases} 
                1 &\mbox{if } n < j  \\
                \prod_{i = j}^{n}L_i & \mbox{otherwise},
            \end{cases}
\end{equation}
\end{definition}
 Under any pre-change distribution $P \in \mathcal{P}$, each $\Lambda^{(j)}$ is a nonnegative supermartingale by definition of the baseline increment $L_i$. Therefore, a straightforward application of the optional stopping theorem implies that  each  $\Lambda^{(j)}$ satisfies condition~\eqref{eq::e_value_def}, and thus is a valid $\e_j$-process. 

As an example of a baseline $\e_j$-process, consider the case where we have iid observations $X_1, X_2, \dots$ from a distribution $p_\theta$ parameterized by $\theta \in \Theta$, and the pre-change distribution is given by $\theta_0$. Then, for any post-change distribution $p_{\theta_1}$ with $\theta_1 \neq \theta_0$, the likelihood ratio between two distributions, $L_n := p_{\theta_1}(X_n) / p_{\theta_0}(X_n)$ yields a baseline increment process with the inequality in \eqref{eq::baseline_increment} being replaced by the equality. Then, each $\Lambda_n^{(j)}$ is the likelihood ratio based on $X_j, \dots, X_n$. Further, instead of using a fixed post-change parameter $\theta_1$, we can also plug-in a running MLE or any other online nonanticipating estimator that is based on the previous history $\mathcal{F}_{n-1}$ only,  say $\hat{\theta}_{n-1}$, into the likelihood ratio. In this case, although the value   $L_n := p_{\hat{\theta}_{n-1}}(X_n) / p_{\theta_0}(X_n)$ at time $n$ of the resulting process may depend on the previous history $\mathcal{F}_{n-1}$, the inequality in \eqref{eq::baseline_e_value} will be satisfied as an equality, yelling a valid baseline $\e_j$-processes.

 \begin{remark}\label{rem:baseline}
While baseline increment processes provide a natural and computationally convenient way to construct $\e_j$-process, we emphasize that any e-detector, even one that does not use baseline increments, will automatically control the ARL by  \Cref{prop::ARL_control_e_detector}.  To elaborate, baseline $\e_j$-processes are composite $\mathcal P$-supermartingales (meaning $P$-supermartingales for every $P\in\mathcal P$), but there exist other $\mathcal P$-e-processes---which are not $\mathcal P$-supermartingales---that naturally arise and these can be used to form e-detectors; for example using universal inference~\cite[Section 8]{wasserman2020universal}. 
\end{remark}

\begin{definition}[Baseline SR and CUSUM e-detectors]
   When an SR or CUSUM e-detector is constructed using a sequence of baseline $\e_j$-processes~\eqref{eq::baseline_e_value}, we call it a ``baseline SR or CUSUM e-detector''. 
\end{definition}
   Each baseline SR or CUSUM e-detector can be computed recursively like their classical analogs:
\begin{align}
       M_n^\SR &= L_n \cdot  \left[M_{n-1}^\SR + 1 \right],\\
      M_n^\CS &=  L_n \cdot \max \left\{M_{n-1}^\CS , 1 \right\},
   \end{align}
with $M_0^\SR = M_0^\CS  = 0$ for each $n \in \mathbb{N}$. 
The above computational benefit is the primary reason to consider baseline e-detectors, but as mentioned in Remark~\ref{rem:baseline} and when introducing e-processes, more general e-detectors are sometimes necessary for certain classes $\mathcal P$.

  We briefly verify below that the processes $M^\SR$ and $M^\CS$ defined above are valid e-detectors. Indeed, for any stopping time $\tau$ and pre-change distribution $P \in \mathcal{P}$, if $\mathbb{P}_{P,\infty}(\tau = \infty) > 0$ then the condition of the e-detector in \eqref{eq::e_detector_cond} holds trivially. If not, then $\tau$ is finite almost surely, and we have by linearity of expectation and the tower rule:
   \begin{align*}
       \mathbb{E}_{P,\infty} M_{\tau}^\CS \leq \mathbb{E}_{P,\infty} M_{\tau}^\SR &= \mathbb{E}_{P,\infty} \sum_{j=1}^\infty \Lambda_{\tau}^{(j)}\mathbbm{1}(j \leq \tau) 
       = \sum_{j=1}^\infty \mathbb{E}_{P,\infty} \Lambda_{\tau}^{(j)}\mathbbm{1}(j \leq \tau) \\
      &=\sum_{j=1}^\infty \mathbb{E}_{P,\infty} \left[\mathbbm{1}(j \leq \tau)\mathbb{E}_{P,\infty}\left[\Lambda_{\tau}^{(j)} \mid \mathcal{F}_{j-1}\right]\right] \leq\sum_{j=1}^\infty \mathbb{E}_{P,\infty} \mathbbm{1}(j \leq \tau) 
       = \mathbb{E}_{P,\infty} \tau,
   \end{align*}

   where the first inequality comes from the nonnegativity of e-processes, and the second inequality comes from the definition of the e-process $\Lambda^{(j)}$ for each $j \geq 1$. Note that this proof is also applicable to general SR and CUSUM e-detectors.

   \subsection{Sequential change detection procedures by thresholding e-detectors} \label{subSec::e-CP_procedure}
   
      The value of any e-detector process, like $M^\SR$  or  $M^\CS$, is directly interpretable without specifying an explicit threshold: a larger value signals an accumulation of evidence of a changepoint. These can be monitored and adaptively stopped. Nevertheless, to explicitly control the ARL at level $\alpha$,
   we define SR and CUSUM-style  change detection procedures, called ``e-SR'' and ''e-CUSUM'' procedures as follows.

   \begin{definition}[e-SR and e-CUSUM procedures]
         Given SR and CUSUM e-detectors $M^\SR$  and  $M^\CS$ , define e-SR and e-CUSUM procedures by the stopping times
    \begin{align}
        N_\SR^* &:= \inf \left\{ n \geq 1 : M_n^\SR  \geq 1/\alpha \right\}, \\
        N_\CS^* &:= \inf \left\{ n \geq 1 : M_n^\CS \geq  c_\alpha\right\}, \label{eq::CUSUM_e_procedure_st}
    \end{align}
    where $c_\alpha$ is a constant chosen to control the ARL of the e-CUSUM procedure by $1/\alpha$ for some $\alpha \in (0,1)$. By \cref{prop::ARL_control_e_detector}, $1/\alpha$ is a valid choice for $c_\alpha$. 
   \end{definition}
   
    We note $c_\alpha=1/\alpha$ may be a very conservative choice for the e-CUSUM procedure. Indeed, suppose we use the trivial e-processes, that is, we set $\Lambda_{n}^{(j)}:=1$ for all $j, n$. In this setting, the SR e-detector is given by $M_n^\SR = n$ for each $n$. In contrast, the CUSUM e-detector, $M_n^\CS$ is equal to 1 for all $n$. Therefore, any valid threshold we can choose for the e-SR procedure must be larger than $\lfloor 1/\alpha \rfloor$. On the other hand, any threshold above $1$ makes $N_\CS^* = \infty$, which of course controls ARL by $1/\alpha$, but the true ARL is much above the target. Building from this trivial example, it is possible to construct nontrivial examples in which letting $\alpha \to 0$ makes the gap between tight thresholds of e-SR and e-CUSUM procedures arbitrarily different. 
     
     \begin{remark} \label{rmk::remark_on_cutoff}
     Unless we assume the pre-change distribution is time-stationary, known and parametric, or we can access a good sample of the pre-change distribution or large enough historical data, computing a tight or even valid threshold $c_\alpha$ can be a challenging task. In the application sections below, we will mainly deal with non-stationary pre-change distributions where pre-change observations may not be identically distributed and thus all observations before the changepoint may follow different distributions. In this case, setting $c_\alpha = 1/\alpha$ seems to be the only reasonable choice, and we recommend using the e-SR procedure rather than e-CUSUM since if we use the same threshold for both procedures, the former always detects the changepoint faster than the latter while provably controlling the ARL at the same level. 
     \end{remark}

    \subsection{Some nontrivial instantiations of e-detectors}

    E-detectors can be thought of as a general reduction of change detection to sequential testing. Given the recent advances in nonparametric, composite sequential testing using nonnegative supermartingales and e-processes, our e-detectors now make new classes of change detection problems possible.
    We detail below some interesting nontrivial examples of change detection problems that can now be solved using e-detectors. 
     
    \paragraph*{Example 1: when likelihood ratios are well-defined.} Consider first the parametric case, when $\mathcal {P,Q}$ have a common reference measure and likelihood ratios are well-defined. When the pre- and post-change distributions are known (meaning $\mathcal {P,Q}$ are singletons), then the standard likelihood-ratio based CUSUM and SR processes from \eqref{eq::CUSUM_formula} and~\eqref{eq::SR_formula} are both e-detectors.  If $\mathcal Q$ is composite, then taking a mixture likelihood ratio yields e-detectors (using either a non-anticipating predictable mixture or using a fixed mixture distribution). If $\mathcal P$ is also composite, but maximum likelihood estimation is efficient over $\mathcal P$, then e-processes can be constructed that take the ratio of mixture likelihoods over the alternative to maximum likelihood under the null, as done in universal inference~\citep[Section 8]{wasserman2020universal} or in~\cite{tartakovsky2014nearly}. If the ``reverse information projection'' (RIPr) is computable (analytically or numerically), then one can use the method of~\cite{grunwald2019safe}, though there are some subtleties: by default the method produces e-values over blocks of observations, which can be multiplied across independent blocks to produce a nonnegative supermartingale (and thus e-process) to be used within our framework. But sometimes, the sequence of RIPr's over increasing sample sizes (nested blocks) automatically produces an e-process, and when this happens, it is more powerful than universal inference (see~\cite{grunwald2019safe,ramdas2022game} for details).
    
    \paragraph*{Example 2: change in distribution.} 
    In this example, $\mathcal P$ is the set of all iid\ product distributions over infinite sequences (or its convex closure, the set of all exchangeable distributions), so $\mathcal P = \{\mu^\infty $ for some probability distribution $\mu\}$.
    The conformal sequential change detection procedures by~\cite{vovk2005algorithmic,vovk2021testing} are designed to test deviations from  exchangeability, meaning that they develop a test martingale (and thus an e-process) for $\mathcal P$, meaning that their procedure fits neatly into our framework.
    Importantly, their filtration is restricted, and is smaller than the natural filtration of the data. This allows nonparametric martingales to exist, but the e-detector property only holds with respect to a smaller class of stopping times. Nevertheless, thresholding our e-detector at $1/\alpha$ still controls the ARL at level $\alpha$, as the latter property is independent of the filtration used to construct the e-detector. As a side note, if one wanted to construct an e-detector that was valid at \emph{all} stopping times with respect to the natural filtration of the data, e-detectors based on martingales provably do not suffice, but e-detectors based on e-processes can be constructed using the techniques from~\cite{ramdas2021testing} (at least for categorical distributions).

    \paragraph*{Example 3: nonparametric two-sample testing.} Suppose we have two streams of (general multivariate) data: $X_1,X_2,\dots \sim P_X$ and $Y_1,Y_2,\dots \sim P_Y$. For simplicity below, assume that at time $t$, we observe one point from each stream ($X_t,Y_t$). Before the changepoint (if one exists), the distributions of $X_t$ and $Y_t$ are equal, meaning that $\mathcal P=\{(P_X,P_Y)^\infty: P_X = P_Y\}$. This is a very nonparametric class, since it specifies nothing about the distributions except for the fact that they are equal before the changepoint. After the changepoint, the streams have different distributions (maybe the distribution of $X$ changes, or that of $Y$ changes, or both), thus $\mathcal Q = \{(P_X,P_Y)^\infty: P_X \neq P_Y \}$. For this very general nonparametric two-sample testing setup, \cite{shekhar2021game} construct test martingales for $\mathcal P$ that are provably consistent against $\mathcal Q$ under minimal assumptions (in particular not requiring any minimum separation between the different distributions after the changepoint, since the tests automatically adapt to the closeness of the unknown alternative). These test martingales fit seamlessly into our e-detector framework, yielding new and practicable e-detectors for detecting a change from homogeneity to non-homogeneity between the streams.

    \paragraph*{Example 4: nonparametric independence testing.} In this problem setting, we observe a pair of random variables $(X_t,Y_t) \sim P_{XY}$ at each step $t$, where $X_t,Y_t$ can each lie in a general space. Before the changepoint (if one exists), the data are independent, meaning that $\mathcal P = \{P_{XY}^\infty: P_{XY} = P_X \times P_Y\}$. Beyond saying that the joint distribution factorizes into the product of marginals, there is no further structure assumed, making this a rich nonparametric composite class. As before, after the changepoint, $(X,Y)$ become dependent, meaning that  $\mathcal Q = \{P_{XY}^\infty: P_{XY} \neq P_X \times P_Y\}$. For this general nonparametric independence testing problem, \cite{podkopaev2022sequential} construct test martingales for $\mathcal P$ that are provably consistent against $\mathcal Q$ under minimal, weak assumptions (as before, not requiring any separation). Again as before, (in the testing problem) the power of these tests automatically adapts to the difficulty of the unknown alternative.  When plugged into our framework, it delivers a novel e-detector for a change from independence to dependence.

\smallskip
    We briefly remark that in the two preceding examples (homogeneity and independence), we can move past the iid\ assumption. The same methods work even when the distribution is allowed to drift within $\mathcal P$ before the changepoint, and drift within $\mathcal Q$ after the changepoint. We refer to the original aforementioned papers for details.

    \paragraph*{Example 5: log-concavity.} Here, the data before the changepoint comes from a log-concave distribution (in a general dimension $d \geq 1$), so $\mathcal P =\{\mu^\infty: \mu \text{ has a log-concave Lebesgue density}\}$. This is a rich, nonparametric, shape-constrained class.  The post-change class of distributions $\mathcal Q$ consists of, for example, any distribution that has a nonzero KL-divergence and Hellinger distance from every distribution in $\mathcal P$. For testing $\mathcal P$ against $\mathcal Q$, \cite{gangrade2023sequential} show that there exists no nontrivial nonnegative supermartingales, but they design a powerful (and computationally efficient) e-process using universal inference. When plugged into our e-detector, this yields a nontrivial procedure that can detect a deviation from log-concavity.

    \paragraph*{Example 6: symmetry.} Suppose $\mathcal P=\{\mu^\infty: \mu \text{ is symmetric around 0}\}$ consists of the set of all distributions (in a general dimension $d \geq 1$) that are symmetric around the origin, while $\mathcal Q$ consists its complement (that is, distributions which are not symmetric around the origin). \cite{ramdas2020admissible} characterize all processes that are nonnegative martingales for $\mathcal P$. When used with our e-detector, these provide a clean way to detect a change from symmetry to asymmetry.

    \paragraph*{Example 7: change in mean.}  Suppose $\mathcal P^C = \{\mu^\infty:  \mathbb E_{X \sim \mu}[X] \leq 0, X \text{ satisfies }C\}$ consists of the set of all univariate distributions with mean less than or equal to zero satisfying come constraint $C$, while $\mathcal Q^C = \{\mu^\infty:  \mathbb E_{X \sim \mu}[X] > 0, X \text{ satisfies }C\}$ consists of those with positive mean. \cite{howard2020time} provides a large variety of nonnegative supermartingales under various conditions $C$, such as when $X$ are subGaussian, or bounded from above, or bounded from below, or have only two or three moments; see also~\cite{wang2022catoni} for heavy-tailed supermartingales. These can be plugged into the e-detector to yield new nonparametric schemes for changes in mean.

    \paragraph*{Example 8: Huber-robust change detection.} As a final example, suppose we wish to detect a change in mean of heavy-tailed data (as above). But now, suppose that an adversary can also arbitrarily corrupt an $\epsilon$ fraction of the data. \cite{wang2023huber} develop Huber-robust supermartingales for this setting, which can be plugged into e-detectors to yield a valid e-detector in the presence of adversarial corruptions.

 \bigskip 
   
   Note that in Examples 3, 4, 5 and 6, if $\mathcal P$ and $\mathcal Q$ were swapped, the testing problem is much harder, and we are not aware of any powerful test or change detection method.
However, if one was simply interested in detecting a change \emph{in homogeneity} or \emph{in dependence}, i.e.\ from some distribution in $\mathcal P \bigcup \mathcal Q$ to some other one, there are two possible change detection methods that come to mind. First, an e-detector based on the conformal change detection methods in Example 2 would detect any change from any distribution to any other, though choosing the conformity score may be tricky. As a second and more direct option, one can choose a measure of homogeneity or dependence (like the kernel maximum mean discrepancy or energy distance, or the Hilbert Schmidt independence criterion or distance covariance), construct a confidence sequence for that measure  (see~\cite{manole2023martingale} for several specific, tight, constructions), and plug it into the recent change detection scheme of~\cite{shekhar2023sequential}.

    Finally, it is worth noting that it is possible to define e-detectors in cases where there is no common reference measure amongst the pre-change and post-change distributions, and thus no easily-defined likelihood ratio process, and also when there is no nontrivial martingale that can be constructed. This is precisely the utility of e-processes, which are nonparametric and composite generalizations of likelihood ratios. We develop one more interesting nonparametric example (not described above) in the simulations section: detecting change in mean of a bounded random variable.

    \subsection{Warm-up: bounds on worst average delays for baseline e-detectors, when $Q$ is known} \label{subSec::delay_bound}
    
    Recall that our objective is to minimize worst average delays  $\mathcal{J}_L(\st^*)$ or $\mathcal{J}_P(\st^*)$, given above in \eqref{eq::worst_delay} and \eqref{eq::ave_delay}  respectively,  while controlling the ARL $\mathbb{E}_{P,\infty}(\st^*) \geq 1/\alpha$.  Note that worst average delays in \eqref{eq::worst_delay} and \eqref{eq::ave_delay} were defined for a fixed pair of pre- and post-change distributions implicitly. In our setting where the pre-change distribution space could be composite, we take an additional supremum over all pre-change distributions when defining both worst average delays for each fixed post-change distribution.
    
    To derive bounds on the worst average delays, we further assume that the post-change observations $X_{\nu+1}, X_{\nu+2}, \dots$ are independent of the pre-change observations and form a strongly stationary process. That is, we assume that, for any finite subset $I \subset \mathbb{N}$ and any $j \in \mathbb{N}$, the joint distributions of $\{X_{\nu + i}\}_{i \in I}$ and $\{X_{\nu + i + j}\}_{i \in I}$ are equal to each other. About the underlying baseline increment, we further assume that there exist a function $f$ and an integer $m \geq 0$ such that $L_n = f(X_n, X_{n-1}, \dots, X_{n-m})$ for each  $n$. In this warm-up section, assume that we know the post-change distribution $Q$. Then, as we shall soon see, an optimal choice for $L_n$ would set $m=0$, but if $m$ is a strictly positive number then we implicitly assume that we can access  $m$ observations $X_{0}, X_{-1}, \dots, X_{1-m}$ from the pre-change distribution in order to build sequential change detection procedures.   Under these conditions, the following theorem provides analytically more tractable upper bounds on worst average delays for e-SR and e-CUSUM procedures. 
    
    \begin{proposition} \label{prop::upperbound_simple_form}
    For a given $\alpha \in (0,1)$, let $N_\SR^*$ and $N_\CS^*$ be e-SR and e-CUSUM procedures using baseline e-detectors. Under the settings described above, their worst average delays are upper bounded as 
     \begin{align}
       \mathcal{J}_P(N_\SR^*) \leq  \mathcal{J}_L(N_\SR^*) &\leq \mathbb{E}_{0,Q} N_{1/\alpha} + m, \label{eq::delay_SR}   \\
            \mathcal{J}_P(N_\CS^*) \leq  \mathcal{J}_L(N_\CS^*) &\leq \mathbb{E}_{0,Q} N_{c_\alpha} + m, \label{eq::delay_CS} 
    \end{align}
    respectively, where $N_c$ is the stopping time defined for any $c>1$ as
     \( N_c := \inf \left\{n\geq 1 : \sum_{i = 1}^n  \log L_i \geq \log(c) \right\},
     \)
    and $c_\alpha \leq 1/\alpha$ is any threshold that ensures the e-CUSUM procedure~\eqref{eq::CUSUM_e_procedure_st} has an ARL no smaller than $1/\alpha$. Furthermore, if the post-change observations are iid and each $L_n$ is a function of $X_n$ only (i.e. $m = 0$) with $\mathbb{E}_{0,Q}\log L_1 > 0$, then
    \begin{align}
        \mathbb{E}_{0,Q} N_c &\leq \frac{\log(c)}{\mathbb{E}_{0,Q}\log L_1} + \frac{\mathbb{V}_{0,Q} \log L_1}{\left[\mathbb{E}_{0,Q}\log L_1\right]^2} + 1. \label{eq::simple_upper_SR} 
    \end{align}
\end{proposition}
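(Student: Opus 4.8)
The two leftmost inequalities in \eqref{eq::delay_SR} and \eqref{eq::delay_CS} come for free: they are the general fact $\mathcal{J}_P(\st^*) \le \mathcal{J}_L(\st^*)$ recalled in the introduction, so the plan is to bound $\mathcal{J}_L$ for each procedure and then to establish \eqref{eq::simple_upper_SR} as a stand-alone first-passage estimate. For the $\mathcal{J}_L$ bounds, fix a changepoint location $\nu\ge0$ and a post-change law $Q$. The idea I would use is that both $M_n^\SR=\sum_{j=1}^n\Lambda_n^{(j)}$ (a sum of nonnegative terms) and $M_n^\CS=\max_{j\in[n]}\Lambda_n^{(j)}$ (a maximum) dominate the single baseline $\e_j$-process indexed by $j=\nu+m+1$: for every $n\ge\nu+m+1$ one has $M_n^\SR\ge\Lambda_n^{(\nu+m+1)}=\prod_{i=\nu+m+1}^{n}L_i$, and likewise $M_n^\CS\ge\prod_{i=\nu+m+1}^{n}L_i$. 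Hence $N_\SR^*\le\inf\{n\ge\nu+m+1 : \sum_{i=\nu+m+1}^{n}\log L_i\ge\log(1/\alpha)\}$; writing the right-hand side as $\nu+m+K$ with $K:=\inf\{k\ge1 : \sum_{\ell=1}^{k}\log L_{\nu+m+\ell}\ge\log(1/\alpha)\}$, this yields the pointwise bound $[N_\SR^*-\nu]_+\le m+K$.

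The next step is a distributional identification. Since $L_i=f(X_i,\dots,X_{i-m})$, every increment $\log L_{\nu+m+\ell}$ with $\ell\ge1$ is a function of the post-change observations $X_{\nu+1},X_{\nu+2},\dots$ alone, so $K$ is a fixed measurable functional of $(X_{\nu+j})_{j\ge1}$ that does not depend on $\nu$ and is independent of $\mathcal{F}_\nu$. Invoking the reset-to-zero convention together with strong stationarity of the post-change stream, the conditional law of $(X_{\nu+j})_{j\ge1}$ given $\mathcal{F}_\nu$ equals the law of $(X_j)_{j\ge1}$ under $\mathbb{P}_{0,Q}$, and applying the same functional there---strong stationarity absorbing the shift by $m$ needed to pass from $\sum_{\ell}\log L_{m+\ell}$ to $\sum_{i}\log L_i$---reproduces exactly the law of $N_{1/\alpha}$ under $\mathbb{P}_{0,Q}$. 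Thus the inner $\esssup\,\mathbb{E}_\nu[\,\cdot\mid\mathcal{F}_\nu]$ collapses to the deterministic number $m+\mathbb{E}_{0,Q}N_{1/\alpha}$; taking $\sup_{\nu}$ (the resulting bound not depending on the pre-change law) proves \eqref{eq::delay_SR}, and the verbatim argument with the e-CUSUM threshold $c_\alpha$ in place of $1/\alpha$ proves \eqref{eq::delay_CS}.

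For \eqref{eq::simple_upper_SR}, set $m=0$, so the increments $Y_i:=\log L_i$ are i.i.d.\ under $\mathbb{P}_{0,Q}$ with mean $\mu:=\mathbb{E}_{0,Q}\log L_1>0$, and $N_c=\inf\{n\ge1 : S_n\ge\log c\}$ with $S_n:=\sum_{i=1}^{n}Y_i$, $S_0=0<\log c$, is the first passage of a positive-drift random walk. The plan is Wald's identity plus an overshoot bound. Applying Wald to the truncated time $N_c\wedge k$ gives $\mu\,\mathbb{E}_{0,Q}[N_c\wedge k]=\mathbb{E}_{0,Q}[S_{N_c\wedge k}]$; on $\{N_c\le k\}$ one has $S_{N_c}=S_{N_c-1}+Y_{N_c}<\log c+(Y_{N_c})_+$ since $S_{N_c-1}<\log c$, and on $\{N_c>k\}$ one has $S_k<\log c$, whence $\mathbb{E}_{0,Q}[S_{N_c\wedge k}]\le\log c+\mathbb{E}_{0,Q}[(Y_{N_c})_+\,\mathbbm{1}(N_c\le k)]$. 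Bounding the last term by Lorden's inequality on the expected overshoot of a positive-drift walk, which gives at most $\mathbb{E}_{0,Q}[Y_1^2]/\mu$ uniformly in $k$, and letting $k\to\infty$, one obtains $\mu\,\mathbb{E}_{0,Q}N_c\le\log c+\mathbb{E}_{0,Q}[Y_1^2]/\mu$; dividing by $\mu$ and using $\mathbb{E}_{0,Q}[Y_1^2]=\mathbb{V}_{0,Q}\log L_1+\mu^2$ gives precisely \eqref{eq::simple_upper_SR}. The same truncation step simultaneously yields $\mathbb{E}_{0,Q}N_c<\infty$, which is what legitimizes the use of Wald's identity.

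I expect the delicate part to be the distributional identification in the second paragraph: one must check carefully that, once the lag $m$ has elapsed, the increments driving $K$ depend only on post-change data, are independent of $\mathcal{F}_\nu$, and---via the reset-to-zero interpretation and strong stationarity---carry exactly the law defining $N_c$ under $\mathbb{P}_{0,Q}$, so that the $\esssup$ is deterministic; getting the $m$-shift and the conditioning right is where the care lies. The remaining piece is a standard renewal computation whose only non-elementary input is the Lorden-type overshoot bound, which also does double duty in establishing the integrability of $N_c$.
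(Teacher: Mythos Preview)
Your proposal is correct and follows essentially the same route as the paper: lower-bound the e-detector by the single baseline $\e_j$-process started at $j=\nu+m+1$, use independence from $\mathcal F_\nu$ together with strong stationarity of the post-change stream to identify the resulting first-passage law with that of $N_c$ under $\mathbb P_{0,Q}$, and then combine Wald's identity with Lorden's overshoot inequality to obtain \eqref{eq::simple_upper_SR}. The only cosmetic difference is that the paper applies Lorden directly to the overshoot $R_g=S_{N_c}-\log c$ (which is what Lorden bounds) rather than passing through the intermediate quantity $(Y_{N_c})_+$; your detour is harmless since the final inequality is the same, but strictly speaking Lorden controls $\mathbb E[R_g]$, not $\mathbb E[(Y_{N_c})_+]$.
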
 

    The proof can be found in \cref{appen::proofs_for_general}. 

    \begin{remark}
    The stopping time $N_{1/\alpha}$ delivers a level-$\alpha$ sequential test for the null hypothesis  $H_0 : P \in \mathcal{P}$. On the other hand, the  stopping time $N_{c_\alpha}$ may not necessarily control the type-1 error by $\alpha$ since the threshold $c_\alpha$ can be significantly smaller than $1/\alpha$ as discussed earlier.
    \end{remark}

The expected stopping time in the upper bounds \eqref{eq::delay_SR} and \eqref{eq::delay_CS} depends on the parameter $m\geq0$. When $Q$ is known, $m=0$ suffices because \eqref{eq::simple_upper_SR} suggests that $L_i$ should simply be chosen to maximize $\mathbb{E}_{0,Q}\log L_1$, which is identical to the log-optimality criterion used for testing $\mathcal P$ against $Q$ (as discussed in many recent works, like~\cite{shafer2019game,grunwald2019safe,WaudbySmith2020EstimatingMO}). 

In applications when $Q$ is unknown, the underlying baseline increment may require a long enough sample history (large $m$) in order to achieve a reasonably small expected stopping time by ``learning'' $Q$ or using an empirical distribution plug-in for $Q$. Then, the above results suggest that a reasonable way to choose the window size $m$ is to pick the one minimizing the upper bound on the worst average delays. However, since the optimal choice of the window size should depend on the unknown post-change $Q$, it remains difficult to minimize the upper bound directly. In our simulations, we often encounter cases where a larger window size is better. In this case, we would choose a window size as large as possible while keeping the procedure computationally tractable. However, the right way to handle unknown $Q$ is dealt with in detail next.

\section{Combining baseline e-detectors using the method of mixtures} \label{sec::mixture_general}

In the previous section, we discussed how one can construct a valid e-detector and derive upper bounds on worst average delays. However, in most composite and nonparametric sequential change detection scenarios, there is no single optimal e-detector but instead there are often several applicable e-detectors to choose from.
In this section, we introduce a practicable and computationally efficient strategy to construct a good e-detector for minimizing the upper bound on worst average delays in \cref{prop::upperbound_simple_form}, especially for the upper bound~\eqref{eq::simple_upper_SR} in the $m =0$ case.

In detail, suppose we have a set of baseline increments $\{L^\lambda\}_{\lambda \in  \Pi}$ parametrized by $\lambda \in  \Pi$. Then, under the additional condition assumed in the second part of \cref{prop::upperbound_simple_form} (namely, that $m = 0$ and the post-change observations are from an iid sequence), an ideal choice of the parameter $\lambda^\op$ for a post-change distribution $Q$ is given by
\begin{equation}
    \lambda^\op(Q) = \argmax_{\lambda \in \Pi} \mathbb{E}_{0,Q} \log L_1^{(\lambda)},
\end{equation}
 minimizing the first term of the upper bound  in \eqref{eq::simple_upper_SR}, which often becomes a leading term especially for small enough $\alpha$. In turn, this term is inversely proportional to
\begin{equation}
    \mathbb{E}_{0,Q} \log L_1^{(\lambda^\op)} := D(Q||\mathcal{P}),
\end{equation}
where the second argument $\mathcal{P}$ in $D(Q||\mathcal{P})$ explicitly refers to the dependency of the baseline increment $L^{\lambda^\op}$ to the class of pre-change distributions $\mathcal{P}$. For the rest of the paper, we will assume that the set of baseline increments, $\{L^\lambda\}_{\lambda \in  \Pi}$ is rich enough such that $D(Q||\mathcal{P}) > 0$ for all $Q \in \mathcal{Q}$. As we observe later, in many canonical cases, we have $D(Q||\mathcal{P}) = \inf_{P \in \mathcal{P}} \KL(Q||P)$ where $\KL(Q||P)$ is the Kullback-Leibler (KL) divergence from $Q$ to $P$. 

Generally, computing $\lambda^\op$ is not feasible since it depends on the unknown post-change distribution $Q$. 
Next, we show how to build a mixture of baseline e-detectors that can detect the changepoint nearly as quickly as the one with $\lambda^\op$, when known lower and upper bounds $\lambda_L$ and $\lambda_U$ on $\lambda^\op$ are available. 

Notice that an average of e-detectors is also a valid e-detector, in the sense of satisfying  condition~\eqref{eq::e_detector_cond}.
Therefore, for any mixing distribution $W$ supported on $[\lambda_L, \lambda_U]$, we can define mixtures of e-SR and e-CUSUM procedures by following stopping times:
\begin{align}
    N^*_\MSR &:= \inf\left\{n \geq 1:  \int \sum_{j=1}^n  \prod_{i = j}^n L_i^{(\lambda)} \mathrm{d}W(\lambda) \geq 1/\alpha \right\}, \\
    N^*_\MCS &:= \inf\left\{n \geq 1: \int \max_{j \in [n]} \prod_{i = j}^n L_i^{(\lambda)}\mathrm{d}W(\lambda) \geq c_{\alpha} \right\},
\end{align}
where $c_{\alpha} > 1$ is a fixed constant which controls the ARL for some $\alpha \in (0,1)$. Since the mixture of e-CUSUM procedure is based on a valid e-detector, we can always set the threshold $c_{\alpha}$ to be equal to $1/\alpha$ as same as the threshold of the mixture of e-SR procedures. 

\begin{remark}
Instead of using mixtures, one may be tempted to consider swapping the above integral with a supremum over $\lambda \in [\lambda_L,\lambda_U]$. However, this does not in general yield a valid e-detector.
\end{remark}


\subsection{Computational and analytical aspects of mixtures of baseline e-detectors} \label{subSec::mixture_comp_validity}
    
    Though any mixing distribution yields a valid e-detector, for computational efficiency, we only consider discrete mixtures where the support of mixing distribution has at most countably many elements. To be specific, let $\{\omega_k\}_{k \geq 1}$ be a set of nonnegative mixing weights with $\sum_{k\geq 1}\omega_k = 1$ and let $\{\lambda_k\}_{k\geq 1}$ be the corresponding supporting set. For ease of notation, we denote $L^{(\lambda_k)} := L(k)$ for each $k \geq 1$. Based on the set of nonnegative mixing weights and the corresponding set of baseline increments,  we define mixtures of SR and CUSUM e-detectors as $M_0^\MSR  = M_0^\MCS := 1$, and for each $n \in \mathbb{N}$,
    \begin{align}
         M_n^\MSR &:= \sum_{k=1}^\infty \omega_k \sum_{j=1}^n  \prod_{i = j}^n L_i(k) := \sum_{k=1}^\infty \omega_k M_n^\SR (k) , \\
         M_n^\MCS & =\sum_{k=1}^\infty  \omega_k \max_{j \in [n]}  \prod_{i = j}^n L_i(k) := \sum_{k=1}^\infty \omega_k M_n^\CS(k).
  \end{align} 
Let $K := |\left\{k: \omega_k > 0\right\}|$ be the number of nonzero mixing weights. 
        
         \paragraph*{Finite mixtures.}
         If $K < \infty$, we may for simplicity assume that the first $K$ weights $\omega_1, \dots, \omega_K$ are the only nonzero values. In this case, we can compute mixtures of SR and CUSUM e-detectors by 
   \begin{align}
       M_n^\MSR &= \sum_{k=1}^K \omega_k M_n^\SR(k), \text{~ and ~}
      M_n^\MCS = \sum_{k=1}^K \omega_k M_n^\CS(k),
   \end{align}
     where $M_n^\SR(k)$ and $M_n^\CS(k)$ are computed recursively as
   \begin{align}
       M_n^\SR(k) &= L_n(k) \cdot  \left[M_{n-1}^\SR(k) + 1 \right],\\
      M_n^\CS(k) &=  L_n(k) \cdot \max \left\{M_{n-1}^\CS(k) , 1 \right\}
   \end{align}
    with $M_0^\SR(k) = M_0^\CS(k)  = 0$ for each $k \in [K]$ and $n \in \mathbb{N}$. Therefore, if each computation of $L_n(k)$ has constant time and space complexities, then the evaluation of mixtures of SR and CUSUM e-detectors at each time $n$ requires $O(K)$ time and space complexity.
    
    \paragraph*{Infinite mixtures, scheduling functions and adaptive re-weighting.} 
    If  $K = \infty$ or if $K$ is to be chosen adaptively as an increasing function of $n$ we modify our strategy as follows. We first choose an increasing function $K: \mathbb{N} \to \mathbb{N}$, and  let $K^{-1}:\mathbb{N}\to\mathbb{N}$ be the generalized inverse function of $K$ defined by $K^{-1}(k):= \inf\left\{j \geq 1: K(j) \geq k \right\}$ for each $k \in \mathbb{N}$. Note that $K^{-1}$ is also an increasing function.  We call such function $K$ as a \emph{scheduling function}. We intentionally overload notation: in what follows, $K(n)$ plays the same role as the constant $K$ in the case of finite support.
    Note that $K^{-1}(k) \leq n$ for any $k \leq K(n)$; we will use this simple fact below when defining nested summations.

    Based on a scheduling function $K$ and its generalized inverse $K^{-1}$, we define \emph{adaptive SR and CUSUM e-detectors}, $M_n^\ASR$ and $M_n^\ACS$, respectively, as 
    \begin{align}
        M_n^\ASR &=\sum_{k=1}^{ K(n)}  \omega_k \sum_{j= K^{-1}(k) }^n \gamma_j \prod_{i = j}^n L_i(k) := \sum_{k=1}^{ K(n)} \omega_k M_n^\SR(k), \label{eq::asr}\\ 
        M_n^\ACS &=\sum_{k=1}^{ K(n)}\omega_k  \max_{ K^{-1}(k)  \leq j \leq n} \gamma_j\prod_{i = j}^n L_i(k) := \sum_{k=1}^{ K(n)} \omega_k M_n^\CS(k), \label{eq::acusum}
    \end{align}
     where each $\gamma_j := 1 / \sum_{k=1}^{K(j)}\omega_k \geq 1$ is the \emph{adaptively re-weighting factor} at time $j$,  ensuring that the mixing weights always sum to one at each time. 
    Here, we restrict not only the space over the index $k$ from $[1, \infty]$ to $[1, K(n)]$ but also the space over the index $j$ from $[1, n]$ to $\left[K^{-1}(k), n\right]$. This choice makes it possible to compute  both $M_n^\ASR$ and $M_n^\ACS$ efficiently since each $M_n^\SR(k)$ and $M_n^\CS(k)$ have following recursive representations
    \begin{align}
         M_n^\SR(k) &= L_n(k) \cdot \left[M_{n-1}^\SR(k) + \gamma_n \right], \\
         M_n^\CS(k) &= L_n(k) \cdot \max\left\{M_{n-1}^\CS(k), \gamma_n \right\},
    \end{align}
    for each $ n \geq K^{-1}(k)$ and $M_n^\ASR(k) = M_n^\ACS(k) = 0$ for all $n = 0, 1, \dots, K^{-1}(k) -1 $. Therefore, if each computation of $L_n(k)$ has constant time and space complexities then the computations of adaptive SR and CUSUM e-detectors at each time $n$ have $O(K(n))$ time and space complexities as well. For the purpose of implementing an online algorithm, we are typically interested in the case $K(n) = O(\log(n))$. 
    \begin{remark}
    Both mixtures of SR and CUSUM e-detectors can be viewed as special cases of their adaptive counterparts where the scheduling function $K$ is understood as a constant function. In this case, we have $\gamma_j = \sum_{k=1}^K \omega_k = 1$ for each $j$, and thus $M_n^\ASR  = M_n^\MSR$ and $M_n^\ACS  = M_n^\MCS$ for each $n \in \mathbb{N}$.  
    \end{remark}
    
    Unlike finite mixtures, the mixing distribution deployed in the adaptive SR and CUSUM e-detectors vary over time. Hence, we cannot simply apply \cref{prop::mixture_is_e_detector} to check whether this adaptive scheme yields valid e-detectors. The following proposition formally states the validity of adaptive SR and CUSUM e-detectors. The proof can be found in \cref{appen::proofs_for_mixture_general}. 
    
    \begin{proposition} \label{prop::adaptive_schme_yields_valid_e_detectors}
      For any mixing weights $\{\omega_k\}_{k \in \mathbb{N}}$ and a scheduling function $K$, adaptive SR and CUSUM e-detectors defined in~\eqref{eq::asr}~and~\eqref{eq::acusum} form valid e-detectors satisfying the condition~\eqref{eq::e_detector_cond}.
    \end{proposition}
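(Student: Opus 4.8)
The plan is to reduce the validity of the adaptive SR and CUSUM e-detectors to the definition of an $\e_j$-process by exhibiting, for each $k$, a suitable collection of processes that can be fed into the SR/CUSUM recursions. Write $M_n^{\ASR} = \sum_{k=1}^{K(n)} \omega_k M_n^{\SR}(k)$. The first step is to observe that, since $K$ is increasing and $K^{-1}(k)\le n$ whenever $k\le K(n)$, the value $M_n^{\ASR}$ can equivalently be written as a double sum over all $k\ge 1$ and all $j\ge 1$ with the understanding that the summand vanishes unless $k\le K(n)$ and $K^{-1}(k)\le j\le n$. Concretely, for fixed $k$, define $\Lambda_n^{(j,k)} := \prod_{i=j}^n L_i(k)$ for $n\ge j$ (and $1$ otherwise); by the baseline-increment property~\eqref{eq::baseline_increment} and the optional stopping theorem, each $\Lambda^{(j,k)}$ is a valid $\e_j$-process for $\mathcal P$, exactly as in the argument following~\eqref{eq::baseline_e_value}. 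The adaptive re-weighting factor $\gamma_j$ is $\mathcal F_{j-1}$-measurable (indeed deterministic), so it is a legitimate predictable multiplier.

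The second step is the key computation: for an arbitrary stopping time $\tau\in\mathcal T$ with $\mathbb E_{P,\infty}\tau<\infty$ (the case $\mathbb P_{P,\infty}(\tau=\infty)>0$ being trivial), expand
\begin{align*}
\mathbb E_{P,\infty} M_\tau^{\ASR}
&= \mathbb E_{P,\infty} \sum_{k\ge 1}\omega_k \sum_{j\ge 1} \gamma_j\, \Lambda_\tau^{(j,k)} \,\mathbbm 1\!\left(K^{-1}(k)\le j\le \tau\right)\mathbbm 1\!\left(k\le K(\tau)\right)\\
&= \sum_{k\ge 1}\omega_k \sum_{j\ge K^{-1}(k)} \mathbb E_{P,\infty}\!\left[\gamma_j\,\mathbbm 1(j\le\tau)\, \mathbb E_{P,\infty}\!\left[\Lambda_\tau^{(j,k)}\mid\mathcal F_{j-1}\right]\right],
\end{align*}
where we used Tonelli to exchange the (nonnegative) sum and expectation, the tower rule conditioning on $\mathcal F_{j-1}$, and the fact that the events $\{j\le\tau\}$, $\{k\le K(\tau)\}$ together with $\gamma_j$ are $\mathcal F_{j-1}$-measurable — here one must be slightly careful that $\{k\le K(\tau)\}\cap\{j\le\tau\}$ equals $\{j\le\tau\}$ on the range $j\ge K^{-1}(k)$, since $j\le\tau$ and $K$ increasing give $K(\tau)\ge K(j)\ge k$. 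Applying the $\e_j$-process bound $\mathbb E_{P,\infty}[\Lambda_\tau^{(j,k)}\mid\mathcal F_{j-1}]\le 1$ leaves $\sum_{k\ge1}\omega_k\sum_{j\ge K^{-1}(k)}\gamma_j\,\mathbb E_{P,\infty}\mathbbm 1(j\le\tau)$.

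The third step is to swap the order of summation and collapse using the definition $\gamma_j = 1/\sum_{k=1}^{K(j)}\omega_k$. For fixed $j$, the inner sum over $k$ runs over exactly those $k$ with $K^{-1}(k)\le j$, i.e. $k\le K(j)$, contributing $\gamma_j\sum_{k=1}^{K(j)}\omega_k = 1$. Hence the whole expression telescopes to $\sum_{j\ge1}\mathbb E_{P,\infty}\mathbbm 1(j\le\tau) = \mathbb E_{P,\infty}\tau$, which is precisely~\eqref{eq::e_detector_cond}. The CUSUM case follows verbatim after bounding $M_\tau^{\ACS}\le M_\tau^{\ASR}$ pointwise — because a maximum of nonnegative terms is at most their sum, term by term in $k$ — so no separate argument is needed. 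The main obstacle, and the only place demanding care, is the bookkeeping in the second step: making sure the truncation indices $K^{-1}(k)\le j\le\tau$ and $k\le K(\tau)$ interact correctly so that the conditioning on $\mathcal F_{j-1}$ is legitimate and the leftover indicator is exactly $\mathbbm 1(j\le\tau)$; once that is pinned down, the $\gamma_j$-telescoping is purely mechanical.
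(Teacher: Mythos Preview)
Your proposal is correct and follows essentially the same approach as the paper's proof: expand $M_\tau^{\ASR}$ as a double sum of $\e_j$-processes, apply Tonelli and the tower rule with the $\e_j$-bound $\mathbb{E}_{P,\infty}[\Lambda_\tau^{(j,k)}\mid\mathcal F_{j-1}]\le 1$, swap the order of summation, and collapse via $\gamma_j\sum_{k=1}^{K(j)}\omega_k=1$; the CUSUM case is handled identically by the pointwise inequality $M_\tau^{\ACS}\le M_\tau^{\ASR}$. Your treatment is arguably a bit more explicit than the paper's in justifying why the indicator $\mathbbm 1(k\le K(\tau))$ can be absorbed into $\mathbbm 1(j\le\tau)$ on the range $j\ge K^{-1}(k)$, which the paper handles by simply swapping the summation order at the outset.
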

    
    Now, based on $M_n^\ACS$ and $M_n^\ASR$, the adaptive e-SR and e-CUSUM procedures are defined by the stopping times:
    \begin{align}
        N_\ASR^* &:= \inf \left\{ n \geq 1 : M_n^\ASR \geq 1/\alpha\right\}, \\
        N_\ACS^* &:= \inf \left\{ n \geq 1 : M_n^\ACS \geq c_\alpha\right\}, \label{eq::ACS_stop}
    \end{align}
    where $\alpha \in (0, 1)$ is a fixed constant and $c_\alpha $ is a positive value controlling ARL of the adaptive e-CUSUM procedure by $1/\alpha$. 
    Similar to the usual e-CUSUM procedure case we discussed before, 
    we can always set $c_\alpha = 1/\alpha$. In this case, from the fact $N_\ASR^* \leq N_\ACS^*$, which is implied by  $ M_n^\ASR \geq M_n^\ACS$, we have
    \begin{equation} \label{eq::ARL_control_adap}
        \mathbb{E}_{P,\infty}N_\ACS^* \geq \mathbb{E}_{P,\infty}N_\ASR^*  \geq 1/\alpha,
    \end{equation}
    where the last inequality comes from \cref{prop::ARL_control_e_detector} with the fact that $M^\ASR$ is a valid e-detector. However, the threshold $c_\alpha$ for the adaptive e-CUSUM procedure can be chosen to be a significantly smaller value if we have enough knowledge about the pre-change distribution, as discussed in  \cref{subSec::e-CP_procedure}.  
    
    \subsection{Worst average delay analysis for adaptive mixtures of e-detectors} \label{subSec::delay_bound_adaptive}
    We now derive general upper bounds on worst average delays of the adaptive e-SR and e-CUSUM procedures. As we did before, we further assume that post-change observations $X_{\nu+1}, X_{\nu+2}, \dots$ are independent of the pre-change observations and form a strong stationary process. Also, we further assume that there exist a function $f_k$ and an integer $m \geq 0$ such that $L_n(k) = f_k(X_n, X_{n-1}, \dots, X_{n-m})$ for each $k$ and $n$. Again, if $m$ is a strictly positive number then we implicitly assume that there exist $m$ observations $X_{0}, X_{-1}, \dots, X_{1-m}$ from the pre-change distribution we can use to build sequential change detection procedures.  Under this additional condition for worst average delay analysis, the following theorem provides analytically more tractable upper bounds on worst average delays for $N_\ASR^*$ and $N_\ACS^*$.
    \begin{theorem} \label{thm::upperbound_general_form}
      Under additional conditions described above, worst average delays for $N_\ASR^*$ and $N_\ACS^*$ can be upper bounded as follows:
             \begin{align}
           \mathcal{J}_P(N_\ASR^*) \leq  \mathcal{J}_L(N_\ASR^*) &\leq \min_{j > m}\left[\mathbb{E}_{0,Q} N_{1/\alpha}(j) + j -1 \right], \label{eq::delay_ASR}   \\
            \mathcal{J}_P(N_\ACS^*) \leq  \mathcal{J}_L(N_\ACS^*) &\leq \min_{j >m}\left[\mathbb{E}_{0,Q} N_{c_\alpha}(j) + j-1\right], \label{eq::delay_ACS} 
    \end{align}
    where, for $j \in \mathbb{N}$ and $c>0$, $N_c(j)$ is the stopping time 
    \begin{align}
      N_c(j) &:= \inf \left\{n\geq 1 : \sum_{k=1}^{ K(j)}  \omega_k\prod_{i = 1}^n L_i(k) \geq c \right\}.
    \end{align}
   Here, $c_\alpha$ is the same threshold used to build the adaptive e-CUSUM procedure in \eqref{eq::ACS_stop}. Note that for mixtures of the SR and CUSUM e-detectors where the scheduling function $K$ is a constant function, the stopping times in the upper bounds do not depend on the index $j$, and thus the upper bounds can be reduced as
     \begin{align}
           \mathcal{J}_P(N_\MSR^*) \leq  \mathcal{J}_L(N_\MSR^*) &\leq\mathbb{E}_{0,Q} N_{1/\alpha} + m , \label{eq::delay_MSR_simple}   \\
            \mathcal{J}_P(N_\MCS^*) \leq  \mathcal{J}_L(N_\MCS^*) &\leq \mathbb{E}_{0,Q} N_{c_\alpha}+ m, \label{eq::delay_MCS_simple} 
    \end{align}
    where, for $c>0$,  $N_c$ is the stopping time 
    \begin{equation}\label{eq::N_sr_sub_psi}
        N_c := \inf \left\{n\geq 1 : \sum_{k=1}^{K} \omega_k\prod_{i = 1}^n L_i(k) \geq c \right\}.
    \end{equation}
    \end{theorem}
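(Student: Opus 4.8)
The plan is to derive the two displayed chains by a single argument that differs only in the threshold ($1/\alpha$ for e-SR, $c_\alpha$ for e-CUSUM) and in which adaptive e-detector gets lower bounded. The left inequality $\mathcal{J}_P(\cdot)\le\mathcal{J}_L(\cdot)$ in each line is the general fact of \citet{moustakides2008sequential} recalled in the introduction, so I only need the bounds on $\mathcal{J}_L$. Fix a pre-change law $P\in\mathcal{P}$, a changepoint $\nu\ge 0$, and an integer $j>m$; I will work conditionally on $\mathcal{F}_\nu$ and show $\mathbb{E}_{P,\nu,Q}\left[[N_\ASR^*-\nu]_+\mid\mathcal{F}_\nu\right]\le (j-1)+\mathbb{E}_{0,Q}N_{1/\alpha}(j)$ almost surely, together with the analogue with $c_\alpha$ for $N_\ACS^*$. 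Since these upper bounds depend on neither $P$, nor $\nu$, nor the sample point, taking $\esssup$, then $\sup_{\nu}$, then $\min_{j>m}$ will yield \eqref{eq::delay_ASR} and \eqref{eq::delay_ACS}.

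The crux is a one-sided lower bound on the adaptive e-detectors by a single ``fresh'' component started at time $\nu+j$. For $n\ge\nu+j$ I keep, in \eqref{eq::asr}--\eqref{eq::acusum}, only the start index $j'=\nu+j$ and only the first $K(j)$ mixture components; this is legitimate because $K$ is nondecreasing, so $K(j)\le K(\nu+j)\le K(n)$, and because $K^{-1}(k)\le j\le\nu+j\le n$ for every $k\le K(j)$, while $\gamma_{\nu+j}\ge 1$ and all summands are nonnegative. This gives
\begin{equation*}
M_n^\ASR \;\ge\; M_n^\ACS \;\ge\; \sum_{k=1}^{K(j)}\omega_k\prod_{i=\nu+j}^{n}L_i(k),\qquad n\ge\nu+j.
\end{equation*}
Writing $S_t := \sum_{k=1}^{K(j)}\omega_k\prod_{i=\nu+j}^{\nu+j-1+t}L_i(k)$ for $t\ge 1$ and $\widetilde{N}:=\inf\{t\ge 1:S_t\ge 1/\alpha\}$ (resp.\ $S_t\ge c_\alpha$), the inequality $M_{\nu+j-1+t}^\ASR\ge S_t$ forces $N_\ASR^*\le\nu+j-1+\widetilde{N}$ on $\{\widetilde{N}<\infty\}$, hence $[N_\ASR^*-\nu]_+\le (j-1)+\widetilde{N}$ always (the right side being $+\infty$ when $\widetilde{N}=\infty$).

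It then remains to identify the conditional law of $\widetilde{N}$ given $\mathcal{F}_\nu$. Because $L_i(k)=f_k(X_i,\dots,X_{i-m})$ and $j>m$, every factor entering $S_t$ is of the form $f_k(X_r,\dots,X_{r-m})$ with $r-m\ge\nu+j-m\ge\nu+1$, so $\widetilde{N}$ is a measurable function of the post-change block $X_{\nu+1},X_{\nu+2},\dots$ alone. By the standing assumption for delay analysis, that block is independent of $\mathcal{F}_\nu$ and is strongly stationary, distributed ``as if $\nu$ were reset to zero''; strong stationarity lets me realign the shifted block with the process defining $N_c(j)$ under $\mathbb{P}_{0,Q}$, so the conditional law of $\widetilde{N}$ given $\mathcal{F}_\nu$ is exactly the law of $N_{1/\alpha}(j)$ (resp.\ $N_{c_\alpha}(j)$) under $\mathbb{P}_{0,Q}$, whence $\mathbb{E}_{P,\nu,Q}[\widetilde{N}\mid\mathcal{F}_\nu]=\mathbb{E}_{0,Q}N_{1/\alpha}(j)$. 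Substituting into the bound of the previous paragraph and optimizing as described gives \eqref{eq::delay_ASR}--\eqref{eq::delay_ACS}. Finally, when the scheduling function $K$ is constant, $N_c(j)$ is independent of $j$ and coincides with the $N_c$ of \eqref{eq::N_sr_sub_psi}, so $\min_{j>m}\left[\mathbb{E}_{0,Q}N_c+j-1\right]=\mathbb{E}_{0,Q}N_c+m$, which is \eqref{eq::delay_MSR_simple}--\eqref{eq::delay_MCS_simple}.

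I expect the only delicate point to be the distributional identification in the last paragraph: one must verify that the offset $j>m$ genuinely prevents any increment used before the effective stopping time from depending on pre-change data, and that it is \emph{strong} stationarity---not merely equality of marginals---that licenses re-aligning the shifted post-change block with the reset-at-zero process; in particular, when $m>0$ the observations $X_{1-m},\dots,X_0$ implicitly feeding $N_c(j)$ under $\mathbb{P}_{0,Q}$ must be read as the backward continuation of that same stationary $Q$-block. Everything else---monotonicity of $K$ and $K^{-1}$, nonnegativity of the e-process increments, and $\gamma_j\ge 1$---is the same routine bookkeeping already used for \cref{prop::upperbound_simple_form}, of which this theorem is the mixture-valued generalization.
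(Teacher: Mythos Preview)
Your proof is correct and follows essentially the same approach as the paper: lower-bound the adaptive e-detector by a single ``fresh'' mixture component started at $\nu+j$ with only the first $K(j)$ baselines (using $\gamma\ge1$, nonnegativity, and monotonicity of $K$), then invoke independence of the post-change block from $\mathcal{F}_\nu$ and strong stationarity to identify the conditional law of the resulting stopping time with that of $N_c(j)$ under $\mathbb{P}_{0,Q}$. The only cosmetic difference is that the paper performs the change of variables $j'=j-\nu$ and the truncation from $K(j'+\nu)$ to $K(j')$ as separate steps, whereas you fold both into the initial lower bound; your explicit caveat about the backward continuation of the stationary $Q$-block when $m>0$ is a point the paper leaves implicit.
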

   The proof of upper bounds on worst average delays can be found in \cref{appen::proofs_for_mixture_general}.

Unlike the baseline e-detector case, however, due to mixing weights, it is nontrivial to get further simplified upper bounds on worst average delays as we did in  \cref{subSec::delay_bound}. Next, we present specific adaptive e-SR and e-CUSUM procedures based on exponential baseline increments where we can compute both procedures efficiently and derive upper bounds on worst average delays in explicit forms.
 
\section{Exponential baseline e-detectors and their mixtures} \label{sec::mixture_simple_exponential}

Building upon recent advances in time uniform concentration inequalities and  sequential testing developed in \cite{howard2020time} and \citep{shin2021nonparametric}, below we consider an exponential structure on baseline e-detectors. We show that, in this setting,  it is possible to approximate the ``oracle'' e-SR and e-CUSUM procedures based on the knowledge of the optimal (but unknown) $\lambda^\op$ by adaptive procedures built using a mixture of carefully chosen set of baseline increments $\{L^{\lambda_k}\}_{k \geq 1}$ with mixing weights $\{\omega_k\}_{k\geq 1}$. 

To be specific, assume there exists an extended real-valued convex function $\psi$ on $\mathbb{R}$ that is finite and strictly convex on a  set $\Pi \subset \mathbb{R}$  containing $0$ in its interior $\Pi^{\mathrm{o}}$. Furthermore, assume $\psi$ is continuously differentiable on  $\Pi^{\mathrm{o}}$ with  $\nabla\psi(0)=0 = \psi(0)$. Then define the ``exponential baseline increment'' as follows.

\begin{definition}[Exponential baseline increment]\label{def:exp-base-incr}
For each  $n \in \mathbb{N}$  and  $\lambda \in \Pi$, define 
\begin{equation} \label{eq::exponential_base}
    L_n^\lambda = \exp\left\{\lambda s(X_n) - \psi(\lambda) v(X_n)\right\},
\end{equation}
where $s$ is a real-valued function and $v$ is a positive function on the sample space. $L^\lambda:= \{L^\lambda\}_{n\geq1}$ is called an exponential baseline increment if it satisfies condition~\eqref{eq::baseline_increment} in \cref{def::baseline_increment}.
\end{definition}

Above, $s$ and $v$ are mnemonics for sum and variance. For each $Q \in \mathcal{Q}$, define 
\begin{equation}\label{eq:Delta.star}
   \mu(Q) := \mathbb{E}_{0,Q}s(X_1),~ \sigma^2 :=  \mathbb{E}_{0,Q}v(X_1)~\text{and}~\Delta^\op(Q) := \frac{\mu(Q)}{\sigma^2(Q)},
\end{equation}
where we assume that all expectations are finite. The following proposition provides an explicit expression for $D(Q||\mathcal{P}) := \max_{\lambda \in \Pi}\mathbb{E}_{0,Q} \log L_1^{(\lambda)}$ and a sufficient condition to have $D(Q||\mathcal{P}) > 0$ when the underlying baseline increments have the form specified in~\eqref{eq::exponential_base}.

\begin{proposition} \label{prop::nonneg_div}
    For a fixed $Q \in \mathcal{Q}$, suppose there exist $\lambda^\op \in \Pi^{\mathrm{o}}$ such that $\Delta^\op(Q) = \nabla \psi(\lambda^\op)$. Then,
    \begin{equation}
       D(Q||\mathcal{P}) =  \mathbb{E}_{0,Q} \log L_1^{(\lambda^\op)} 
       = \psi^*\left(\Delta^\op(Q)\right) \sigma^2(Q) \geq 0,
    \end{equation}
    where $\psi^*$ is the convex conjugate of $\psi$. Thus,  if $\Delta^\op(Q) \neq 0$, we have $ D(Q||\mathcal{P}) > 0$. 
\end{proposition}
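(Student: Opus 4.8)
The plan is to reduce the statement to a standard one-dimensional Legendre-transform computation. By Definition~\ref{def:exp-base-incr}, $\log L_1^{(\lambda)} = \lambda s(X_1) - \psi(\lambda) v(X_1)$, so linearity of expectation (all the relevant expectations being finite by assumption) gives, for each $\lambda \in \Pi$,
\begin{equation*}
  \mathbb{E}_{0,Q}\log L_1^{(\lambda)} = \lambda\,\mu(Q) - \psi(\lambda)\,\sigma^2(Q).
\end{equation*}
Since $v$ is a positive function we have $\sigma^2(Q) > 0$, and factoring it out yields
\begin{equation*}
  D(Q\|\mathcal{P}) = \max_{\lambda\in\Pi}\left[\lambda\,\mu(Q) - \psi(\lambda)\,\sigma^2(Q)\right] = \sigma^2(Q)\,\max_{\lambda\in\Pi}\left[\lambda\,\Delta^*(Q) - \psi(\lambda)\right] = \sigma^2(Q)\,\psi^*\!\bigl(\Delta^*(Q)\bigr),
\end{equation*}
where the last equality is the definition of the convex conjugate $\psi^*$ (and under the hypothesis the supremum defining $\psi^*(\Delta^*(Q))$ is attained, so writing $\max$ is legitimate).

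Next I would verify that $\lambda^*$ is the maximizer and deduce nonnegativity. The map $g(\lambda) := \lambda\,\Delta^*(Q) - \psi(\lambda)$ is strictly concave on $\Pi$ (it is an affine function minus the strictly convex $\psi$) and continuously differentiable on $\Pi^{\mathrm{o}}$, with $g'(\lambda) = \Delta^*(Q) - \nabla\psi(\lambda)$. By hypothesis $g'(\lambda^*) = 0$ for some $\lambda^* \in \Pi^{\mathrm{o}}$, so strict concavity makes $\lambda^*$ the unique global maximizer of $g$ on $\Pi$; hence $\psi^*(\Delta^*(Q)) = g(\lambda^*) = \lambda^*\,\Delta^*(Q) - \psi(\lambda^*)$, which combined with the previous display gives the claimed identities, in particular $D(Q\|\mathcal{P}) = \mathbb{E}_{0,Q}\log L_1^{(\lambda^*)}$. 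Evaluating $g$ at $0 \in \Pi^{\mathrm{o}}$ and using $\psi(0) = 0$ gives $g(0) = 0$, so $\psi^*(\Delta^*(Q)) = \max_{\lambda\in\Pi} g(\lambda) \geq 0$, and therefore $D(Q\|\mathcal{P}) = \sigma^2(Q)\,\psi^*(\Delta^*(Q)) \geq 0$.

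Finally, for strict positivity when $\Delta^*(Q) \neq 0$: since $\nabla\psi(0) = 0 \neq \Delta^*(Q) = \nabla\psi(\lambda^*)$, we must have $\lambda^* \neq 0$; as $\lambda^*$ is the \emph{unique} maximizer of the strictly concave $g$, it follows that $\psi^*(\Delta^*(Q)) = g(\lambda^*) > g(0) = 0$, whence $D(Q\|\mathcal{P}) > 0$. I do not expect a genuine obstacle here — this is a textbook convex-duality argument — and the only step warranting a touch of care is this last one, where one must invoke \emph{strict} concavity together with $\lambda^* \neq 0$ to strictly separate $g(\lambda^*)$ from $g(0) = 0$, rather than settling for the weak bound $\psi^* \geq 0$.
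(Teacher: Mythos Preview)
Your proof is correct and follows essentially the same approach as the paper: compute $\mathbb{E}_{0,Q}\log L_1^{(\lambda)} = \lambda\mu - \psi(\lambda)\sigma^2$, factor out $\sigma^2$, and identify the resulting supremum as $\psi^*(\Delta^*)$ attained at $\lambda^*$. Your version is in fact more complete than the paper's, which only verifies the identity $\mathbb{E}_{0,Q}\log L_1^{(\lambda^*)} = \sigma^2\psi^*(\Delta^*)$ and leaves the nonnegativity and strict positivity claims implicit; you supply those via the observation $g(0)=0$ together with strict concavity and $\lambda^*\neq 0$.
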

The proof of \cref{prop::nonneg_div} can be found in \cref{appen::proofs_for_mixture_simple_exponential}. 
For the rest of the section, we assume that 
\begin{align*}
&\left\{\lambda \in \mathbb{R}: \Delta^\op(Q) = \nabla \psi(\lambda), Q \in \mathcal{Q}\right\} \subset \Pi,\\
&\Delta^\op(Q) \neq 0,~~\forall Q \in \mathcal{Q}.    
\end{align*}
Then, \cref{prop::nonneg_div} implies $D(Q||\mathcal{P}) > 0$ for all $Q \in \mathcal{Q}$. Also, for ease of notation, we will drop the dependency of $Q$ from related parameters and simply write $\lambda^\op, \mu, \sigma^2$ and $\Delta$. 

The exponential structure of the baseline increment in \eqref{eq::exponential_base} results in a simple form of $\lambda^\op$ such that
\begin{equation}
\lambda^\op= (\nabla\psi)^{-1}(\Delta^\op) = \nabla\psi^*(\Delta^\op),
\end{equation}
where the second equality comes from the fact that $\lambda = \nabla \psi^* \circ \nabla \psi(\lambda)$ for each $\lambda \in \Pi$. Although $\lambda^\op$ still depends on the unknown post-change distribution $Q$ via $\Delta^\op$, in many cases, we can find upper and lower bounds on $\Delta^\op$. In this section, we explain how to use the knowledge of the range of $\Delta^\op$ to build a mixture of exponential baseline e-detectors that has explicit upper bounds on worst average delays.

\subsection{Separated pre- and post-change distributions}

Suppose we have knowledge of upper and lower bounds on the parameter $\Delta^\op$ given in \eqref{eq:Delta.star}, i.e. we know a pair $(\Delta_L, \Delta_U)$ such that $\Delta_L < \Delta^\op < \Delta_U$. It then  follows that $\lambda_L < \lambda^\op < \lambda_U$, where  $\lambda_L = \nabla \psi^*(\Delta_L)$,  $\lambda^\op = \nabla \psi^*(\Delta^\op)$ and  $\lambda_U = \nabla \psi^*(\Delta_U)$.
To simplify presentation, we only consider the \emph{one-sided} and \emph{well-separated} case: $0 < \lambda_L <\lambda_U$. 

 Let $1/\alpha$ be the target level of the ARL control for a fixed $\alpha \in (0,1)$. Let $\{L(k)\}_{k \in [K]}$ and $\{\omega_k\}_{k \in [K]}$ be $K$ exponential baseline increments and mixing weights whose specific values will be defined later in this subsection. Since each $L_n(k)$ is a function of the $n$-th observation $X_n$ for each $k \in [K]$,  \cref{thm::upperbound_general_form} implies that, if the post-change observations form a strong stationary process then the worst average delays for  mixtures of e-SR and e-CUSUM procedures, $N_\MSR^*$ and $N_\MCS^*$ can be upper bounded by $\mathbb{E}_{0,Q} N_{1/\alpha}$ and $\mathbb{E}_{0,Q} N_{c_\alpha}$, respectively,  where $N_c$ is the stopping time defined in \eqref{eq::N_sr_sub_psi}. Furthermore, as we can always set the threshold for the e-CUSUM procedure to be $c_\alpha \leq 1/\alpha$, we have $\mathbb{E}_{0,Q} N_{c_\alpha} \leq \mathbb{E}_{0,Q} N_{1/\alpha}$. Therefore, in this subsection, we construct a set of baseline increments for which we can derive a tight bound on $\mathbb{E}_{0,Q} N_{1/\alpha}$.

    \cref{alg::mixture_SR} describes our methodology for computing mixtures of e-SR procedures in detail. The inputs to the algorithm are the upper and lower bounds $\Delta_U$ and $\Delta_L$ on $
 \Delta^\op$ and the maximal number of baselines processes $K_{\max}$. Mixture of e-CUSUM procedures can be executed similarly by replacing  \cref{line::SR_update_general} by
\begin{equation}
    M_n^{\CS}(k) \leftarrow \exp\left\{\lambda_k s(X_n) - \psi_k v(X_n)\right\}\cdot \max\left\{M_{n-1}^\CS (k),1\right\}.
\end{equation}
Also, for the mixture of e-CUSUM procedures, we can replace the threshold $1/\alpha$ with a smaller value $c_\alpha$ if we have enough information about the pre-change distribution. For both e-SR and e-CUSUM, at each time $n$, updates of mixtures of e-detectors have $O(K_\alpha)$ time and space complexities, which do not depend on $n$. 

 \cref{alg::mixture_SR} relies critically on the function \texttt{computeBaseline} in  \cref{line::computeBaseline.line}, which returns a set of parameters and weights to compute a mixture of e-detectors along with a threshold value $g_\alpha > 0$ that will appear in the upper bound on worst average delays given  in \cref{thm::upper_bound_well_sep_general} that will be explained below.  The details of \texttt{computeBaseline} are fairly technical and are given in \cref{alg::compute_base} in \cref{appen::proofs_for_mixture_simple_exponential}. 
    
    In the main result of this section, we provide bounds on ARL and worst average delays for the mixtures of e-CP procedures obtained with  \cref{alg::mixture_SR} that is a function of the parameter $\lambda^\op$ and the threshold value $g_\alpha$. The proof can be found in \cref{appen::proofs_for_mixture_simple_exponential}.
    \begin{theorem} \label{thm::upper_bound_well_sep_general}
    Let $N_\MSR^*$ and $N_\MCS^*$ be the stopping times corresponding to the mixtures of e-SR and e-CUSUM procedures in \cref{alg::mixture_SR} and its variant, respectively. 
    Then, both procedures control the ARL by $1/\alpha$. If we further assume that the post-change observations $X_{\nu+1}, X_{\nu+2}, \dots$ are iid samples from a post-change distribution $Q$,  then the worst average delays for $N_\MSR^*$ and $N_\MCS^*$ can be  bounded as
    \begin{equation}\label{eq::upper_bound_well_sep_general}
       \max\left\{ \mathcal{J}_L(N_\MSR^*), \mathcal{J}_L(N_\MCS^*) \right\}
        \leq \frac{g_\alpha}{D(Q||\mathcal{P})} +  \frac{\mathbb{V}_{0,Q} \left[\log L_1^{(\lambda^\op)}\right]}{\left[D(Q||\mathcal{P})\right]^2} + 1.
    \end{equation}

 The same bound holds also for $\mathcal{J}_P(N_\MSR^*)$ and $\mathcal{J}_P(N_\MCS^*)$.
    \end{theorem}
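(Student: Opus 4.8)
The plan is to prove the two assertions separately. ARL control is essentially immediate from the e-detector calculus: \cref{alg::mixture_SR} returns a finite mixture $M^\MSR=\sum_{k\le K_\alpha}\omega_kM^\SR(k)$ of baseline SR e-detectors (and its variant returns $M^\MCS=\sum_{k\le K_\alpha}\omega_kM^\CS(k)$), and since each $M^\SR(k)$ and $M^\CS(k)$ is an e-detector by the argument following \eqref{eq:SER:CS}, \cref{prop::mixture_is_e_detector} shows the mixtures are e-detectors, whence \cref{prop::ARL_control_e_detector} gives $\mathbb{E}_{P,\infty}N_\MSR^*\ge1/\alpha$ and $\mathbb{E}_{P,\infty}N_\MCS^*\ge1/\alpha$ with threshold $1/\alpha$; by definition a smaller $c_\alpha$ is used only when it is still known to keep the ARL at least $1/\alpha$.

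For the delay bound, assume the post-change data are i.i.d.\ from $Q$, a special case of the strong-stationarity hypothesis of \cref{thm::upperbound_general_form}. Each exponential baseline increment $L_n^{(\lambda_k)}=\exp\{\lambda_ks(X_n)-\psi(\lambda_k)v(X_n)\}$ is a function of $X_n$ alone, so $m=0$; with the constant scheduling function, \cref{thm::upperbound_general_form} yields $\mathcal{J}_P(N_\MSR^*)\le\mathcal{J}_L(N_\MSR^*)\le\mathbb{E}_{0,Q}N_{1/\alpha}$ and $\mathcal{J}_P(N_\MCS^*)\le\mathcal{J}_L(N_\MCS^*)\le\mathbb{E}_{0,Q}N_{c_\alpha}$ with $N_c$ from \eqref{eq::N_sr_sub_psi}, and since $c_\alpha\le1/\alpha$ forces $N_{c_\alpha}\le N_{1/\alpha}$ pathwise, it suffices to bound $\mathbb{E}_{0,Q}N_{1/\alpha}$. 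Writing $S_n:=\sum_{i\le n}s(X_i)$ and $V_n:=\sum_{i\le n}v(X_i)>0$, so that $\prod_{i\le n}L_i^{(\lambda)}=\exp\{\lambda S_n-\psi(\lambda)V_n\}$, the crux is the \emph{deterministic} claim that, for the grid $\{\lambda_k\}$, weights $\{\omega_k\}$ and threshold $g_\alpha>0$ returned by \texttt{computeBaseline} (with $\lambda_L,\lambda_U$ themselves in the grid), one has for every $\lambda^*\in(\lambda_L,\lambda_U)$ and every $n\ge1$
\[
\lambda^*S_n-\psi(\lambda^*)V_n\ge g_\alpha
\quad\Longrightarrow\quad
\sum_{k=1}^{K_\alpha}\omega_k\,e^{\lambda_kS_n-\psi(\lambda_k)V_n}\ \ge\ \tfrac1\alpha .
\]
Evaluating at $n=N^{(\lambda^*)}:=\inf\{n\ge1:\sum_{i\le n}\log L_i^{(\lambda^*)}\ge g_\alpha\}$, where the left side holds, this gives the pathwise domination $N_{1/\alpha}\le N^{(\lambda^*)}$.

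To prove the implication, put $r:=S_n/V_n$ and note that $\lambda\mapsto\lambda S_n-\psi(\lambda)V_n=V_n(\lambda r-\psi(\lambda))$ is concave with maximizer $\bar\lambda$ over $[\lambda_L,\lambda_U]$, equal to $(\nabla\psi)^{-1}(r)$ when that lies in the interval and to the nearer endpoint otherwise; since $\lambda^*\in[\lambda_L,\lambda_U]$ this already gives $V_n(\bar\lambda r-\psi(\bar\lambda))\ge\lambda^*S_n-\psi(\lambda^*)V_n\ge g_\alpha$. Now split into three cases. If $r\le\Delta_L$, then $\bar\lambda=\lambda_L$ is an exact grid point, so its mixture term is at least $\omega\,e^{g_\alpha}$, where $\omega$ is its weight; symmetrically if $r\ge\Delta_U$. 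If $\Delta_L<r<\Delta_U$, then $(\nabla\psi)^{-1}(r)\in(\lambda_L,\lambda_U)$ is interior, so the geometric grid contains some $\lambda_k$ within relative distance $\rho-1$ of it, and uniform continuity of $\nabla\psi$ on the compact interval $[\lambda_L,\lambda_U]$ bounds $(\bar\lambda r-\psi(\bar\lambda))-(\lambda_kr-\psi(\lambda_k))$ by a grid constant $\delta(\rho)$ \emph{independent of $V_n$}; dividing by $\bar\lambda r-\psi(\bar\lambda)=\psi^*(r)\ge\psi^*(\Delta_L)>0$ makes the $V_n$ factors cancel, giving $\lambda_kS_n-\psi(\lambda_k)V_n\ge g_\alpha\bigl(1-\delta(\rho)/\psi^*(\Delta_L)\bigr)$. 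In every case the selected term exceeds $1/\alpha$ once $g_\alpha$ is larger than $\bigl(\log(1/\alpha)+\log(1/\min_k\omega_k)\bigr)/\bigl(1-\delta(\rho)/\psi^*(\Delta_L)\bigr)$; balancing this against the grid resolution $\rho$ (hence against $K_\alpha\le K_{\max}$ and the weights) is precisely what \cref{alg::compute_base} carries out when it returns $g_\alpha$.

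It remains to bound $\mathbb{E}_{0,Q}N^{(\lambda^*)}$. Since $\log L_i^{(\lambda^*)}=\lambda^*s(X_i)-\psi(\lambda^*)v(X_i)$ is a function of $X_i$ only, the $X_i$ are i.i.d.\ under $\mathbb{P}_{0,Q}$, and $\mathbb{E}_{0,Q}\log L_1^{(\lambda^*)}=D(Q||\mathcal{P})>0$ by \cref{prop::nonneg_div}, the renewal estimate \eqref{eq::simple_upper_SR} of \cref{prop::upperbound_simple_form}, applied with threshold $c=e^{g_\alpha}$, gives
\[
\mathbb{E}_{0,Q}N^{(\lambda^*)}\ \le\ \frac{g_\alpha}{D(Q||\mathcal{P})}+\frac{\mathbb{V}_{0,Q}\bigl[\log L_1^{(\lambda^*)}\bigr]}{\bigl[D(Q||\mathcal{P})\bigr]^2}+1 .
\]
Chaining $\mathbb{E}_{0,Q}N_{1/\alpha}\le\mathbb{E}_{0,Q}N^{(\lambda^*)}$ (and $\mathbb{E}_{0,Q}N_{c_\alpha}\le\mathbb{E}_{0,Q}N_{1/\alpha}$) with this estimate, together with $\mathcal{J}_P\le\mathcal{J}_L$ \citep{moustakides2008sequential}, delivers \eqref{eq::upper_bound_well_sep_general}. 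I expect the main obstacle to be the deterministic domination claim --- specifically, justifying that a \emph{fixed} $\lambda$-grid is adequate uniformly over the intrinsic time $V_n$, which succeeds only because the grid error scales linearly in $V_n$ exactly as the oracle exponent does and their ratio is tamed by the positive constant $\psi^*(\Delta_L)$ furnished by the \emph{known} separation $\Delta_L<\Delta^*<\Delta_U$; keeping $g_\alpha$ close to $\log(1/\alpha)$ through this accounting, so that the delay bound is near-optimal, is the delicate part.
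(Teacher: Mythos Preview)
Your overall architecture is correct and matches the paper: ARL control via \cref{prop::mixture_is_e_detector} and \cref{prop::ARL_control_e_detector}; reduction of the delay to $\mathbb{E}_{0,Q}N_{1/\alpha}$ via \cref{thm::upperbound_general_form} with $m=0$; a deterministic comparison $N_{1/\alpha}\le N^{(\lambda^*)}$; and then the Lorden-type bound \eqref{eq::simple_upper_SR}. The paper proceeds the same way, routing the comparison through the GLR stopping time $\bar N_{g_\alpha}$ of \eqref{eq::glr_st_main} (which trivially satisfies $\bar N_{g_\alpha}\le N^{(\lambda^*)}$) and proving $N_{1/\alpha}\le\bar N_{g_\alpha}$ as \cref{lemma::alg_1_result}.

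The gap is in your sketch of the deterministic implication. You describe the grid as ``geometric in $\lambda$'' with error controlled by ``uniform continuity of $\nabla\psi$'', yielding a factor $1-\delta(\rho)/\psi^*(\Delta_L)$. That is not what \texttt{computeBaseline} constructs, and an argument along those lines would not recover the specific $g_\alpha$ of \eqref{eq::g_alpha_explicit_main}. The algorithm places the grid geometrically in the \emph{values} $\psi^*(\Delta_k)=D_U/\eta^k$ (equation \eqref{eq::def_of_mu_k_appen}), not in $\lambda$; and the proof of \cref{lemma::alg_1_result} does not argue by cases on $r=S_n/V_n$ but instead partitions by the intrinsic-time level sets $V_n\in[V_U\eta^{k-1},V_U\eta^k)$. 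On each such shell, the tangent-line geometry of $\psi^*$ (the same device as in \citet[Lemma~1]{shin2021nonparametric}) shows that the GLR event $\{\sup_\lambda(\lambda S_n-\psi(\lambda)V_n)\ge g_\alpha\}$ is contained in the single-$\lambda_k$ event $\{\lambda_kS_n-\psi(\lambda_k)V_n\ge g_\alpha/\eta\}$, with the clean multiplicative loss $1/\eta$ replacing your $1-\delta(\rho)/\psi^*(\Delta_L)$. This is exactly why the weights are $\omega_k\propto e^{-g_\alpha/\eta}$ and why $g_\alpha$ is defined through the inequality $e^{-g}\mathbbm{1}(g>v_{\min}D_U)+K e^{-g/\eta}\le\alpha$: each shell contributes one term, and summing them reproduces the mixture threshold $1/\alpha$ on the nose. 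Your uniform-continuity route could produce \emph{some} valid grid and \emph{some} threshold, but it would not certify the theorem for the procedure actually returned by \cref{alg::mixture_SR}. To close the argument you need to use the $\psi^*$-geometric spacing and the $V_n$-shell decomposition, not a Taylor-type approximation in $\lambda$.
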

    
    In   \cref{prop::upper_bound_on_g} in  \cref{appen::proofs_for_mixture_simple_exponential}, we show that if the number of baseline processes $K_{\max}$ in \cref{alg::mixture_SR} is chosen large enough, then the quantity $g_\alpha$ returned by  \texttt{computeBaseline}  is at most
        \begin{equation} \label{eq::g_upper_explicit_main_sec3}
            \inf_{\eta > 1} \eta \left[\log(1/\alpha) + \log\left(1 + \left\lceil \log_\eta \frac{\psi^*(\Delta_U)}{\psi^*(\Delta_L)}\right\rceil \right)\right],
        \end{equation} 
        which can be easily evaluated numerically. Expression \eqref{eq:Kmax.large.enough} in  \cref{appen::proofs_for_mixture_simple_exponential} provides a precise formula for how large $K_{\max}$ needs to be in order for the above bound to be in effect. In most practical cases, $K_{\max} = 1000$ is a large enough choice.  Also, in many canonical examples we will present later, if we choose large enough $K_{\max}$ satisfying the condition  \eqref{eq:Kmax.large.enough}  then the first term $\frac{g_\alpha}{D(Q||\mathcal{P})}$ of the upper bound of worst average delays in \cref{thm::upper_bound_well_sep_general} become a leading term. In this case, from the inequality~\eqref{eq::g_upper_explicit_main_sec3}, we can check that this leading term is $O\left(\log(1/\alpha) / D(Q||\mathcal{P})\right)$ as $\alpha \to 0$.

    \begin{remark}
                If there is only one pre-change distribution $P$ and one post-change distribution $Q$, both from a natural univariate exponential family, then their likelihood ratio forms an exponential baseline increment. In this case, the above upper bound becomes $O\left(\log(1/\alpha) / \KL(Q||P)\right)$ as $\alpha \to 0$, matching the rate of the known lower bounds \citep{lorden1971procedures}.
    \end{remark}

The bound on worst average delays in \cref{thm::upper_bound_well_sep_general} is obtained by analyzing an auxiliary stopping time 
    \begin{equation} \label{eq::glr_st_main}
        \bar{N}_g:= \inf\left\{n \geq 1: \sup_{\lambda \in (\lambda_L, \lambda_U)} \sum_{i = 1}^n \log L_i^{(\lambda)} \geq g \right\}, \quad g > 1.
    \end{equation}
Using the same arguments as in the proof of \cref{prop::upperbound_simple_form}, we immediately have that if the post-change observations are iid\ from $Q$, then for any $g>1$,
    \begin{equation} \label{eq::glr_bound_general}
        \mathbb{E}_{0,Q} \bar{N}_g 
        \leq   \frac{g}{D(Q||\mathcal{P})} +  \frac{\mathbb{V}_{0,Q} \left[\log L_1^{(\lambda^\op)}\right]}{\left[D(Q||\mathcal{P})\right]^2} + 1 .
    \end{equation}
The  bound \eqref{eq::upper_bound_well_sep_general} is finally established by showing that the stopping time $\bar{N}_{g_\alpha}$ obtained by using the threshold $g_\alpha$ produced by \cref{alg::compute_base} is a deterministic upper bound to the stopping times $N_{c_\alpha}$ and $N_{1/\alpha}$ corresponding to mixtures of SR and CUSUM e-detectors. In detail, it holds that for any stream of observations $X_1,X_2,\ldots$,
\begin{equation}
    N_{c_\alpha} \leq N_{1/\alpha} \leq \bar{N}_{g_\alpha}.
\end{equation}
This nontrivial result is formally stated in  \cref{lemma::alg_1_result} in \cref{appen::proofs_for_mixture_simple_exponential}. Its proof leverages geometric arguments used in \citep[Theorem 2]{shin2021nonparametric} to analyze sequential generalized likelihood ratio tests.

\begin{algorithm*} 
\DontPrintSemicolon
\KwInput{ARL parameter $\alpha \in (0,1)$, Boundary values $0<\Delta_L < \Delta_U$,\newline Maximum number of baselines $K_{\max} \in \mathbb{N}$.}
\KwOutput{Stopping time $N_\MSR^*$ of the mixture of e-SR procedures.}
\KwData{Data stream $X_1, X_2, \dots$ (observed sequentially)
 } 
$\{\lambda_0, \lambda_1,\dots  \lambda_{K_\alpha} \}$, $\{\omega_0,\omega_1, \dots, \omega_{K_\alpha}\}$, $g_\alpha$ $\leftarrow$ \texttt{computeBaseline}($\alpha, \Delta_L, \Delta_U, K_{\max}$)  \; \label{line::computeBaseline.line} 
\For{$k = 0, 1, \dots, K_\alpha$}{
$M_0^{\SR}(k) \leftarrow 0$, $\psi_k \leftarrow \psi(\lambda_k)$\;
}
$M_0^\MSR \leftarrow 0$, $n \leftarrow 0$\;
\While{$M_n^\MSR <1/\alpha$}{
$n \leftarrow n + 1$\;
Observe $X_n$\;
\For{$k = 0, 1, \dots, K_\alpha$}{
$M_n^{\SR}(k) \leftarrow \exp\left\{\lambda_k s(X_n) - \psi_k v(X_n)\right\}\cdot \left[M_{n-1}^\SR (k) + 1\right]$ \; \label{line::SR_update_general}
}
$M_n^\MSR \leftarrow  \sum_{k=0}^{K_\alpha} \omega_k M_n^\SR(k)$\;
}
$N_\MSR^* \leftarrow n$ \;
\Return The stopped time $N_\MSR^*$
 \caption{ Pseudo-code of the mixture of e-SR procedures}
 \label{alg::mixture_SR}
\end{algorithm*}

\subsection{Non-separated pre- and post-change distributions}

The previous subsection discussed how to build mixtures of e-SR and e-CUSUM procedures with an explicit upper bound on worst average delays when we have known and positive boundary values, $\lambda_L$ and $\lambda_U$ on the unknown $\lambda^\op$ via the knowledge of $\Delta_L < \Delta^\op < \Delta_U$. However, in many cases, we may not be fully certain about the boundary values. In this subsection, we discuss how we can generalize the previous argument to the no separation case whereby we only know the sign of $\lambda^\op (>0)$ but do not have specific boundary values.

Recall that, for the well-separated case, we calibrated the mixtures of finitely many exponential baseline e-detectors using the stopping time $\bar{N}_{g_\alpha}$ in \eqref{eq::glr_st_main}, which is in turn based on the maximum of underlying baseline increments over the known upper and lower bounds of $\lambda^\op$. Since we no longer have knowledge of the boundary values $\lambda_L$ and $\lambda_U$, we may use similar stopping times where the range of maximum and the threshold slowly increase over time. In this case, we need an infinite sequence of baseline procedures $\left\{L(k)\right\}_{k \in \mathbb{N}}$ and mixing weights $\{\omega_k\}_{k \in \mathbb{N}}$ to build adaptive e-SR and e-CUSUM procedures. 

The bound in \cref{thm::upperbound_general_form} along with the fact $\gamma_j \geq 1$ for all $j \in \mathbb{N}$ implies that, for any given scheduling function $K : \mathbb{N} \to \mathbb{N}$, if the post-change observations form a strong stationary process then worst average delays for adaptive e-SR and e-CUSUM procedures can be upper bounded by 
    $\min_{j \geq 1} \left[ \mathbb{E}_{0,Q} N_{1/\alpha} (j) + j - 1\right]$ and $\min_{j \geq 1} \left[ \mathbb{E}_{0,Q} N_{c_\alpha} (j) + j - 1\right]$, respectively, where we recall that  $N_c(j)$ is defined for $c>0$ by 
    \begin{equation}
      N_c(j) := \inf \left\{n\geq 1 : \sum_{k=1}^{K(j)}  \omega_k\prod_{i = 1}^n L_i(k) \geq c \right\}.        
    \end{equation}
  Again, since we can set the threshold for the e-CUSUM procedure in such a manner that $c_\alpha \leq 1/\alpha$ (so that $\mathbb{E}_{0,Q} N_{c_\alpha}(j) \leq \mathbb{E}_{0,Q} N_{1/\alpha}(j)$), in this subsection, we focus on constructing a set of baseline increments on which we can derive a tight upper bound on $\min_{j \geq 1} \left[ \mathbb{E}_{0,Q} N_{1/\alpha} (j) + j - 1\right]$. 
        
    To derive the set of baseline increments, we use a time-varying boundary function $g$. Here, we intentionally overload notation: the constant $g$ in the previous subsection for the well-separation case can be viewed as a constant function $g$ in what follows. Let $g: [1,\infty) \to [0,\infty)$ be a nonnegative and nondecreasing continuous function such that the mapping $t \mapsto g(t) / t$ is nonincreasing  and $\lim_{t\to \infty} g(t) /t = 0$. For a chosen positive number $\Delta_0 > 0$, let 
    \[\text{$D_0 := \psi^*(\Delta_0)$ and $V_0 := \inf\left\{t \geq 1: D_0 \geq g(t) / t\right\}$}.\]
    Now, for any fixed $\eta > 1$ and $j \in \mathbb{N}$,  define $\Delta_1 > \Delta_2> \cdots $ as positive solutions of the equations
    \begin{equation} \label{eq::const_of_SR_no_sep}
        \psi^*\left(\Delta_k\right) = \frac{g\left(V_0 \eta^{k} \right)}{V_0\eta^{ k}} ,~~ k=0, 1,2,\ldots.
    \end{equation}
    Finally, based on the sequence $\{\Delta_k\}_{k \geq 0}$, define \[\lambda_k := \nabla \psi^*(\Delta_k),
    \] and set 
    $\omega_0 := \alpha^{-1}e^{-g(V_0)}\mathbbm{1}(g(V_0) > v_{\min} D_0), \omega_k := \alpha^{-1}e^{-g\left(V_0\eta^k\right) / \eta}$ for each $k \in \mathbb{N}$ where $v_{\min} := \min_{x}v(x)$, recalling the function $v$ from Definition~\ref{def:exp-base-incr}.
    
    Based on the quantities defined above, we can construct the stopping time $N_{1/\alpha}(j)$ for each $j$. The following lemma shows that we can upper bound the stopping time $N_{1/\alpha}(j)$ with another stopping time $\bar{N}_g(j)$ from which we can derive an explicit upper bound on its expected stopping time. 
    \begin{lemma} \label{lem::condition_for_SR_to_glrt_general}
        For any fixed $j \geq  1$, $\Delta_0 > 0$, and tuning parameter $\eta > 1$, let $N_{1/\alpha}(j)$ be the stopping time based on the parameters defined above. 
        Then, we have
        \begin{equation}
         N_{1/\alpha} (j) \leq \bar{N}_g(j) ,  
        \end{equation}
        where $\bar N_g(j)$ is a stopping time defined
        by
        \begin{equation} \label{eq::N_G_j}
            \bar N_g(j) := \inf\left\{n \geq 1 : \sup_{\lambda \in (\lambda_{K(j)}, \lambda_0)} \sum_{i = 1}^n \log L_i^{(\lambda)} \geq g\left(V_0 \eta^{K(j)} \right)\right\}.
        \end{equation}
    \end{lemma}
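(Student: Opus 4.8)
The plan is to establish $N_{1/\alpha}(j)\le\bar N_g(j)$ \emph{pathwise}. Since $N_{1/\alpha}(j)$ is the first time the truncated mixture statistic $S_n:=\sum_{k\le K(j)}\omega_k\prod_{i=1}^n L_i^{(\lambda_k)}$ reaches $1/\alpha$, it suffices to show that on $\{\bar N_g(j)<\infty\}$ one already has $S_n\ge1/\alpha$ at $n:=\bar N_g(j)$; in fact I will exhibit a single summand that reaches $1/\alpha$ there. Write $s_n:=\sum_{i=1}^n s(X_i)$, $v_n:=\sum_{i=1}^n v(X_i)>0$ and $\Delta_n:=s_n/v_n$, so that $\sum_{i=1}^n\log L_i^{(\lambda)}=\lambda s_n-\psi(\lambda)v_n=v_n\bigl(\lambda\Delta_n-\psi(\lambda)\bigr)$ is concave in $\lambda$ with unconstrained supremum $v_n\psi^*(\Delta_n)$ by Fenchel duality. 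By concavity, the constrained supremum over $\lambda\in(\lambda_{K(j)},\lambda_0)$ defining $\bar N_g(j)$ equals $v_n\psi^*(\Delta_n)$ whenever $\Delta_n\in(\Delta_{K(j)},\Delta_0)$, is attained at $\lambda_0$ when $\Delta_n\ge\Delta_0$, and is attained at $\lambda_{K(j)}$ when $\Delta_n\le\Delta_{K(j)}$ (here I use $\Delta_0\ge\Delta_1\ge\cdots$, hence $\lambda_0\ge\lambda_1\ge\cdots$, which follows from the monotonicity of $t\mapsto g(t)/t$). The crossing condition at $n$ says this supremum is $\ge g(V_0\eta^{K(j)})>0$, which forces $s_n>0$ and hence $\Delta_n>0$.

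I then split into the three cases above, following the geometric covering argument used for sequential GLR tests in \citet[Theorem~2]{shin2021nonparametric}. In the left-endpoint case, $\lambda_{K(j)}s_n-\psi(\lambda_{K(j)})v_n\ge g(V_0\eta^{K(j)})$, so since $\omega_{K(j)}=\alpha^{-1}e^{-g(V_0\eta^{K(j)})/\eta}$ we get $\omega_{K(j)}\prod_{i=1}^n L_i^{(\lambda_{K(j)})}\ge\alpha^{-1}e^{g(V_0\eta^{K(j)})(1-1/\eta)}\ge1/\alpha$. In the right-endpoint case ($\Delta_n\ge\Delta_0$), $\lambda_0 s_n-\psi(\lambda_0)v_n\ge g(V_0\eta^{K(j)})\ge g(V_0)$; if $\omega_0>0$, i.e.\ $g(V_0)>v_{\min}D_0$, then $\omega_0=\alpha^{-1}e^{-g(V_0)}$ gives $\omega_0\prod_{i=1}^n L_i^{(\lambda_0)}\ge1/\alpha$, and if $\omega_0=0$ one first checks that then necessarily $v_n\ge V_0$ for every $n$, so the configuration is subsumed by the interior case. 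In the interior case, $v_n\psi^*(\Delta_n)\ge g(V_0\eta^{K(j)})$ together with $\psi^*(\Delta_n)<\psi^*(\Delta_0)=D_0=g(V_0)/V_0$ forces $v_n>V_0$, so there is a unique $k^\star\ge1$ with $v_n\in[V_0\eta^{k^\star-1},V_0\eta^{k^\star})$; setting $k:=\min\{k^\star,K(j)\}$, the monotonicity of $g$ and of $t\mapsto g(t)/t$ gives $\psi^*(\Delta_n)\ge g(V_0\eta^{K(j)})/v_n\ge g(V_0\eta^{k})/(V_0\eta^{k})=\psi^*(\Delta_k)$, hence $\Delta_n\ge\Delta_k$, and then, using the Fenchel--Young equality $\lambda_k\Delta_k-\psi(\lambda_k)=\psi^*(\Delta_k)$ and $\lambda_k>0$, $\;\lambda_k s_n-\psi(\lambda_k)v_n\ge v_n\psi^*(\Delta_k)=v_n\,g(V_0\eta^{k})/(V_0\eta^{k})\ge g(V_0\eta^{k})/\eta=\log(1/\alpha)-\log\omega_k$, the last step using $v_n\ge V_0\eta^{k-1}$ when $k=k^\star$ and $v_n\ge V_0\eta^{K(j)}$ when $k=K(j)<k^\star$. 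In every case a single summand of $S_n$ reaches $1/\alpha$ at $n=\bar N_g(j)$, giving $N_{1/\alpha}(j)\le\bar N_g(j)$.

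The individual inequalities are all routine consequences of concavity, Fenchel--Young, and the monotonicity assumptions on $g$ and $t\mapsto g(t)/t$; I expect the main obstacle to be the bookkeeping that knits together the definitions of $V_0$, the geometric grid $\{V_0\eta^{k}\}$, the thresholds $\Delta_k$ and the weights $\omega_k$ so that \emph{no} configuration of $(\Delta_n,v_n)$ slips through all three cases. In particular one must verify that the regime $v_n<V_0$ can arise only when the indicator $\mathbbm{1}(g(V_0)>v_{\min}D_0)$ in the definition of $\omega_0$ is active, so that the $k=0$ component is available exactly when it is needed (this is the step needing the most care, especially in the edge case $V_0=1$), and that truncating the epoch index $k^\star$ at $K(j)$ is consistent with the left-endpoint behaviour of the constrained GLR maximizer. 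This is precisely the structure of the geometric argument underpinning \citet[Theorem~2]{shin2021nonparametric}, specialised here to the finite grid of size $K(j)$ fixed by the scheduling function.
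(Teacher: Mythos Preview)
Your proposal is correct and follows essentially the same geometric covering argument as the paper, which likewise invokes \citet[Theorem~2]{shin2021nonparametric} to decompose the GLR crossing event into pieces, each handled by a single baseline $\lambda_k$ via the tangent-line interpretation of $\lambda\hat\mu_n-\psi(\lambda)$. The only cosmetic difference is that the paper organises the case split by the value of $V_n$ relative to the geometric grid $\{V_0\eta^k\}$ rather than by where the constrained maximiser $\Delta_n$ falls relative to $\Delta_0,\Delta_{K(j)}$, but the two viewpoints are equivalent and the edge-case bookkeeping you flag (the indicator in $\omega_0$ when $V_0\le v_{\min}$) is exactly what the paper handles as well.
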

    
    Note that the chosen set of weights $\{\omega_k\}_{k \geq 0}$ yields valid adaptive e-SR and e-CUSUM procedures if 
     \begin{equation} \label{cond::convert_SR_fully_general}
        e^{-g(V_0)}\mathbbm{1}(g(V_0) > v_{\min} D_0) + \sum_{k=1}^{\infty} e^{-g\left(V_0\eta^k\right) / \eta} \leq \alpha.
    \end{equation}
    Once the above condition is satisfied, we can use the worst average delay analysis in \cref{subSec::delay_bound_adaptive} with the bound in \cref{lem::condition_for_SR_to_glrt_general} to get an explicit upper bound on the worst average delay of the adaptive e-SR and e-CUSUM procedures. 
    
In detail, let $j^\op$ be the smallest integer satisfying $\lambda_{K(j^\op)} < \lambda^\op$ and set $K^\op := K(j^\op)$. If we also have $\lambda^\op < \lambda_0$ then \cref{lem::condition_for_SR_to_glrt_general} implies 
    \begin{equation}
        \mathbb{E}_{0,Q} N_{1/\alpha} (j^\op) \leq \mathbb{E}_{0,Q} \bar{N}_{g} (j^\op) \leq \mathbb{E}_{0,Q} N_\op,
    \end{equation}
    where the stopping time $N_\op$ is defined by
    \begin{equation}\label{eq::oracle_upper_bound}
         N_\op := \inf\left\{n \geq 1 : \sum_{i = 1}^n \log L_i^{(\lambda^\op)} \geq g\left(V_0 \eta^{K^\op} \right)\right\},
    \end{equation}
    and the expectation $\mathbb{E}_{0,Q} N_\op$ is typically on the order of $g\left(V_0 \eta^{K^\op} \right)/ D(Q||\mathcal{P})$.   Based on this observation, in the rest of this subsection, we introduce a practical and interpretable way to choose a boundary function $g$ and related tuning parameters which minimize the leading term $g\left(V_0 \eta^{K^\op} \right)$ while satisfying the condition \eqref{cond::convert_SR_fully_general} on the set of mixing weights.
   
First note that, although we have no bounds on $\Delta^\op$ in the no separation case, we can still choose $\Delta_L$ and $\Delta_0$ with $\Delta_L < \Delta_0$ as tuning parameters that represent our initial guess on the range of the unknown $\Delta^\op$. Since it is possible that the unknown parameter $\Delta^\op$ of the post-change distribution is outside of the boundary $(\Delta_L, \Delta_0)$, instead of assigning the entire $\alpha$ to the inside of the guessed interval, we split it into two parts by $r\alpha$ and $(1-r)\alpha$, respectively where $r \in (0,1)$ is another tuning parameter called the importance weight. Roughly speaking, larger $r$ implies we make a higher bet on that the unknown $\Delta^\op$ is inside of our chosen boundaries $(\Delta_L, \Delta_0)$.  

Now, given tuning parameters $\Delta_L, \Delta_0$ and $r$, we compute the set of $\{g_{r\alpha}, K_L, \eta\}$ by executing the function \texttt{computeBaseline}, just like in  \cref{alg::mixture_SR}, except that $\alpha$ is replaced replaced by $r\alpha$. Then, we can extend the boundary function $g$ to accommodate the case in which the unknown $\Delta^\op$ is not inside the initial interval we had guessed. To be specific, we use the boundary function 
\begin{equation} \label{eq::SR_boundary}
    t \in [1,\infty) \mapsto g(t) := g_{r\alpha} + s \eta \log\left( 1 + \log_\eta \left(\frac{t}{V_0 \eta^{K_L}} \vee 1 \right)\right),
\end{equation}
where $V_0 := g_{r\alpha} / D_0$ and  $s > 1$ is a constant obtained as the solution of the equation
\begin{equation} 
    \zeta(s) -1 := \sum_{k=1}^\infty \frac{1}{(1 + k)^s} = e^{g_{r\alpha} / \eta} \left[\alpha - \left\{e^{-g_{r\alpha}}\mathbbm{1}(g_{r\alpha} > D_0) + K_L e^{-g_{r\alpha} / \eta}\right\}\right].
\end{equation}
Note that the right hand side of the above equation is approximately equal to $(1-r)\alpha e^{g_{r\alpha} / \eta}$. Therefore, 
\begin{equation}
 s \approx \zeta^{-1}\left(1 + [1-r]\alpha e^{g_{r\alpha} / \eta} \right).
\end{equation}

\begin{algorithm*}[ht!] 
\DontPrintSemicolon
\KwInput{ARL parameter $\alpha \in (0,1)$, Tuning parameters $\Delta_L < \Delta_0$, importance weight $r \in (0, 1)$, scheduling parameter $m \geq 1$, Number of baselines for the well-separated regime $K_0 \in \mathbb{N}$.}
\KwOutput{Stopping time $N_\ASR^*$ of the adaptive e-SR procedure.}
\KwData{Data stream $X_1, X_2, \dots $ (observed sequentially)
 } 
Obtain $\{\lambda_0, \lambda_1,\dots  \lambda_{K_L}\}$, $\{\omega_0,\omega_1, \dots, \omega_{K_L}\}$, $\left\{g_{r\alpha}, K_L, \eta,W\right\}$ by executing \texttt{computeBaseline}($r\alpha, \Delta_L, \Delta_0, K_0$) in \cref{alg::compute_base}. \;
$s \leftarrow \zeta^{-1}\left(1 +  \left[\alpha - W\right] e^{g_{r\alpha} / \eta}\right) $ \tcc{$\zeta (s) \approx1 +  \left[1-r\right]\alpha e^{g_{r\alpha} / \eta}$} 
$M_0^{\SR}(k) \leftarrow 0$ , $\psi_k \leftarrow \psi(\lambda_k)$, $\omega_k \leftarrow \omega_k \frac{W}{\alpha},~~\forall k = 0, 1, \dots, K_L$ \;
$M_0^\ASR \leftarrow 0$, $\gamma \leftarrow 1/\sum_{k = 0}^{K_L} \omega_k$, $n \leftarrow 0$\;

\While{$M_n^\ASR < 1/\alpha$}{
$n \leftarrow n + 1$\;
\tcc{Occasionally add a new baseline increment.}
$K_n \leftarrow K_L + \left\lceil m \log_{\eta} n \right\rceil$ \;
\If{$K_n  > K_{n-1}$}{
\For{$k = K_{n-1} + 1, \dots,K_n$}{
Compute $\Delta_{k}$ as the solution of 
        \(
                \psi^*\left(z\right) = \frac{g_k}{V_0\eta^{ k}} 
        \)
with respect to $z (> 0)$, 
where $V_0 := g_{r\alpha} / D_0$ and $g_k := g_{r\alpha} + s\eta\log\left(1 + k - K_L\right) $.\;
$M_{n-1}^\ASR(\mu_{k}) \leftarrow 0$, $\psi_k \leftarrow \psi(\lambda_k)$, $\omega_k \leftarrow \alpha^{-1}e^{- g_k/\eta}$ \;
}
$\gamma  \leftarrow \left(1/\gamma + \sum_{k=K_{n-1}}^{K_n}\omega_k\right)^{-1}$\;
}
Observe $X_n$\;
$M_n^{\SR}(k) \leftarrow \exp\left\{\lambda_k s(X_n) - \psi_k v(X_n)\right\}\cdot \left[M_{n-1}^\SR (k) + \gamma\right],~~\forall k = 0,1,\dots, K_n$ \; \label{line::adap_SR_update}
$M_n^\ASR \leftarrow \sum_{k=0}^{K_{n}} \omega_k M_n^\SR(k)$\;
}
$N_\ASR^* \leftarrow n$ \;
\Return The stopped time $N_\ASR^*$
 \caption{ Pseudo-code of the adaptive e-SR procedures}
 \label{alg::adap_SR}
\end{algorithm*}

In \cref{alg::adap_SR}, we provide the detailed steps for the adaptive e-SR procedure based on the boundary function in \eqref{eq::SR_boundary}. The algorithm can be easily modified for the adaptive e-CUSUM procedure by replacing the update in \cref{line::adap_SR_update} with the rule
\begin{equation}
    M_n^{\CS}(k) \leftarrow \exp\left\{\lambda_k s(X_n) - \psi_k v(X_n)\right\}\cdot \max\left\{M_{n-1}^\CS (k), \gamma\right\}.
\end{equation}
Also, for the adaptive e-CUSUM procedure, we can replace the threshold $1/\alpha$ with a smaller value $c_\alpha$ if we have enough information about the pre-change distribution.

In term of computational complexity, in \cref{alg::adap_SR} we set the scheduling function $K : \mathbb{N} \to \mathbb{N}$ as
\begin{equation}
    K(n) := K_L + \lceil m\log_\eta n \rceil,
\end{equation}
where $m \geq 1$ is a tuning parameter.  Therefore, for both adaptive e-SR and e-CUSUM procedures,  updates of statistics have $O(m\log_\eta n)$ time and space complexities at each time $n$. Although it is not a fully online algorithm, logarithm time and space complexities make it feasible to run adaptive e-SR and e-CUSUM procedures in most practical online settings.

From \cref{subSec::mixture_comp_validity}, we know that both procedures control the ARL by $1/\alpha$. The following theorem introduces explicit bounds on the worst average delays for both procedures.

\begin{corollary} \label{cor::explicit_upper_no_sep}
     Let $N_\ASR^*$ and $N_\ACS^*$ be stopping times corresponding to the adaptive e-SR  procedures in \cref{alg::adap_SR} and its and e-CUSUM variant, respectively. 
    Then, both procedures control ARL by $1/\alpha$. If we further assume that post-change observations $X_{\nu+1}, X_{\nu+2}, \dots$ are iid samples from a post-change distribution then the worst average delays for $N_\ASR^*$ and $N_\ACS^*$ can be upper bounded as
    \begin{equation}\label{eq::upper_bound_no_sep}
    \begin{aligned}
        &\max\left\{ \mathcal{J}_L(N_\ASR^*), \mathcal{J}_L(N_\ACS^*) \right\}\\ &\leq     \begin{cases} 
     \frac{g_{r\alpha}}{D(Q||\mathcal{P})}\frac{\psi^*\left(\Delta^\op\right)}{\psi^*\left(\Delta_0\right)} +  \frac{\mathbb{V}_{0,Q} \left[\log L_1^{(\lambda_0)}\right]}{\left[D(Q||\mathcal{P})\right]^2}\left[\frac{\psi^*\left(\Delta^\op\right)}{\psi^*\left(\Delta_0\right)}\right]^2 + 1 &\mbox{if } \Delta^\op\geq \Delta_0 \\
   \frac{g_{r\alpha}}{D(Q||\mathcal{P})} +  \frac{\mathbb{V}_{0,Q} \left[\log L_1^{(\lambda^\op)}\right]}{\left[D(Q||\mathcal{P})\right]^2} + 1    & \mbox{if } \Delta^\op \in (\Delta_L, \Delta_0) \\
    \frac{g_{r\alpha} + s\eta\log\left(1 + K^\op - K_L\right)}{D(Q||\mathcal{P})} +  \frac{\mathbb{V}_{0,Q} \left[\log L_1^{(\lambda^\op)}\right]}{\left[D(Q||\mathcal{P})\right]^2} +  \left[\frac{\psi^*\left(\Delta_L\right)}{\psi^*\left(\Delta^\op\right)} \frac{g_{r\alpha} + s\eta\log\left(1 + K^\op - K_L\right)}{g_{r\alpha}}\right]^{1/m}   &\mbox{if } \Delta^\op\leq \Delta_L
    \end{cases}.
    \end{aligned}
    \end{equation}
Note that $\eta, s > 1$ and $r \in (0,1)$ do not depend on the unknown $\Delta^\op$. 
\end{corollary}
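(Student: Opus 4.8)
The proof of \cref{cor::explicit_upper_no_sep} will combine three ingredients already assembled in this section: (i) the ARL control, which follows immediately from \cref{prop::adaptive_schme_yields_valid_e_detectors} and \cref{prop::ARL_control_e_detector} together with the fact that the chosen weights satisfy the summability condition \eqref{cond::convert_SR_fully_general} (the definition of $s$ via the $\zeta$-function equation is exactly what makes the total mass equal to $\alpha$); (ii) the general worst-average-delay bound of \cref{thm::upperbound_general_form}, which (using $\gamma_j\geq 1$ and $m=0$ for increments depending on $X_n$ only) reduces the task to bounding $\min_{j\geq 1}\bigl[\mathbb{E}_{0,Q}N_{1/\alpha}(j)+j-1\bigr]$; and (iii) the GLR-domination \cref{lem::condition_for_SR_to_glrt_general}, which lets us replace each $N_{1/\alpha}(j)$ by the cleaner stopping time $\bar N_g(j)$ and, after choosing $j=j^*$, by the oracle stopping time $N_*$ of \eqref{eq::oracle_upper_bound} whenever $\lambda_{K(j^*)}<\lambda^*<\lambda_0$. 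The plan is therefore to verify ARL control first (one line), and then, conditional on the location of $\Delta^*$ relative to the guessed interval $(\Delta_L,\Delta_0)$, to pick the right index $j$ in the outer minimization and apply the Wald-type inequality \eqref{eq::glr_bound_general}.

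For the \textbf{middle case} $\Delta^*\in(\Delta_L,\Delta_0)$, the well-separated machinery applies essentially verbatim: the first $K_L+1$ baselines were calibrated by \texttt{computeBaseline}$(r\alpha,\Delta_L,\Delta_0,K_0)$ with threshold $g_{r\alpha}$, and the reweighting $\omega_k\leftarrow \omega_k W/\alpha$ only rescales the mixture, so \cref{lemma::alg_1_result} (or its no-separation analogue \cref{lem::condition_for_SR_to_glrt_general} with $j$ small enough that $K(j)=K_L$) gives $N_{1/\alpha}(j)\leq \bar N_{g_{r\alpha}}$ with $\lambda^*\in(\lambda_{K_L},\lambda_0)$; then \eqref{eq::glr_bound_general} with $g=g_{r\alpha}$ and $j-1=0$ yields the stated bound. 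For the \textbf{upper case} $\Delta^*\geq\Delta_0$, one uses $j$ small again (so $K(j)=K_L$, $\lambda_0$ is the largest available parameter), notes that $N_*$ now compares $\sum_{i=1}^n\log L_i^{(\lambda_0)}$ against $g(V_0)=g_{r\alpha}$, and applies the Wald inequality with baseline $\lambda_0$ rather than $\lambda^*$; the drift is $\mathbb{E}_{0,Q}\log L_1^{(\lambda_0)}=\psi^*(\Delta_0)\sigma^2$ by \cref{prop::nonneg_div}-type algebra, and since $D(Q\|\mathcal P)=\psi^*(\Delta^*)\sigma^2$ the ratio $\psi^*(\Delta^*)/\psi^*(\Delta_0)$ appears exactly as the multiplicative penalty on both the leading and the variance term, producing the first branch of \eqref{eq::upper_bound_no_sep}.

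The \textbf{lower case} $\Delta^*\leq\Delta_L$ is where the adaptive tail of the construction does its work and is the main obstacle. Here we must take $j=j^*$ genuinely large: $j^*$ is the smallest integer with $\lambda_{K(j^*)}<\lambda^*$, so $K^*=K(j^*)=K_L+\lceil m\log_\eta j^*\rceil$, and from $\psi^*(\Delta_{K^*})=g_{K^*}/(V_0\eta^{K^*})$ with $g_{K^*}=g_{r\alpha}+s\eta\log(1+K^*-K_L)$ one must back out a usable bound on $j^*-1$ in terms of $\psi^*(\Delta_L)/\psi^*(\Delta^*)$. The key algebraic manipulation: $\Delta_{K^*}$ lies just below $\Delta^*$, $\Delta_{K^*-1}$ just above, so $\psi^*(\Delta^*)\approx \psi^*(\Delta_{K^*})=g_{K^*}/(V_0\eta^{K^*})$; combining with $\psi^*(\Delta_L)\geq D_0$-type comparisons and $V_0=g_{r\alpha}/D_0$ gives $\eta^{K^*-K_L}\asymp \tfrac{\psi^*(\Delta_L)}{\psi^*(\Delta^*)}\cdot\tfrac{g_{K^*}}{g_{r\alpha}}$, whence $K^*-K_L\leq \log_\eta\bigl(\tfrac{\psi^*(\Delta_L)}{\psi^*(\Delta^*)}\tfrac{g_{K^*}}{g_{r\alpha}}\bigr)$, and since $j^*\leq \eta^{(K^*-K_L)/m}$ (up to ceiling effects) we obtain $j^*-1\leq \bigl[\tfrac{\psi^*(\Delta_L)}{\psi^*(\Delta^*)}\tfrac{g_{r\alpha}+s\eta\log(1+K^*-K_L)}{g_{r\alpha}}\bigr]^{1/m}$, which is precisely the additive term in the third branch. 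The leading $g_{r\alpha}+s\eta\log(1+K^*-K_L)$ over $D(Q\|\mathcal P)$ and the variance term then come from applying \eqref{eq::glr_bound_general} to $N_*$ with threshold $g(V_0\eta^{K^*})=g_{K^*}$ and baseline $\lambda^*$. The delicate points to get right are the rounding in $\lceil m\log_\eta n\rceil$, making sure the implicit equation for $\Delta_k$ has the claimed monotone decreasing solutions so that $j^*$ is well-defined and finite, and checking that the importance-weight split $r\alpha$ versus $(1-r)\alpha$ indeed leaves enough mass for the tail baselines so that condition \eqref{cond::convert_SR_fully_general} holds with the chosen $s$; all three boil down to the $\zeta$-function identity defining $s$ and the monotonicity hypotheses on $g$, so no new idea beyond careful bookkeeping is needed once the three-case split is in place.
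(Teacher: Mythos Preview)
Your overall strategy matches the paper's exactly: reduce to $\min_{j\geq 1}\bigl[\mathbb E_{0,Q}N_{1/\alpha}(j)+j-1\bigr]$ via \cref{thm::upperbound_general_form}, pass to $\bar N_g(j)$ via \cref{lem::condition_for_SR_to_glrt_general}, then split into three cases according to the location of $\Delta^*$. The middle and upper cases are fine (one small slip in the upper case: the drift $\mathbb E_{0,Q}\log L_1^{(\lambda_0)}=\sigma^2\bigl(\lambda_0\Delta^*-\psi(\lambda_0)\bigr)$ is only $\geq \sigma^2\psi^*(\Delta_0)$, not equal, but the inequality goes the right way for the Wald bound).

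In the lower case, however, your bound on $j^*$ has a genuine directional error. You assert $\eta^{K^*-K_L}\leq \tfrac{\psi^*(\Delta_L)}{\psi^*(\Delta^*)}\cdot\tfrac{g_{K^*}}{g_{r\alpha}}$, but from the defining equations $\psi^*(\Delta_{K^*})=g_{K^*}/(V_0\eta^{K^*})$ and $\psi^*(\Delta_L)=g_{r\alpha}/(V_0\eta^{K_L})$ one has the \emph{exact} identity
\[
\eta^{K^*-K_L}=\frac{g_{K^*}}{g_{r\alpha}}\cdot\frac{\psi^*(\Delta_L)}{\psi^*(\Delta_{K^*})},
\]
and since $\Delta_{K^*}<\Delta^*$ gives $\psi^*(\Delta_{K^*})<\psi^*(\Delta^*)$, your claimed inequality holds in the \emph{reverse} direction. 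The approximation $\psi^*(\Delta^*)\approx\psi^*(\Delta_{K^*})$ is therefore unusable here, and this is not a ceiling-rounding issue. The paper's fix is to work with $K(j^*-1)$ rather than $K^*$: minimality of $j^*$ forces $\lambda_{K(j^*-1)}\geq\lambda^*$, hence $\psi^*(\Delta_{K(j^*-1)})\geq\psi^*(\Delta^*)$, which yields $V_0\eta^{K(j^*-1)}\leq g\bigl(V_0\eta^{K(j^*-1)}\bigr)/\psi^*(\Delta^*)$. Combining this with $j^*-1\leq\bigl[\eta^{K(j^*-1)-K_L}\bigr]^{1/m}$ and $1/(V_0\eta^{K_L})=\psi^*(\Delta_L)/g_{r\alpha}$, then replacing $g\bigl(V_0\eta^{K(j^*-1)}\bigr)$ by the larger $g\bigl(V_0\eta^{K^*}\bigr)$ via monotonicity, gives exactly the stated additive term. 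Note that working with $K^*-1$ instead would not rescue your route either: $K^*-1$ need not lie in the range of the scheduling function, so there is no guarantee that $\lambda_{K^*-1}\geq\lambda^*$.
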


\section{Application to real data and simulation study}\label{sec::examples}
\subsection{Bernoulli random variables with dependent, time-varying means} \label{subSec::example_bernoulli}

\paragraph*{Winning rates of the Cavaliers.} To illustrate how sequential change detection procedures based on e-detectors work, we revisit the example of the Cleveland Cavaliers, an American professional basketball team introduced in \cref{subSec::intro_Cav_example}. Instead of using Plus-Minus stats, in this example, we are monitoring the performance of the Cavaliers by keeping track of wins and losses over all the games. Let $X_1, X_2, \dots \in \{0,1\}$ be the sequence of win indicators during 2010-11 to 2017-18 regular seasons, where $X_i=1$ if the Cavaliers won game $i$. Though \cref{fig::cavs_win_rate} presents monthly and seasonal averages for the purpose of visualization, we use the underlying binary sequence to build a sequential change detection procedure. 

\paragraph*{Modeling winning probabilities as a dependent sequence of Bernoullis.} 

To detect a significant improvement of the performance of the Cavaliers, we assume that before an unknown changepoint $\nu \in \mathbb{N}\cup\{\infty\}$, the conditional average of winning probability given the sample history is less than or equal to $p_0:= 0.49$. That is, under any pre-change distribution $P$ we have $p_n := \mathbb{E}_{P,\infty}[X_n \mid \mathcal{F}_{n-1}] \leq p_0$. (For simplicity, $\mathcal F$ is taken to be the natural filtration of the data.) Thus, the pre-change class of distributions is
\[
\mathcal P := \{ (p_1,p_2,\dots) : p_i \leq p_0, ~ \forall i \geq 1\},
\]
where we parameterize each distribution $P$ over binary sequences by the sequence of conditional probabilities.

Our objective is to build mixtures of e-SR and e-CUSUM procedures tuned to quickly detect any significantly improved win rate larger than $q_0:=0.51$ after the changepoint. This can be modeled by assuming that after some changepoint $\nu$, the distribution $Q$ is such that $\mathbb{E}_{P,\nu,Q}\left[X_n \mid \mathcal F_{n-1}, n > \nu \right] := q_n \geq q_0$. Thus, we may think of the post-change class of distributions as being
\[
\mathcal Q := \{(q_1,q_2,\dots) : q_i \geq q_0, ~ \forall i \geq 1\}.
\]
In particular, this formalization allows for the winning probabilities to fluctuate over time before and after the changepoint (accounting for factors like form, injuries, etc.).

\paragraph*{Deriving exponential baseline processes.}
For each $\lambda > 0$, define a baseline increment process $L^{(\lambda)} := \{L_n^{(\lambda)}\}_{n \geq 1}$ as
\begin{equation} \label{eq::base_line_bernoulli}
    L_n^{(\lambda)} := \exp\left\{\lambda\left(X_n - p_0\right) - \psi_B(\lambda) \right\},
\end{equation}
where $\psi_B(\lambda) := \log\left(1 - p_0 + p_0 e^\lambda\right) - \lambda p_0$ is the Bernoulli cumulant generating function. Note that each $L^{(\lambda)}$ is a valid baseline increment as it satisfies the inequality~\eqref{eq::baseline_increment} in \cref{def::baseline_increment}. That is, under any pre-change distribution $P$, we have 
\begin{align*}
\mathbb{E}_{P,\infty}\left[L_n^{(\lambda)}\mid \mathcal{F}_{n-1}\right] 
&= \mathbb{E}_{P,\infty}\left[\exp\left\{\lambda X_n - \log\left(1 - p_0 + p_0 e^\lambda\right)\right\}\mid \mathcal{F}_{n-1}\right] \\
&= \frac{\mathbb{E}_{P,\infty}\left[e^{\lambda X_n} \mid \mathcal{F}_{n-1}\right] }{1 - p_0 + p_0 e^\lambda} = \frac{1 - p_n + p_n e^\lambda}{1 - p_0 + p_0 e^\lambda} \leq 1,~~\forall \lambda \geq 0.
\end{align*}

To derive exponential baseline processes, we first consider a simplified post-change distribution $Q$ where each post-change observation is identically distributed with $\mathbb{E}_{0,Q}\left[X_1\right] := q \geq q_0 > p_0$. In this case,  the optimal choice of $\lambda \geq 0$ given by
\begin{equation}
    \lambda^\op := \argmax_{\lambda \geq 0} \mathbb{E}_{0,Q}\exp\left\{\lambda(X_1 - p_0) - \psi_B(\lambda)\right\}.
\end{equation}
Since the baseline increment has the exponential structure, by \cref{prop::nonneg_div}, we have that
\begin{equation} \label{eq::divergence_bernoulli}
    D(Q||\mathcal{P}) := \mathbb{E}_{0,Q}\log L_1^{(\lambda^\op)} = \psi_B^*\left(q - p_0\right) = \mathrm{KL}(q||p_0),
\end{equation}
where $\mathrm{KL}(q||p_0)$ is the Kullback-Leibler (KL) divergence of Bernoulli distributions with parameters $q$ and $p_0$ written as
\begin{equation}
    \mathrm{KL}(q||p_0) := q \log \frac{q}{p_0} + (1-q) \log \frac{1-q}{1-p_0},
\end{equation}
for $q, p_0 \in (0,1)$. The appearance of the KL divergence in \eqref{eq::divergence_bernoulli} is not a coincidence as the baseline increment can be viewed as a re-parametrized likelihood ratios between two Bernoulli processes. However, the simple geometric structure of the baseline increment make it possible to utilize a prior knowledge about the post-change distribution via \cref{alg::mixture_SR}~and~\ref{alg::adap_SR}.

For instance, suppose we know upper and lower bounds of conditional means of the post-change distribution as $q_n \in (q_L, q_U), \forall n > \nu$. Let $N_{\MSR}^*$ and $N_{\MCS}^*$ be stopping times of mixtures of e-SR and e-CUSUM procedures in \cref{alg::mixture_SR}. In this case, derived sequential change detection procedures do not rely on a specific choice of a post-change distribution $Q \in \mathcal{Q}$. However, these procedures can still perform almost as well as the one optimized to a specific choice of the post-change distribution within the same range $(q_L, q_U)$. Typically, if the post-change observations are iid samples from a post-change distribution $Q$ with $\mathbb{E}_{0,Q}\left[X_1\right] := q \in (q_L, q_U)$, then by \cref{thm::upper_bound_well_sep_general}, the worst average delays have the following explicit bound:
\begin{align*}
   \max\left\{ \mathcal{J}_L(N_\MSR^*), \mathcal{J}_L(N_\MCS^*) \right\}
    \leq \frac{g_\alpha}{\mathrm{KL}(q||p_0)} +  \frac{q(1-q)\left[\log\left(\frac{1-p_0}{p_0}\frac{q}{1-q}\right)\right]^2}{\left[\mathrm{KL}(q||p_0)\right]^2} + 1.
\end{align*}

Typically for small $\alpha \ll 1$, from \cref{prop::upper_bound_on_g}, we can simplify the above upper bound as
\begin{align*}
   \max\left\{ \mathcal{J}_L(N_\MSR^*), \mathcal{J}_L(N_\MCS^*) \right\}
    \lessapprox\frac{\inf_{\eta > 1} \eta \left[\log(1/\alpha) + \log\left(1 + \left\lceil \log_\eta \frac{\mathrm{KL}(q_U||p_0)}{\mathrm{KL}(q_L||p_0)}\right\rceil \right)\right]}{\mathrm{KL}(q||p_0)},
\end{align*}
which matches the rate of the worst average delays, $O\left(\log(1/\alpha) / \KL(q||p_0)\right)$ of the oracle sequential change detection procedure as $\alpha \to 0$.

\begin{figure*}
    \begin{center}
    \includegraphics[scale =  0.5]{./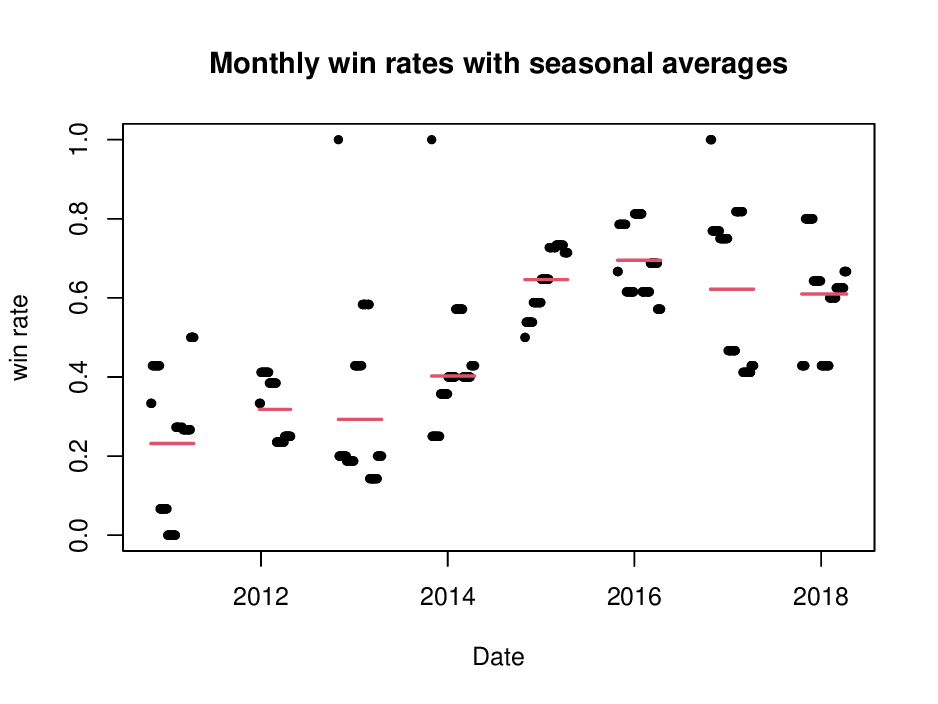}%
    \includegraphics[scale =  0.5]{./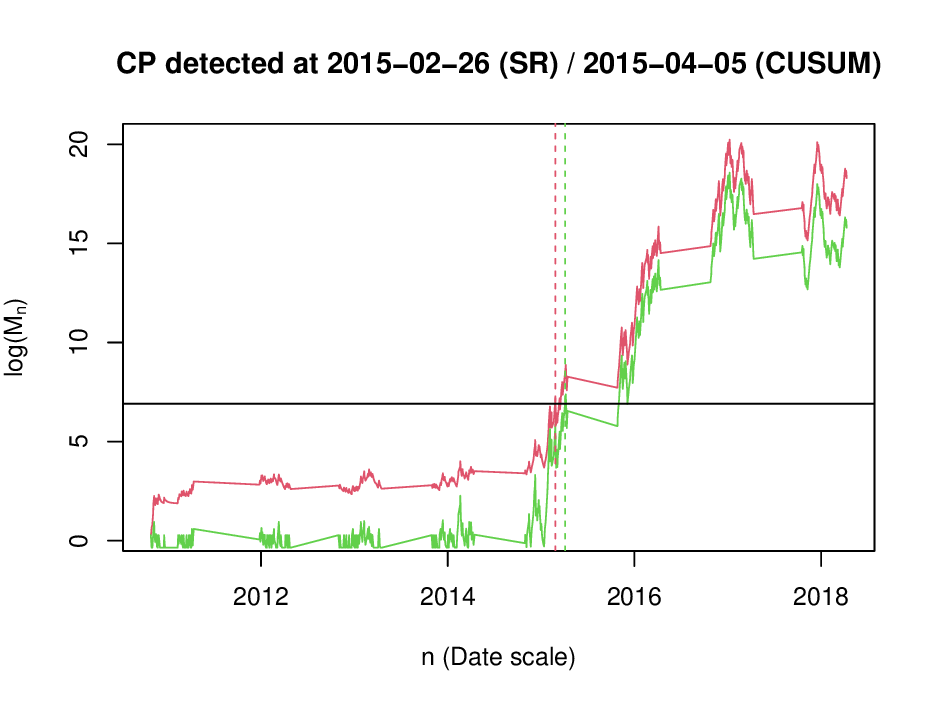}
    \end{center}
    \caption{\em Left: Monthly win rates of the Cavaliers from 2010-11 to 2017-18 seasons (the raw data is Bernoulli, which is harder to visualize). Each red line corresponds to the seasonal average. Right: Paths of log e-detectors (SR: red; CUSUM: green). The horizontal line is the threshold (common to both procedures) equal to $\log(1/\alpha)$, ensuring that the ARL is at least $1/\alpha = 10^{3}$, larger than the number of games in 12 seasons (82 per season). The e-SR procedure detects a changepoint during the 2014-15 season.}
    \label{fig::cavs_win_rate}
\end{figure*}

\paragraph*{Implementation of \cref{alg::mixture_SR} and its results.} The lower bound $q_L$ can be chosen as $q_0 = 0.51$ since it is the minimum winning rate we consider as a significant improvement from before the changepoint, when the rates are upper bounded by $p_0 = 0.49$. We can also safely assume that the win rate cannot be too  high given the competitiveness of the NBA, so that the improved win rates cannot be larger than $0.9$. In our framework, these considerations can be encoded by setting $\Delta_L :=q_0 - p_0 =0.02$ and $\Delta_U := 0.41$ as input parameters of \cref{alg::mixture_SR}. As in~\cref{subSec::intro_Cav_example}, we set $\alpha := 10^{-3}$ to ensure that the ARL is at least $1/\alpha := 10^3$, which is more than the total number of games over 12 years of regular seasons. 
Finally, we set the maximum number of baselines $K_{\max} := 1000$. In fact, the \texttt{computeBaseline} function of \cref{alg::compute_base} returns only 69 baseline processes, and thus the resulting mixtures of e-SR and e-CUSUM procedures of \cref{alg::mixture_SR} can be computed efficiently in an online fashion.

The right plot in \cref{fig::cavs_win_rate} presents the log e-detector values using mixtures of e-SR (red) and e-CUSUM (green) procedures. Although there were a few months in which monthly win rates were higher than $p_0$, overall log e-detectors remained at a stable level over the first four seasons. However, after the 2014-15 season starts, the  log e-detectors increase rapidly and both procedures detect a changepoint during the 2014-15 season, which is the season that marked the return of LeBron James to the Cavaliers.

\subsection{Mean-shift detection in general bounded random variables} \label{subSec::bounded_rv}
\paragraph*{Plus-Minus of the Cavaliers revisited.}
We return to the Cavaliers 2011-2018 example from~\cref{subSec::intro_Cav_example}. Let $\tilde X_1, \tilde X_2,\dots$ be the sequence of Plus-Minus stats from each game. We assume that the average Plus-Minus of the team is less than or equal to $\mu_{<} := -1$ before the changepoint (if any), while after the changepoint it is greater than $\mu_{>}:= 1$. Here, the gap $|\mu_> - \mu_<|$ between averages of Plus-Minus in pre- and post-changes refers to the degree of improvement we consider as significant. 

For convenience, we first normalize the observed sequence. We assume that the absolute value of each Plus-Minus is bounded by $80$, meaning that no team beats another by over 80 points (such an extreme game has never happened in NBA history). Accordingly, define the normalized Plus-Minus, $X_n := (\tilde X_n + 80) / 160 \in [0,1]$ for each $n$. Then, the pre-change observations have conditional mean at most $\mep := (\mu_< + 80) / 160 = 0.494$ and the minimum gap to detect is equal to $\delta := |\mu_> - \mu_<| / 160 = 0.0125$.

\paragraph*{Modeling plus-minus stats as a bounded sequence with time-varying, dependent means.} After the normalization above, the Plus-Minus stats form sequence of bounded random variables $X_1, X_2,...$ on $[0,1]$. Each observation may have different distribution (due to seasonal effects, injuries, form, etc.), but we assume that all observations before an unknown changepoint $\nu$ have a mean less than or equal to a known boundary $\mep \in (0,1)$, when conditioned on the past sample history. That is, under any pre-change distribution $P$, we have $\mu_n := \mathbb{E}_{P,\infty}\left[X_n\mid \mathcal{F}_{n-1}\right] \leq m,~~\forall n\geq 1$. In other words, we use
\[
\mathcal{P} := \{ P: \mu_n \leq m, \forall n \geq 1 \},
\]
where other characteristics about $P$ (outside of its sequence of conditional means) are irrelevant. But after the changepoint, all observations have (conditional) mean larger than the boundary $\mep$ with the minimum gap equal to $\delta$. Thus,
\[
\mathcal{Q} := \{ P: \mu_n \geq m+\delta, \forall n \geq 1 \},
\]

To build an e-SR or e-CUSUM procedure, we need to choose a baseline increment. To derive it, we first consider a simplified setting where both pre- and post-change observations are independently and identically distributed with $\mathbb{E}_{P,\infty}[X] \leq m$ and $\mathbb{E}_{0,Q}[X] \geq m+\delta$, respectively. In this simplified case, we simply refer $P$ and $Q$ to marginal pre- and post-change distributions and $\mathcal{P}$ and $\mathcal{Q}$ to their collections. Then, define $\KL_{\inf}(Q;\mep) := \inf_{P \in \mathcal{P}} \KL(Q, P)$ to be the smallest KL divergence between $Q$ and $\mathcal P$. It is known (see, e.g., \citep{honda2010asymptotically, honda2015non}) that $\KL_{\inf}$ has the following variational representation:
\begin{equation}
   \KL_{\inf}(Q, \mep) =  \sup_{\lambda \in (0,1)}\mathbb{E}_{0,Q}\log\left(1 + \lambda\left(\frac{X}{\mep} - 1\right)\right) =: D(Q||\mathcal{P}).
\end{equation}
Accordingly, for each $\lambda \in (0,1)$, define the baseline increment $L^\lambda := \{L_n\}_{n \geq 1}$ as
\begin{equation}\label{eq:baseline.n=bounded.mean}
    L_n^\lambda := 1 + \lambda \left(\frac{X_n}{\mep} - 1\right),
\end{equation}
for each $n \in \mathbb{N}$. Though the baseline increment above has been derived in the simplified iid setting, it can be checked that $L^\lambda$ is also a valid baseline increment for the general time-varying, dependent means case since it is nonnegative whenever $X_n,\mep \in [0,1]$ as assumed in our setup, and for each pre-change distribution $P \in \mathcal{P}$, we have
\begin{align*}
\mathbb{E}_{P,\infty}\left[L_n^\lambda \mid \mathcal{F}_{n-1}\right] 
    &=  1 + \lambda \left(\frac{ \mathbb{E}_{P,\infty}\left[X_n \mid \mathcal{F}_{n-1}\right]}{\mep} - 1\right) \leq 1,
\end{align*}
where the inequality comes from the condition $\mu_n \leq \mep$ for any pre-change distribution. 

Interestingly. the the baseline increments in \eqref{eq:baseline.n=bounded.mean} correspond to rescaled increments of the capital process used in \cite{WaudbySmith2020EstimatingMO} to design test martingales for  confidence sequences of means of bounded random variables. Though the expressions are essentially identical, ours was obtained via a variational representation of the KL divergence between distributions of bounded random variables, while the derivation presented in \cite{WaudbySmith2020EstimatingMO} is based on a betting interpretation of hypothesis testing.

For any $Q \in \mathcal Q$, let $\lambda^\op$ be the optimal choice of $\lambda \in [0,1]$ given by
\begin{equation}
    \lambda^\op = \argmax_{\lambda \in [0,1])}\mathbb{E}_{0,Q}\log\left(1 + \lambda\left(\frac{X}{\mep} - 1\right)\right).
\end{equation}
Unfortunately, it is typically difficult to compute the optimal $\lambda^\op$ since it depends on the unknown post-change distribution $Q$ in a complicated way. In this case, we use a sub-exponential lower bound from~\cite{fan2015exponential,howard2021time}, given by
\begin{equation} \label{eq::lower_bound_bounded}
    \tilde{L}_n^\lambda := \exp\left\{\lambda \left(\frac{X_n}{\mep} - 1\right) - \psi_E(\lambda)\left(\frac{X_n}{\mep} - 1\right)^2\right\} \leq 1 + \lambda \left(\frac{X_n}{\mep} - 1\right)  = L_n^\lambda,
\end{equation}
where $\psi_E(\lambda) := -\log(1-\lambda) - \lambda$ for $\lambda \in (0,1)$. For each $\lambda \in (0,1)$, the process $\tilde{L}^\lambda$ is itself a valid exponential baseline increment with $s(x) := x/\mep - 1 $ and $v(x) := (x/\mep -1)^2$.

 The  lower bound in \eqref{eq::lower_bound_bounded} also implies the lower bound 
\begin{equation}
     \KL_{\inf}(Q, \mep) \geq \sup_{\lambda \in [0,1]} \left\{\lambda \mu - \psi_E(\lambda)\sigma^2\right\} = \sigma^2 \psi_E^*\left(\Delta^\op\right), 
\end{equation}
where $\psi_E^*(u):=u-\log(1+u)$ is the convex conjugate of $\psi_E$, while $\mu, \sigma^2$ and $\Delta^\op$   from~\cref{sec::mixture_simple_exponential} are:
\begin{align*}
    \mu &:=\mathbb{E}_{0,Q}s(X)  =\frac{\mathbb{E}_{0,Q}X - \mep}{\mep},\\ \sigma^2&:=\mathbb{E}_{0,Q}v(X) = \frac{ \mathbb{E}_{0,Q}(X-\mep)^2}{\mep^2},\\
    \Delta^\op &:= \frac{\mu}{\sigma^2} =  \frac{\mep\left[\mathbb{E}_{0,Q}X-\mep\right] }{\mathbb{E}_{0,Q}(X-\mep)^2}. 
\end{align*}
Noting that $\psi_E^*(u)\approx u^2/2$ for small $u$, we see that for small $\Delta^\op \ll 1$, one has
\begin{equation}
  \KL_{\inf}(Q, \mep) \gtrsim \frac{\left[\mathbb{E}_{0,Q}X-\mep\right]^2}{2\mathbb{E}_{0,Q}(X-\mep)^2}.
\end{equation}

Note that the oracle $\Delta^\op$ depends on the unknown post-change distribution only via first and second moments. Therefore, in contrast to the original set of baseline increments $\{L^\lambda\}_{\lambda \in (0,1)}$, the exponential baseline increments $\{\tilde{L}^\lambda\}_{\lambda \in (0,1)}$ that lower bound them allow us to more easily set a range $(\Delta_L, \Delta_U)$ to build mixtures of the e-SR and e-CUSUM procedures. For example, if we assume that the post-change distribution has mean at least $\mep + \delta$ for a positive $\delta$ then we can  upper and lower bound $\Delta^\op$ by
\begin{equation} \label{eq::bounds_of_delta_for_bounded_case}
 \Delta_L := \frac{\mep\delta}{(1-\mep)^2} \leq \Delta^\op \leq \frac{\mep(1-\mep)}{\delta^2} =: \Delta_U.   
\end{equation}

Now, given $\Delta_L$ and $\Delta_U$, we can use \cref{alg::mixture_SR} to run the mixture of e-SR or e-CUSUM procedure to detect the changepoint based on the exponential baseline baseline increments $\{\tilde{L}^\lambda\}_{\lambda \in (0,1)}$. It is also straightforward to build the corresponding mixtures of e-SR and e-CUSUM procedures for the original baseline increment $\{L^\lambda\}_{\lambda \in (0,1)}$ which is always more sample-efficient. 

\begin{figure*}
    \begin{center}
    \includegraphics[scale =  0.5]{./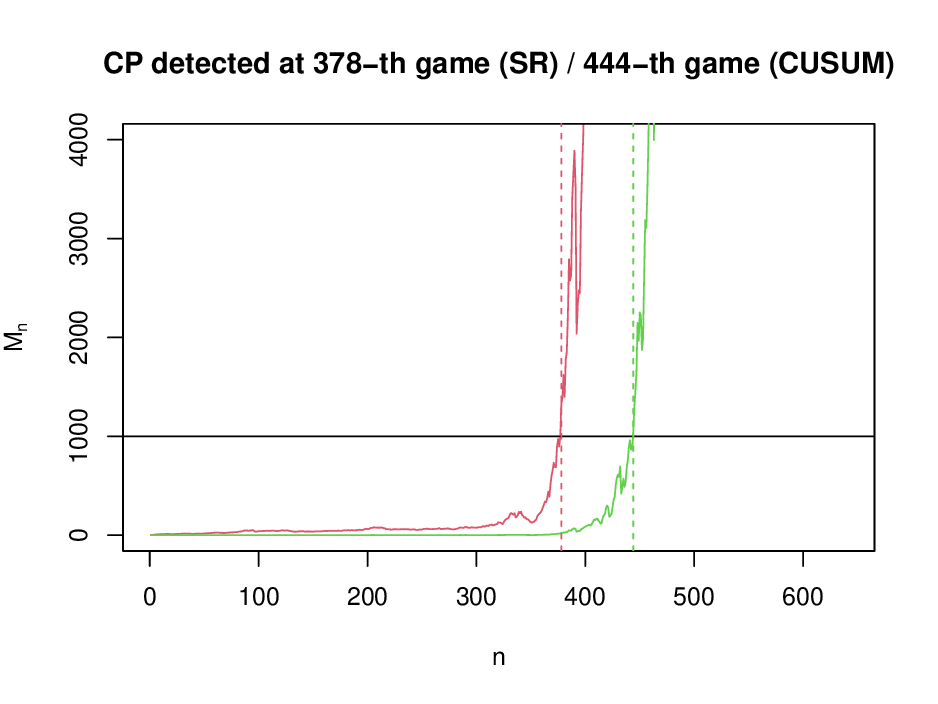}%
    \includegraphics[scale =  0.5]{./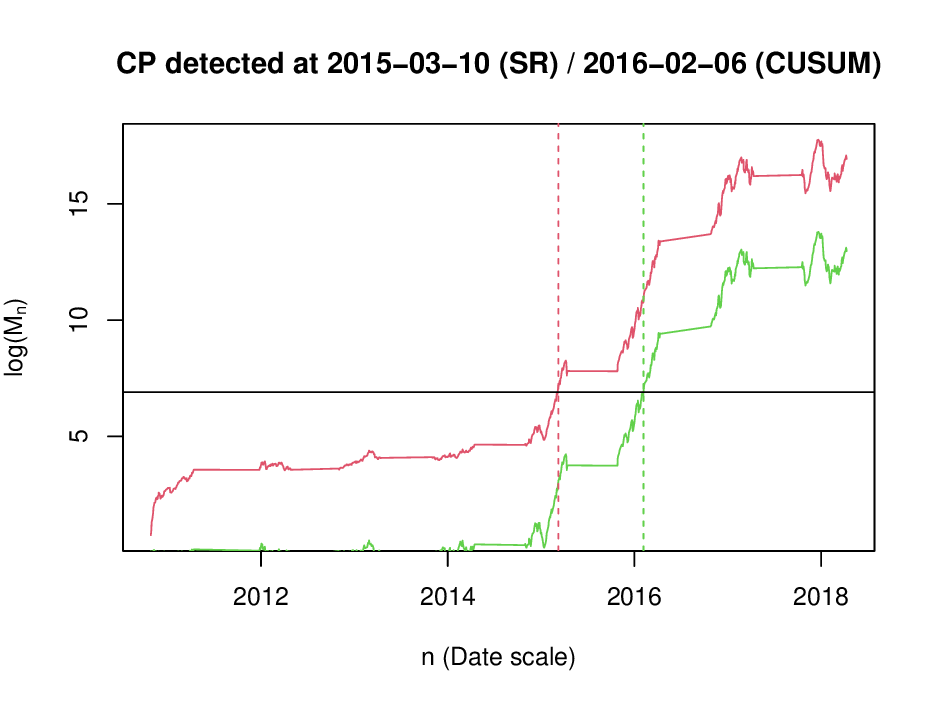}
    \end{center}
    \caption{\em Left: E-detectors (SR: red; CUSUM: green) over eight seasons (82 games per season). Right: Logarithm of e-detectors against date (the sharp rise of e-SR is simply due to the log scale). In both plots, horizontal lines are thresholds equal to $1/\alpha$ (left) and $\log(1/\alpha)$ (right) controlling the ARL by $1/\alpha = 10^{3}$. The e-SR procedure detects a change during the 2014-15 season, while e-CUSUM takes longer (as expected).}
    \label{fig::cavs_score_in_sec5}
\end{figure*}

\paragraph*{Implementation of \cref{alg::mixture_SR} and its results.}

Recall that in the plus-minus stats running example, we use pre-change mean $m = 0.494$ and the minimum gap $\delta = 0.0125$, which bounds $\Delta^\op$ by $\Delta_L :=  0.024$ and $\Delta_U:= \frac{\mep(1-\mep)}{\delta^2} = 1600$. As before, we choose $\alpha = 10^{-3}$ to make the ARL larger than 12 regular seasons and set the maximum number of baselines $K_{\max} = 1000$. Based on these parameters, we can build mixtures of e-SR and e-CUSUM procedures. Though the difference between $\Delta_L$ and $\Delta_U$ may seem to be large, the actual number of baselines returned by~the function \texttt{computeBaseline} in \cref{alg::mixture_SR} is 190, which is small enough to update the procedure efficiently on the fly.

 \cref{fig::cavs_score_in_sec5} shows e-detectors (left) and their logarithms (right). The horizontal line corresponds to the detection boundary given by $1/\alpha$ (left) and $\log\alpha^{-1}$ (right). Similar to the winning rate example, the log e-detectors remained stable during the first four regular seasons, although the difference between SR and CUSUM e-detectors is larger than before. After 2014-15 season started, both e-detectors increased rapidly, and the e-SR procedure detects a changepoint during the 2014-15 season, but e-CUSUM detects the changepoint only in the following season (as expected, since both procedures use the same threshold).

\subsection{Simulation-based comparison with parametric methods} \label{subSec::simulation}

In the Bernoulli example of \cref{subSec::example_bernoulli}, we showed that, in the simple i.i.d\ Bernoulli setup, our mixtures of  e-SR and e-CUSUM procedures match the rate of the worst average delays $O\left(\log(1/\alpha) / \KL(q||p_0)\right)$ of the oracle sequential change detection procedure as $\alpha \to 0$. 
In this subsection, we conduct a simulation study to compare the efficiency of our e-SR procedure with the oracle CUSUM procedure with the exact threshold \citep{moustakides1986optimal, ritov1990decision}, given by
\begin{equation} \label{eq::oracle_CS_formula}
\begin{aligned}
    N^*_\CS &:= \inf\left\{n \geq 1: \max_{0\leq j < n} \sum_{i=j}^n \log \frac{f_{p^*}(X_i)}{f_{p_0}(X_i)} \geq c_{\alpha}^*\right\},
\end{aligned}
\end{equation}
where $p^*$ is the true post-change distribution parameter (hence the oracle designation) and $c^*_\alpha$ is the value of the threshold so that the ARL is exactly $1/\alpha$.
It is well known that the oracle CUSUM procedure (with the appropriate choice of the stopping threshold that controls ARL exactly) minimizes the worst average delay. 

We also compare our method with a version of the GLR procedure based on the stopping time
\begin{equation} \label{eq::glr_formula}
\begin{aligned}
    N^*_\GLR &:= \inf\left\{n \geq 1: \max_{0\leq j < n} \sup_{p}\sum_{i=j}^n \log \frac{f_{p}(X_i)}{f_{p_0}(X_i)} \geq c_{\alpha}\right\} \\
   &= \inf\left\{n \geq 1: \max_{0\leq j < n}\sum_{i=j}^n \log \frac{f_{\hat{p}_{j:n}}(X_i)}{f_{p_0}(X_i)} \geq c_{\alpha}\right\},
\end{aligned}
\end{equation}
where each $\hat{p}_{j:n}$ is the MLE of the post-change parameter and the exact threshold $c_{\alpha}$ is tuned to control ARL exactly at $1/\alpha$ (this is typically only possible in such simple parametric settings, either by analytic derivations or simulations). Unlike the oracle CUSUM procedure, the GLR procedure does not have an iterative update rule, as we need to recompute the MLE of the post-change parameter at each time. As a result, its computational cost at time $n$ is $O(n)$, which makes an online implementation very costly. In practice, we may want to use a window-limited GLR procedure to overcome the computational challenge. However, in our study, we deploy the GLR procedure to avoid the additional challenge of picking a window size. 

\begin{figure*}
    \begin{center}
    \includegraphics[scale =  0.6]{./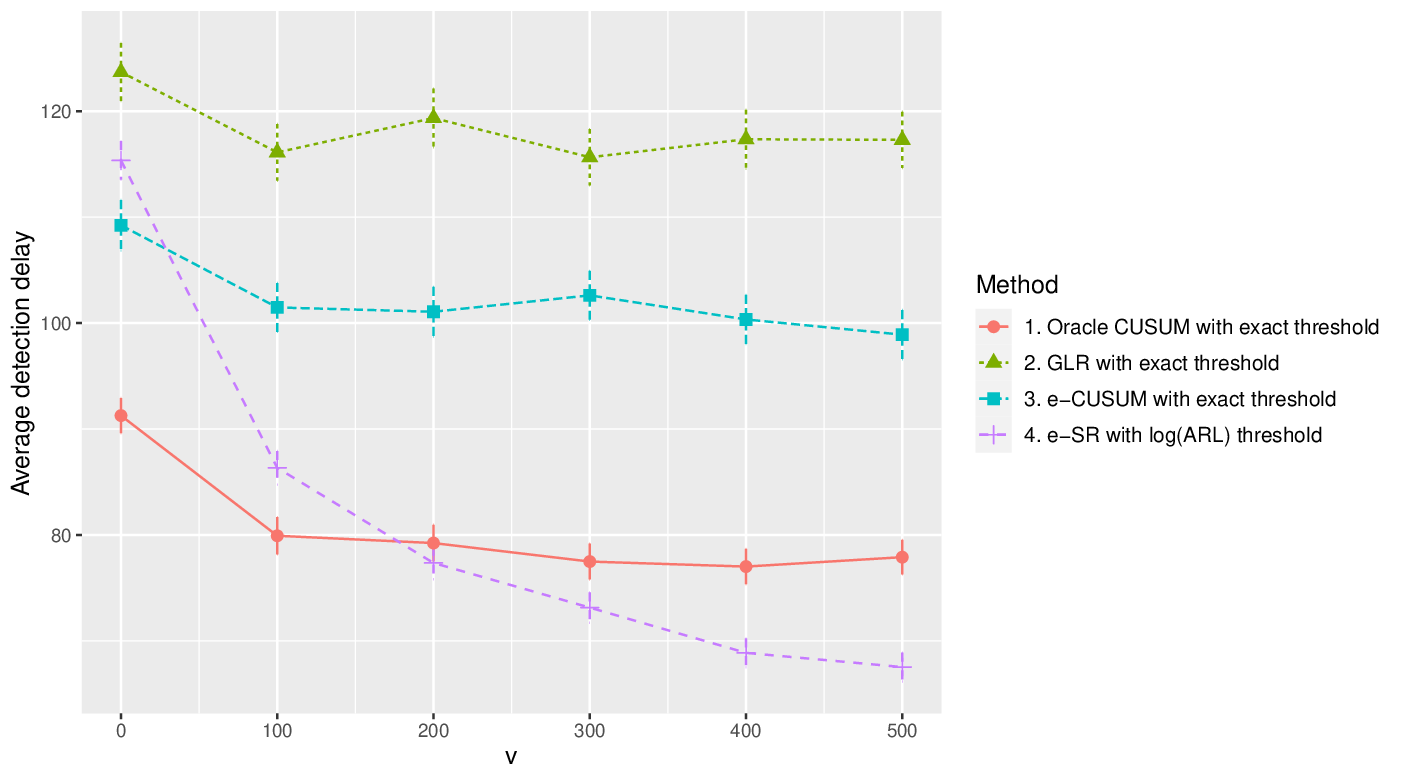}
    \end{center}
    \caption{\em  Average detection delay for each changepoint $\nu=0, 100, \dots, 500$ (each experiment has exactly one changepoint at $\nu$). Three of the methods use an exact threshold calculated via simulation (only possible in this simple, parametric example). Only the Oracle CUSUM method knows the post-change distribution. Even though e-SR uses the conservative $\log(1/\alpha)$ threshold, its detection delay is excellent, often even better than the Oracle CUSUM method (which has optimal average \emph{worst-case} (across $\nu$) delay).}
    \label{fig::simulation_wad}
\end{figure*}

\begin{figure*}
    \begin{center}
    \includegraphics[scale =  0.6]{./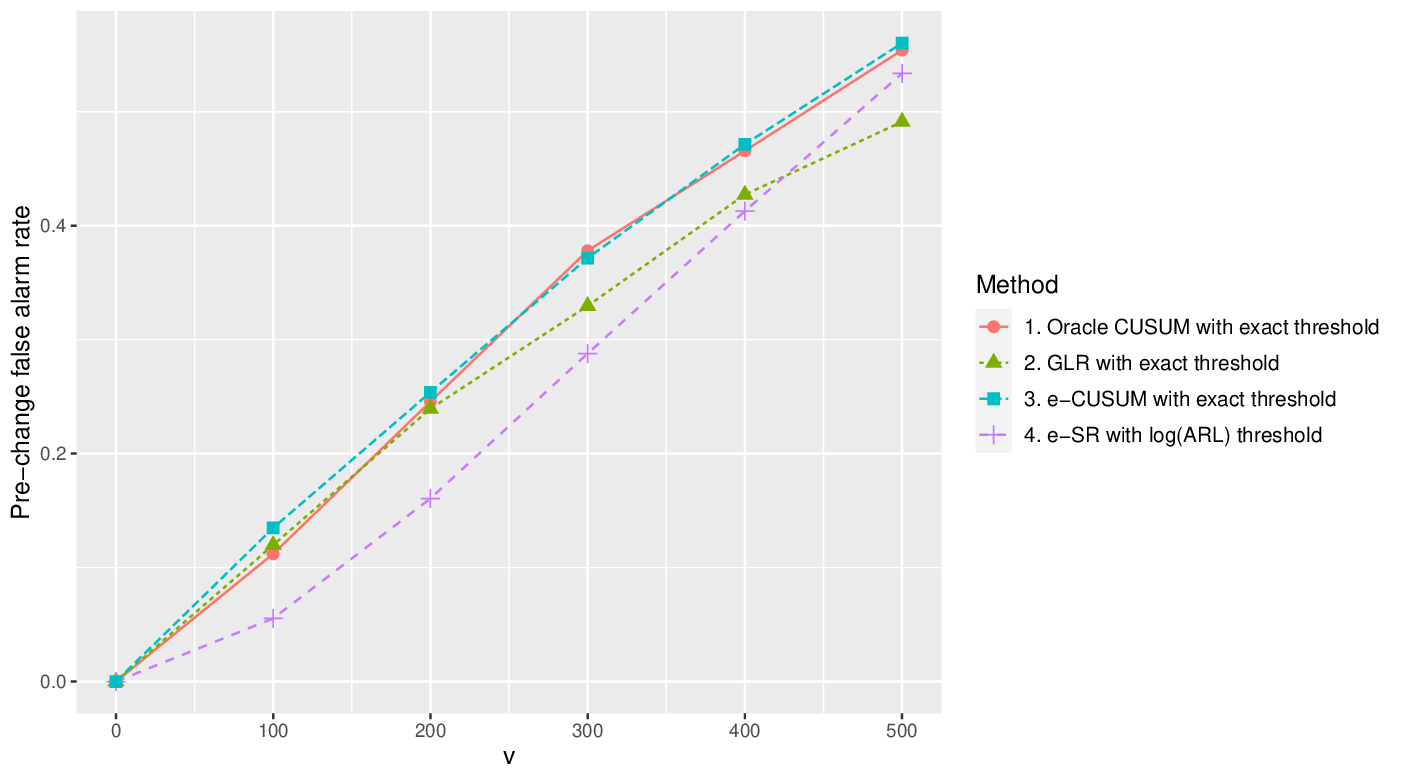}
    \end{center}
    \caption{\em Pre-change false alarm rates of detection methods for each changepoint $\nu=0, 100, \dots, 500$ (each experiment has exactly one changepoint at $\nu$). Three of the methods use an exact threshold calculated via simulation (only possible in this simple, parametric example). Only the Oracle CUSUM method knows the post-change distribution exactly. e-SR has the smallest false alert ratio (defined in the text).}
    \label{fig::simulation_false_alert_ratio}
\end{figure*}

\paragraph*{Simulation details.}
Throughout this simulation, we draw pre-change observations as iid Bernoulli random variables with $p_0 = 0.5$, and post-change observations using $p_1 = 0.6$. For non-oracle methods, we will only assume that the post-change parameter is known to be in the interval $[0.51, 0.99]$. The e-SR and e-CUSUM procedures in \cref{alg::mixture_SR} will use this range to set $\Delta_{L}:=0.01$ and $\Delta_{U}:= 0.49$. For the GLR procedure, the MLE of the post-change parameter is
\begin{equation}
    \hat{p}_{j:n} := \min\left\{\max\left\{\bar{X}_{j:n}, 0.51\right\}, 0.99\right\},
\end{equation}
where $\bar{X}_{j:n}$ is the sample average over last $n-j+1$ observations. Our ARL target is equal to $1/\alpha := 500$, and each simulation is repeated  $5000$ times to estimate average delays. The time of the changepoint $\nu$ varies in $\{0, 100, \dots, 500\}$. For simplicity, each run will end no later than time $n=1000$. 

For the oracle CUSUM, GLR, and e-CUSUM procedures (but not e-SR), we use the same simulation setup to find the exact threshold value that controls the ARL exactly $1/\alpha$ for each method. For the e-SR procedure, we simply use the universal threshold of $\log(1/\alpha)$ to demonstrate its efficiency. In particular, practitioners are not required to resort to expensive simulation to identify a good threshold value.

\cref{fig::simulation_wad} shows the average delays of oracle CUSUM, GLR, e-CUSUM and e-SR procedures for each changepoint $\nu\in \{0, 100, \dots, 500\}$ (each experiment has only one change at $\nu$). The vertical bar on each point represents 95\% confidence interval of the average delay. As the theory guarantees, the oracle CUSUM procedure with an exact threshold results in the smallest worst average delay ($91.3 \pm 1.7$), 
while surprisingly the GLR procedure with an exact threshold shows the largest worst average delay ($123.7 \pm 2.7$) despite its high computational complexity. The two e-detectors perform reasonably well, and the e-SR detector in particular performs quite favorably overall despite using its conservative $\log(1/\alpha)$ threshold. 

As the value of the changepoint approaches the ARL target of $500$, the average delays tend to decrease sharply for e-SR procedure, even falling below the oracle CUSUM method. This is plausibly because e-SR sums the underlying e-processes, in contrast to the other CUSUM-style procedures which take the maximum of the underlying e-processes. We may also intuitively expect the oracle CUSUM delay to be relatively flat across $\nu$ because it is known to be minimax optimal, in the sense of minimizing the worst case delay, and minimax procedures often have constant ``risk profiles''.  Proving these fine-grained behaviors is beyond the scope of the current paper.

\cref{fig::simulation_false_alert_ratio} illustrates the ``pre-change false alarm rate'': the fraction of simulation runs in which the detection procedures stopped \emph{before} the changepoint at $\nu$ (if a change occurs at $\nu = 0$, then it is zero by definition). This is not a common metric, since we provably control ARL at the target level. However, it is an interesting metric, so we plot it.  As the time of the true changepoint $\nu$ becomes closer to the ARL target of 500, false alarm rates increase across all methods. We notice that the e-SR procedure with the $\log(1/\alpha)$ threshold results in the smallest false alarm rates in most cases.

\section{Discussion}

\subsection{Game-theoretic interpretation of an e-detector} 
We briefly mention here a game-theoretic interpretation of an e-detector along the lines of the game-theoretic interpretations of martingales and supermartingales as the wealth of a gambler playing a fair game (well known since the time of~\cite{ville1939study}).
We first summarize the game-theoretic interpretation of a $\mathcal P$-e-process, as described in~\cite{ramdas2021testing}. 

The standard game-theoretic setup of~\cite{shafer2019game} involves three players: a forecaster, a skeptic, and reality. The forecaster claims at the beginning that $\mathcal P$ is a plausible model for the yet-to-be-observed data; meaning that the observations are in accordance with (or generated by) some $P \in \mathcal P$. The skeptic plays (in parallel) a family of games indexed by $P\in\mathcal P$ against nature and begins with one dollar in each game. The objective of the skeptic in the $P$-th game is to sequentially test whether $P$ is a good explanation for the data by betting against $P$. At each time step, the skeptic places fair bets (relative to $P$, in the $P$-th game) about the next outcome. Then nature reveals the next outcome, and the skeptic's wealth in every game is updated. The magnitude of the skeptic's wealth in the $P$-th game is direct evidence against $P$ being a good explanation; the higher the wealth, the more unlikely the data came from $P$. Thus in each game, the gambler places different bets, but nature's moves (the outcomes) are  identical across all games. The skeptic's overall evidence against $\mathcal P$ is measured by their \emph{worst} wealth across all the games. If this evidence exceeds $1/\alpha$, it means that the skeptic multiplied their initial capital by at least $1/\alpha$ in every game, and if we reject $\mathcal P$ when this happens, Ville's inequality implies that we have a valid level-$\alpha$ sequential test. 

Since our e-detectors are constructed to be cumulative sums of e-processes started at consecutive times, their game-theoretic interpretation builds on the aforementioned one. Informally, the forecaster not only claims that the data sequence follows $\mathcal P$ from the start, but that this will not change after some amount of time. The skeptic now wishes to detect a change, if one occurs, as soon as possible. To accomplish this task, the skeptic is provided with one extra dollar every day that they invest (using a $\mathcal P$ e-process) into testing whether the data from that day onwards is still explained well by $\mathcal P$.  E-detectors use the wealth in all these games (one against each $P \in \mathcal P$, starting at each time) as a measure of evidence against the forecaster's claims. The SR e-detector uses the sum (across time) of the minimum wealth (across $P$ at each time), though it could use the amount that this wealth exceeds $n$, which is the total dollar amount invested up to time $n$. The CUSUM e-detector uses the max-min wealth; the maximum (across time) of the minimum wealth (across $P$). These are only two ways of constructing e-detectors, and we leave other constructions to future work.

\subsection{Viewing Lorden's reduction to sequential testing as an e-detector}

Lorden \citep{lorden1971procedures} proposed a simple method to construct a change detection method with ARL control via a reduction to sequential testing. We describe this below, first defining a sequential test formally.

A {\it sequential test} $\phi$ is a mapping from increasing amounts of data to a sequence of zeros and ones, where a one represents a rejection of the null hypothesis, and a zero means ``continue collecting data''. Formally, define the decision at time $t$ as $\phi_t: \mathcal X^t \to \{0,1\}$, and let $\phi := \{\phi_t\}_{t \geq 0}$ be the collection of such decisions made one at a time based on the first $t$ datapoints, with $\phi_0=0$ by default. The sequence of tests $\phi$ is called a level-$\alpha$ sequential test for $\mathcal{P}$ if 
\[
\sup_{P \in \mathcal P} P(\exists t \geq 1: \phi_t(X_1,\dots,X_t) = 1) \leq \alpha ,
\]
i.e. if the probability of ever falsely rejecting the null is at most $\alpha$. By convention, if $\phi_t = 1$, we set $\phi_s = 1$ for $s > t$. This is equivalent to requiring that, for each $P \in \mathcal P$,
\[
 P(\phi_\tau(X_1,\dots,X_\tau) = 1) \leq \alpha \text{ for any stopping time }\tau.
\]

Let $\phi^{(j)}$ denote a sequential test is started at time $j$; that is, for $\phi^{(j)}$, the first observation is actually $X_j$ (and not $X_1$), but the test itself can depend on the first $j-1$ observations (for example, we can choose our betting strategy based the first $j-1$ points, even though our betting score will only be evaluated from time $j$ onwards). Note that $\phi^{(1)}$ is simply a standard sequential test as defined above. Ideally, these tests are powerful against alternative $\mathcal Q$.

Lorden's change detection procedure is simple and works as follows. At each time $t$, start a new sequential test $\phi^{(t)}$, in addition to the ones that are already running. In other words, consider a sequence of level-$\alpha$ sequential tests $\phi^{(1)}, \phi^{(2)}, \dots$, starting at consecutive times. Lorden declares a change if any of those sequential tests rejects the null $\mathcal P$:
\begin{align*}
    &N^{\text{Lorden}}:= \inf\{n \geq 1:\max_{1\leq j\leq n}\phi^{(j)}_n(X_{n-j+1},\dots,X_n) = 1 \}.
\end{align*}
Lorden proved that this method controls the ARL at $1/\alpha$ if the data are iid, and if  the same test $\phi^{(j)}$ is deployed at each $j$ (i.e.\ apart from the delayed start, the tests are identical).

We first observe that Lorden's method is a special case of an e-detector. Indeed, with each level-$\alpha$ sequential test $\phi \equiv \{\phi_t\}_{t \geq 0}$, we can associate an e-process $\Lambda^{\text{Lorden}}_t := \frac{\mathbf{1}(\phi_t = 1)}{\alpha} = \frac{\phi_t}{\alpha}$. Note that $\Lambda$ only takes on two values: $0,1/\alpha$, and when it reaches the latter, it stays there. Furthermore, note that $\mathbb{E}[\Lambda^{\text{Lorden}}_\tau] \leq 1$ for any stopping time $\tau$, which makes it an e-process as claimed.

Last, note that if we form a ``Lorden e-detector'' using either the SR or CUSUM methods in~\eqref{eq:SER:CS}, then both e-detectors start out at 0, and either e-detector jumps to level $1/\alpha$ if and only if one of the (delayed start) sequential tests rejects the null, and further our e-detector declares a changepoint at exactly the same instant that Lorden's does. Thus, Lorden's procedure can be subsumed within our e-detector framework without any loss of generality or performance. 

There are two benefits to viewing Lorden's method as an e-detector. First, we can dispense with both the aforementioned conditions that Lorden required to prove ARL control: the iid\ assumption, and the condition that the underlying tests $\phi^{(j)}$ are identical across $j$. Indeed, our main result guarantees ARL control for any e-detector no matter what the underlying $e_j$-processes are, or whether the data are iid\ or not.

Second, this viewpoint allows us to see why e-detectors could have much smaller detection delay than Lorden's method (that is, Lorden's e-detector). In Lorden's e-detector, there is no sharing of evidence across different $e_j$-processes: each sequential test acts alone without help from the others, and we need a single $e_j$-process to reach $1/\alpha$ before we can declare a change. When a general (say SR-type) e-detector crosses $1/\alpha$, the reason it does so will usually be because of a collaborative effect across various e-processes (caused by the nontrivial sum of $e_j$-processes in the definition of the e-detector), none of which have yet individually reached $1/\alpha$. This will happen much sooner than any individual one reaches that threshold, causing an earlier detection than Lorden's method. In fact, every level-$\alpha$ sequential test in some sense must be based on threshold an e-process at level $1/\alpha$ \citep{ramdas2020admissible} and using our e-detector with those underlying e-processes will be more statistically efficient (shorter delay) than directly using Lorden's reduction.

For the sake of future reference, we summarize the above observations below in a ``generalized Lorden's lemma'', whose proof follows immediately from the properties of an e-detector and the discussion above.

\begin{lemma}[Generalized Lorden's Lemma]
    Suppose the data initially come from a distribution in the pre-change class $\mathcal P$ and, if a change occurs, they later come from a distribution in the post-change class $\mathcal Q$ (note the lack of any iid\ assumption). For each $j$, let $\phi^{(j)}$ denote a (one-sided) level-$\alpha$ sequential test for $\mathcal P$ against $\mathcal Q$ that is started at time $j$ (but need not be identical or related in any way to any other $\phi^{(k)}$, for $k \neq j$). If we declare a change at the first time when any one of these sequential tests rejects the null, the resulting change detection procedure has ARL at most $1/\alpha$. Further, this generalization of Lorden's change detector is a special case of an e-detector.
\end{lemma}

\subsection{Future directions} 
There remain a whole host of follow-up directions; we mention only a few below. First, our sequential change detection framework can be straightforwardly generalized to the multi-stream setting where we are monitoring a large number of data streams. In the classical parametric setting, minimum or summation of local CUSUM statistics for multi-stream data were proposed and their asymptotic optimality was studied \citep{hadjiliadis2009one, mei2010efficient}. Since either minimum or scaled summation (average) of e-detectors also forms a valid e-detector, we can apply the framework in this paper to the multi-stream setting seamlessly. It is interesting to investigate how the framework can be even further generalized to structural multi-stream settings \citep{xie2013sequential, zou2019quickest, chen2020high}.

The kernel sequential change detection is an important class of sequential change detection methods \citep{desobry2005online,NIPS2015_eb1e7832}. 
It is an interesting open direction how to instantiate  existing kernel-based methods into our general framework to make it possible to analyze kernel sequential change detection algorithms in a nonasymptotic way. As it  currently stands, neither framework is more general than the other, because the kernel methods often assume iid data before the changepoint, while we abstain from such strong assumptions.

Last, throughout this paper, we have only focused on detecting \emph{whether} a changepoint happened or not but have not dealt with inferential questions surrounding \emph{when} the change occurred. 
Future work could study how to perform such inference with e-detectors in our nonparametric settings, either online or post-hoc.

\section{Summary} 

We have presented a general framework for sequential change detection based on a new concept called e-detectors. The proposed framework is nonparametric as it does not rely on a parametric assumption on the data-generating distribution (though, when such assumptions are made, we recover well-known parametric methods as special cases). Also, the framework comes with nonasymptotic guarantees, since every component of the framework can be chosen and analyzed explicitly without any asymptotic approximations. By introducing additional structures such as baseline increments and exponential e-detectors on the top of the general framework, we can construct computationally and statistically efficient online algorithms that have explicit upper bounds on worst average delays. Finally, through examples involving Bernoulli and bounded random variables, we explained how one can apply the presented framework in practical settings, with NBA data serving as a running case study.

\subsection*{Acknowledgments}
A. Ramdas was partially supported by NSF grants IIS-2229881 and  DMS-2310718. J. Shin and A. Rinaldo were partially supported by NSF grant DMS-EPSRC 2015489. 

\bibliography{myref}
\bibliographystyle{plainnat}

\newpage

\appendix

\section{Main proofs}

\subsection{Proofs for statements in Section~\ref{sec::general}} \label{appen::proofs_for_general}
 
\begin{proof}[Proof of \cref{prop::ARL_control_e_detector} (ARL control)]
For any given $\alpha \in (0,1)$ let $N^*$ be the stopping time defined by
\begin{equation}
    N^* := \inf\left\{n \geq 1 : M_n \geq 1/\alpha\right\}. 
\end{equation}
For any pre-change distribution $P \in \mathcal{P}$,  we may assume $N^* < \infty$ with probability one under $P$, without loss of generality. (If not, we have $\mathbb{E}_{P, \infty} N^*= \infty$, which immediately proves the claim.) Then, from the definition of an e-detector, we have
\begin{equation}
    \mathbb{E}_{P, \infty} \left[ M_{N^*}\right] \leq \mathbb{E}_{P, \infty}  N^* ,
\end{equation}
which implies that, for any $P \in \mathcal{P}$,
\begin{align*}
     \mathbb{E}_{P,\infty} N^* &\geq  \mathbb{E}_{P, \infty}\left[ M_{N^*}\right]\\
     & = \mathbb{E}_{P,\infty}  \left[ M_{N^*} \mathbbm{1}\left(N^* < \infty\right)\right]~~
     \\
     & \geq  \mathbb{E}_{P,\infty} \left[ \frac{1}{\alpha}\mathbbm{1}(N^* < \infty)\right]~~~  ~ ~ ~ \text{(by definition of $N^*$)} \\
     & = \frac{1}{\alpha} \mathbb{P}_{P,\infty}\left(N^* < \infty \right) 
     ~ = \frac{1}{\alpha},
\end{align*}
 as desired. 
\end{proof}

 \begin{proof}[Proof of \cref{prop::upperbound_simple_form} (Bounds on worst average delay)]
  We first prove \eqref{eq::delay_SR}. Note that 
 \begin{align*}
     N_\SR^* &= \inf \left\{n \geq 1 :  \sum_{j=1}^n  \prod_{i = j}^n L_i\geq 1 /\alpha \right\}\\
     & \leq \min_{j \geq 1}\inf \left\{n \geq j : \prod_{i = j}^n L_i\geq 1 /\alpha \right\}  := \min_{j \geq 1} N_j.
 \end{align*}
 For each fixed $P \in \mathcal{P}$ and $Q \in \mathcal{Q}$, since $N_j \indep \mathcal{F}_{\nu}$ for all $j > \nu + m $, we have 
 \begin{align*}
     & \mathbb{E}_{P,\nu,Q} \left[[N_\SR^* - \nu]_+ \mid \mathcal{F}_\nu \right] \\
     & \leq  \min_{j \geq 1} \mathbb{E}_{P,\nu,Q} \left[[N_j - \nu]_+ \mid \mathcal{F}_\nu \right] \\
      & \leq  \min_{j > \nu + m} \mathbb{E}_{P,\nu,Q} \left[N_j -\nu  \mid \mathcal{F}_\nu \right] ~~~~\text{(Since $N_j \geq j > \nu+m$)}\\
       & =  \min_{j > \nu + m} \mathbb{E}_{P,\nu,Q} \left[N_j\right] - \nu  ~~~~\text{(Since $N_j \indep \mathcal{F}_{\nu}, \forall j > \nu + m $)} \\  
      & = \min_{j > \nu + m}  \mathbb{E}_{P,\nu,Q} \left[ \inf \left\{n \geq j : \prod_{i = j}^n L_i \geq 1 /\alpha \right\}\right] - \nu\\ 
    & = \min_{j -\nu>  m}  \mathbb{E}_{0,Q} \left[ \inf \left\{n \geq j - \nu : \prod_{i = j-\nu}^n L_i \geq 1 /\alpha \right\}\right]\\
      & = \min_{j'> m}  \mathbb{E}_{0,Q} \left[ \inf \left\{n \geq j' : \prod_{i = j'}^n L_i \geq 1 /\alpha \right\}\right] ~~\text{(by $j' := j -\nu)$}\\
      & = \min_{j'> m}  \mathbb{E}_{0,Q} \left[ \inf \left\{n \geq  1 : \prod_{i = 1}^{n} L_i \geq 1 /\alpha \right\}\right] + j' - 1 \\
        & = \min_{j'> m}  \mathbb{E}_{0,Q} N_{1/\alpha} + j' - 1 \\
        & =  \mathbb{E}_{0,Q} N_{1/\alpha} + m,
 \end{align*}
where the third equality comes from the fact that the distribution of $X_{j-m}, X_{j-m+1}, \dots$ under $\mathbb{P}_{P, \nu, Q}$ is equal to the one of $X_{j-\nu-m}, X_{j-\nu-m+1}, \dots$ under $\mathbb{P}_{0,Q}$ provided that $j-m > \nu$, and the fifth equality is based on the strong stationarity of the post-change observations. Since the last term does not depend on $P$, $\nu$ or $\mathcal{F}_\nu$, we obtain the claimed result
 \begin{align*}
      \mathcal{J}_P(\st_\SR^*) \leq \mathcal{J}_L(\st_\SR^*)= \sup_{P \in \mathcal{P}, \nu \geq 0} \esssup  \mathbb{E}_{P,\nu,Q} \left[[\st_\SR^* - \nu]_+ \mid \mathcal{F}_\nu \right] \leq  \mathbb{E}_{0,Q} N_{1/\alpha} + m,
  \end{align*}
 as desired. 
 
 To prove \eqref{eq::delay_CS}, first note that 
  \begin{align*}
     N_\CS^* &= \inf \left\{n \geq 1 :  \max_{j \in [n]} \prod_{i = j}^n L_i   \geq c_\alpha \right\}\\
     & \leq \min_{j \geq 1}\inf \left\{n \geq j :\prod_{i = j}^n L_i \geq c_\alpha \right\} \\
     & := \min_{j \geq 1} N_j.
 \end{align*}
  The remaining part of the proof for \eqref{eq::delay_CS} is followed by the same argument for \eqref{eq::delay_SR}. 
 
 Finally, to prove \eqref{eq::simple_upper_SR}, it is enough to show the following inequality holds:
 \begin{equation}
 \label{eq::simple_upper_general}
      \mathbb{E}_{0,Q} N(g) \leq \frac{g}{\mathbb{E}_{0,Q}\log L_1} + \frac{\mathbb{V}_{0,Q} \log L_1}{\left[\mathbb{E}_{0,Q}\log L_1\right]^2} + 1,~~\forall g > 0,
 \end{equation}
 where $N(g) := \inf \left\{n \geq 1 : \sum_{i=1}^n \log L_i \geq g\right\}$ for each $g > 0$. The proof of the above upper bound \eqref{eq::simple_upper_general} is based on the Lorden's inequality \cite{lorden1970excess} which can be stated as follows:
\begin{fact}[Lorden's inequality \cite{lorden1970excess}]
Suppose $X_1, X_2, \dots$ are i.i.d.\ samples with $\mathbb{E}X_1 = \mu > 0$ and $\mathbb{E}X_1^2 <\infty$. For each $g >0$, set $N(g) := \inf\left\{\n: S_n := \sum_{i=1}^\n X_i \geq g \right\}$ and $R_g := S_{N(g)} - g$. Then, the following inequality holds:
\begin{equation}
\sup_{g > 0}\mathbb{E} \left[ R_g\right] \leq \frac{\mathbb{E}X_1^2}{\mu} = \mu + \frac{\sigma^2}{\mu}, 
\end{equation}
where $\sigma^2 = \mathbb{V}X_1$.
\end{fact}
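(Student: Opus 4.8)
The statement to establish is Lorden's bound on the first-passage overshoot of a positive-drift random walk: $\sup_{g>0}\mathbb{E}[R_g]\le \mathbb{E}[X_1^2]/\mu$, where $R_g=S_{N(g)}-g\ge 0$. (The displayed equality $\mathbb{E}[X_1^2]/\mu=\mu+\sigma^2/\mu$ is just the moment identity $\mathbb{E}[X_1^2]=\mu^2+\mathrm{Var}(X_1)$, so all the content is in the inequality.) This is a purely renewal-theoretic fact about first passage of a random walk with positive drift, and the plan is to prove it in two stages: first reduce to the case of nonnegative increments, then control the overshoot of the resulting renewal process uniformly in $g$.

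For the reduction I would pass to the strict ascending ladder structure. Since $\mu>0$, the ladder epochs are a.s.\ finite and the corresponding ladder heights $Y_1,Y_2,\dots$ are i.i.d.\ and strictly positive. Because $g>0$, first passage of $S_n$ above $g$ necessarily occurs at a ladder epoch, so $R_g$ coincides exactly with the overshoot of the renewal sequence $T_k:=Y_1+\dots+Y_k$ over $g$. It then suffices to bound this renewal overshoot and to transfer the resulting bound, which is expressed through the ladder height $Y$, back to the original increment $X_1$.

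For the renewal overshoot the key pointwise observation is that $T_{N(g)-1}<g\le T_{N(g)}$ forces $R_g\le T_{N(g)}-T_{N(g)-1}=Y_{N(g)}$: the overshoot is dominated by the length of the single renewal interval straddling level $g$, so it is enough to bound the expected straddling length $\mathbb{E}[Y_{N(g)}]$, the classical ``inspection paradox'' quantity. Decomposing $\mathbb{E}[Y_{N(g)}]=\sum_{k\ge1}\mathbb{E}\big[Y_k\,\mathbbm{1}(T_{k-1}<g\le T_{k-1}+Y_k)\big]$ and conditioning the $k$-th term on the pre-crossing position $T_{k-1}$ (independent of the fresh increment $Y_k$) reduces matters to integrating $\phi(s):=\mathbb{E}[Y\,\mathbbm{1}(Y\ge s)]$ against the renewal measure $U(dt):=\sum_{k\ge0}\mathbb{P}(T_k\in dt)$ over $t\in[0,g)$ with $s=g-t$. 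The second moment then appears through the elementary inequality $Y\,\mathbbm{1}(Y\ge s)\le Y^2/s$ for $s>0$, which gives $\phi(s)\le \mathbb{E}[Y^2]/s$; pairing this with the elementary renewal estimate for $U$ yields the target $\mathbb{E}[Y_{N(g)}]\le \mathbb{E}[Y^2]/\mathbb{E}[Y]$.

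The main obstacle is obtaining this last bound \emph{uniformly} in $g$, as the $\sup_{g>0}$ demands. The renewal-reward theorem readily delivers only the long-run average $\lim_{G\to\infty}G^{-1}\int_0^G\mathbb{E}[Y_{N(g)}]\,dg=\mathbb{E}[Y^2]/\mathbb{E}[Y]$; upgrading this to a pointwise supremum requires either a monotonicity argument showing $g\mapsto\mathbb{E}[Y_{N(g)}]$ increases from $\mathbb{E}[Y]$ to its stationary limit, or a direct uniform control of $U([g-s,g))$ (which is delicate for lattice laws). A secondary obstacle is the moment transfer across the ladder reduction, i.e.\ relating $\mathbb{E}[Y^2]/\mathbb{E}[Y]$ for the ladder height back to $\mathbb{E}[X_1^2]/\mu$ for the original increment; here I would follow the bookkeeping in Lorden's original renewal-equation argument, and I expect this to be the most technical step.
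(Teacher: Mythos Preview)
The paper does not prove this statement; it is recorded as a \emph{Fact} with a citation to Lorden (1970) and invoked as a black box in the proof of Proposition~2.7. So there is no in-paper argument to compare against, and your sketch is effectively an attempt to reconstruct Lorden's original proof.

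Your high-level plan (reduce to strict ascending ladder heights, then bound the renewal overshoot) is indeed Lorden's route. However, the specific mechanism you propose for the renewal step contains a genuine error, not merely an incompleteness you can patch by ``monotonicity'' or sharper control of $U$. You bound $R_g\le Y_{N(g)}$ and then aim for $\mathbb{E}[Y_{N(g)}]\le \mathbb{E}[Y^2]/\mathbb{E}[Y]$ uniformly in $g$. That target inequality is \emph{false}. Take $Y\in\{1,3\}$ with equal probabilities, so $\mathbb{E}[Y]=2$, $\mathbb{E}[Y^2]=5$, and $\mathbb{E}[Y^2]/\mathbb{E}[Y]=5/2$. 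For $g$ just above $2$ (equivalently $N(g)=\inf\{n:S_n\ge 3\}$), a direct enumeration gives
\[
\mathbb{E}\bigl[Y_{N(g)}\bigr]=\tfrac12\cdot 3+\tfrac14\cdot 3+\tfrac14\cdot \mathbb{E}[Y]=\tfrac{11}{4}>\tfrac52,
\]
while $\mathbb{E}[R_g]=\tfrac12\cdot 1+\tfrac14\cdot 2+\tfrac18\cdot 3+\tfrac18\cdot 1=\tfrac32<\tfrac52$, consistent with Lorden. Thus the slack in $R_g\le Y_{N(g)}$ is essential: you cannot discard it and still reach the constant $\mathbb{E}[Y^2]/\mathbb{E}[Y]$. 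Neither of your proposed fixes helps, since the counterexample shows $g\mapsto\mathbb{E}[Y_{N(g)}]$ is not bounded by its long-run average, and the crude estimate $\phi(s)\le\mathbb{E}[Y^2]/s$ produces a divergent integral against $U$ regardless of lattice structure.

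Lorden's actual argument for the renewal step works directly with $\mathbb{E}[R_g]=\int_{[0,g)}H(g-t)\,U(dt)$, where $H(s)=\mathbb{E}[(Y-s)^+]$, and exploits an integral identity together with Wald's equation rather than the inspection-paradox bound. For the ladder transfer $\mathbb{E}[Y^2]/\mathbb{E}[Y]\le\mathbb{E}[X_1^2]/\mu$ you are right that this is a separate nontrivial step; Lorden handles it via Wald-type second-moment identities for the first ladder epoch.
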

 
Now, to prove the upper bound~\eqref{eq::simple_upper_general}, fix a constant $g > 0$. Since $\mathbb{E}_{0,Q}\log L_1 > 0$, we have $\mathbb{E}_{0,Q}N(g) < \infty$. Therefore, by Wald's equation, 
\begin{equation}
    \mathbb{E}_{0,Q}\log L_1 \mathbb{E}_{0,Q}N(g) 
    = \mathbb{E}_{0,Q} \left[\sum_{i=1}^{N(g)}\log L_i\right].
\end{equation}
For each $g > 0$, set $R_g := \sum_{i=1}^{N(g)}\log L_i - g$. Then, from the Lorden's inequality, we have
\begin{align}
    \mathbb{E}_{0,Q}\log L_1 \mathbb{E}_{0,Q}N(g) 
    &= \mathbb{E}_{0,Q} \left[\sum_{i=1}^{N(g)}\log L_i\right]\nonumber\\
    & \leq c + \sup_{g > 0}\mathbb{E}_{0,Q} \left[ R_g\right]\nonumber \\
    & \leq g + \frac{\mathbb{V}_{0,Q} \log L_1}{\mathbb{E}_{0,Q}\log L_1} + \mathbb{E}_{0,Q}\log L_1, \label{eq::sample_complex_inner}
\end{align}
where the first inequality comes from the definition of $N(g)$. By multiplying $1  /\mathbb{E}_{0,Q}\log L_1$ on both sides of the inequality~\eqref{eq::sample_complex_inner}, we have the claimed upper bound, completing the proof. 
\end{proof}

\subsection{Proofs for statements in Section~\ref{sec::mixture_general}} \label{appen::proofs_for_mixture_general}

 \begin{proof}[Proof of \cref{prop::adaptive_schme_yields_valid_e_detectors} (Validity of adaptive e-detectors)]
    To see adaptive SR and CUSUM e-detectors are actually valid e-detectors, first note that $M_n^\ACS \leq M_n^\ASR$ for each $n \in \mathbb{N}$. Therefore, it is enough to show that $\mathbb{E}_{P,\infty} M_{\tau}^\ASR \leq \mathbb{E}_{P,\infty} \tau$ for any stopping time $\tau$ and pre-change distribution $P \in \mathcal{P}$. If $\mathbb{P}_{P,\infty}(\tau = \infty) > 0$ then the above inequality holds trivially. Otherwise if $\mathbb{P}_{P,\infty}(\tau = \infty) = 0$ , we have that
    \begin{align*}
        \mathbb{E}_{P,\infty} M_{\tau}^\ASR &= \mathbb{E}_{P,\infty} \sum_{k=1}^{ K(\tau )}  \omega_k \sum_{j= K^{-1}(k) }^{\tau } \gamma_j \prod_{i = j}^{\tau } L_i(k)  \\
        &:= \mathbb{E}_{P,\infty} \sum_{k=1}^{ K(\tau )}  \omega_k \sum_{j= K^{-1}(k) }^{\tau } \gamma_j \Lambda_{\tau }^{(j)}(k)\\ 
    & = \mathbb{E}_{P,\infty} \sum_{j=1}^{\tau } \sum_{k=1}^{K(j)}    \gamma_j \omega_k \Lambda_{\tau }^{(j)}(k)
        ~ =  ~ \sum_{j=1}^{\infty} \sum_{k=1}^{K(j)}    \gamma_j \omega_k  \mathbb{E}_{P,\infty}\Lambda_{\tau }^{(j)}(k)\mathbbm{1}(j \leq   \tau  ) \\
& = \sum_{j=1}^{\infty} \sum_{k=1}^{K(j)}    \gamma_j \omega_k  \mathbb{E}_{P,\infty}\mathbbm{1}(j \leq   \tau )\mathbb{E}_{P,\infty}\left[\Lambda_{\tau }^{(j)}(k)\mid \mathcal{F}_{j-1}\right] \\
& \leq \sum_{j=1}^{\infty} \sum_{k=1}^{K(j)}    \gamma_j \omega_k  \mathbb{E}_{P,\infty}\mathbbm{1}(j \leq   \tau) ~ = ~ \sum_{j=1}^{\infty} \mathbb{E}_{P,\infty}\mathbbm{1}(j \leq   \tau  )  ~ = ~ \mathbb{E}_{P,\infty} \tau ,
    \end{align*}
    as desired. Above, the sole inequality  follows since each $\Lambda^{(j)}(k)$ is an $e_j$-process, and the following equality invokes the definition of $\gamma_j$ for each $j$. 
 \end{proof}

 \begin{proof}[Proof of \cref{thm::upperbound_general_form} (Delay bounds for adaptive e-detectors)]
  We first prove the upper bound for the adaptive e-SR procedure in  \eqref{eq::delay_ASR}. Note that 
 \begin{align*}
     N_\ASR^* &= \inf \left\{n \geq 1 : \sum_{k=1}^{K(n)}  \omega_k \sum_{j=K^{-1}(k)}^n \gamma_j \prod_{i = j}^n L_i(k)\geq 1 /\alpha \right\}\\
     &\leq \inf \left\{n \geq 1 : \sum_{k=1}^{K(n)}  \omega_k \sum_{j=K^{-1}(k)}^n  \prod_{i = j}^n L_i(k)\geq 1 /\alpha \right\} \\
     &= \inf \left\{n \geq 1 : \sum_{j=1}^n \sum_{k=1}^{K(j)} \omega_k\prod_{i = j}^n L_i(k) \geq 1 /\alpha \right\}\\
     & \leq \min_{j \geq 1}\inf \left\{n \geq j : \sum_{k=1}^{K(j)} \omega_k\prod_{i = j}^n L_i(k) \geq 1 /\alpha \right\} \\
     & := \min_{j \geq 1} N_j,
 \end{align*}
 where the first inequality follows because  $\gamma_j \geq 1$ for each $j$.
 For each fixed $P \in \mathcal{P}$ and $Q \in \mathcal{Q}$, since $N_j \indep \mathcal{F}_{\nu}$ for all $j > \nu + m $, we have 
 \begin{align*}
     & \mathbb{E}_{P,\nu,Q} \left[[N_\ASR^* - \nu]_+ \mid \mathcal{F}_\nu \right] \\
      & \leq  \min_{j > \nu + m} \mathbb{E}_{P,\nu,Q} \left[N_j -\nu  \mid \mathcal{F}_\nu \right] ~~~~\text{(Since $N_j \geq j > \nu+m$)}\\
      & = \min_{j > \nu + m}  \mathbb{E}_{P,\nu,Q} \left[ \inf \left\{n \geq j : \sum_{k=1}^{K(j)}\omega_k\prod_{i = j}^n L_i(k) \geq 1 /\alpha \right\}\right] - \nu \\
      & \quad \quad \quad ~~~~\text{(since $N_j \indep \mathcal{F}_{\nu}, \forall j > \nu + m $)}\\ 
      & = \min_{j -\nu>  m}  \mathbb{E}_{0, Q} \left[ \inf \left\{n \geq j - \nu : \sum_{k=1}^{K(j)} \omega_k\prod_{i = j-\nu}^n L_i(k) \geq 1 /\alpha \right\}\right]\\
      & = \min_{j'> m}  \mathbb{E}_{0, Q} \left[ \inf \left\{n \geq j' : \sum_{k=1}^{ K(j' + \nu)} \omega_k\prod_{i = j'}^n L_i(k) \geq 1 /\alpha \right\}\right] \\
      & \quad \quad \quad ~~\text{(by setting $j' := j - \nu)$} \\
      & = \min_{j'> m}  \mathbb{E}_{0, Q} \left[ \inf \left\{n \geq  1 : \sum_{k=1}^{ K(j' + \nu)} \omega_k\prod_{i = 1}^{n} L_i(k) \geq 1 /\alpha \right\}\right] + j' - 1 \\
      & \leq \min_{j'> m}  \mathbb{E}_{0, Q} \left[ \inf \left\{n \geq  1 : \sum_{k=1}^{ K(j')} \omega_k\prod_{i = 1}^{n} L_i(k) \geq 1 /\alpha \right\}\right] + j' - 1 \\
        & = \min_{j'> m}  \mathbb{E}_{0, Q} N_{1/\alpha}(j') + j' - 1 ,
 \end{align*}
 where the second equality comes from the fact that the distribution of $X_{j-m}, X_{j-m+1}, \dots$ under $\mathbb{P}_{P,\nu, Q}$ is equal to the one of $X_{j-\nu-m}, X_{j-\nu-m+1}, \dots$ under $\mathbb{P}_{0, Q}$ provided by $j-m > \nu$ and the forth equality is based on the strong stationarity of post-change observations. Since the last term does not depend on $P$, $\nu$ nor $\mathcal{F}_\nu$, we have the claimed result:
 \begin{align*}
 \mathcal{J}_P(\st_\ASR^*) \leq \mathcal{J}_L(\st_\ASR^*)= \sup_{P \in \mathcal{P}, \nu \geq 0} \esssup  \mathbb{E}_{P,\nu, Q} \left[[\st_\ASR^* - \nu]_+ \mid \mathcal{F}_\nu \right] \leq  \min_{j > m}  \left[\mathbb{E}_{0,Q} N_{1/\alpha}(j) + j - 1\right],
 \end{align*}
 as desired.

 To prove the adaptive e-CUSUM procedure case in \eqref{eq::delay_ACS}, first note that 
  \begin{align*}
     N_\ACS^* &= \inf \left\{n \geq 1 : \sum_{k=1}^{K(n)}\omega_k  \max_{K^{-1}(k) \leq j \leq n} \gamma_j\prod_{i = j}^n L_i(k)   \geq c_\alpha \right\}\\
     &\leq \inf \left\{n \geq 1 :  \max_{j \leq n} \sum_{k=1}^{K(j)}\omega_k  \prod_{i = j}^n L_i(k)  \geq c_\alpha \right\}\\
     & \leq \min_{j \geq 1}\inf \left\{n \geq j : \sum_{k=1}^{K(j)} \omega_k\prod_{i = j}^n L_i(k) \geq c_\alpha \right\} \\
     & := \min_{j \geq 1} N_j,
 \end{align*}
 where the first inequality comes from $\gamma_j \geq 1$ with the fact  $K^{-1}(k) \leq j$ for any $k \leq K(j)$. The remaining part of the proof of \eqref{eq::delay_ACS}  follows the same argument used to obtain \eqref{eq::delay_ASR}.

 \end{proof}

\section{Remaining Proofs}


\subsection{Proofs for statements in Section~\ref{sec::mixture_simple_exponential}} \label{appen::proofs_for_mixture_simple_exponential}

\begin{proof}[Proof of \cref{prop::nonneg_div}]
To simplify notation, we drop the subscripts $\{0,Q\}$ and $1$. The claim reduces to proving that
\begin{equation}
      \mathbb{E} \log L^{(\lambda^\op)} 
       = \psi^*\left(\Delta^\op\right) \sigma^2,
\end{equation}
where $\log L^{(\lambda^\op)} :=   \lambda^\op s(X) - \psi(\lambda^\op) v(X)$ and $\Delta^\op = \nabla \psi(\lambda^\op)$. To prove the equality, first note that from the definition of the convex conjugate $\psi^*$ of $\psi$, we have  
\begin{equation}
    \psi^*(\Delta^\op) := \sup_{\lambda \in \Pi} \left\{\lambda \Delta^\op - \psi(\lambda)\right\} = \lambda^\op \Delta^\op - \psi(\lambda^\op), 
\end{equation}
which implies
\begin{align*}
    \mathbb{E} \log L^{(\lambda^\op)} = \lambda^\op \mu - \psi(\lambda^\op) \sigma^2
    =  \sigma^2 \left[\lambda^\op \Delta^\op - \psi(\lambda^\op) \right] 
    = \sigma^2 \psi^*(\Delta^\op),
\end{align*}
as desired. 
\end{proof}

\begin{proof}[Proof of \cref{thm::upper_bound_well_sep_general}]
We first recall the definition of the stopping time in \eqref{eq::glr_st_main}:
\begin{equation} 
        \bar{N}_g:= \inf\left\{n \geq 1: \sup_{\lambda \in (\lambda_L, \lambda_U)} \sum_{i = 1}^n \log L_i^{(\lambda)} \geq g \right\}, \quad g > 0.
    \end{equation}
   Then, the same argument used in the proof  \cref{prop::upperbound_simple_form} immediately implies that, i the post-change observations are i.i.d.\ from $Q$, then, for any $g>0$,  
    \begin{equation}
        \mathbb{E}_{0,Q} \bar{N}_g 
        \leq   \frac{g}{D(Q||\mathcal{P})} +  \frac{\mathbb{V}_{0,Q} \left[\log L_1^{(\lambda^\op)}\right]}{\left[D(Q||\mathcal{P})\right]^2} + 1 .
    \end{equation}

The claim of the theorem follows from Lemma~\ref{lemma::alg_1_result}, whose statement and proof are given below.
\end{proof}

 \begin{lemma} \label{lemma::alg_1_result}
Let $N_{1/\alpha}$ and $N_{c_\alpha}$ be stopping times where the underlying mixing weights $\{\omega_k\}$ and parameters of baseline increments $\{\lambda_k\}$ are chosen via~\cref{alg::compute_base}. Let $\bar{N}_{g_\alpha}$ be the stopping time defined in \eqref{eq::glr_st_main} with the threshold given by~\cref{alg::compute_base}. Then, for any stream of observations $X_1, X_2,\dots$, 
\begin{equation}
    N_{c_\alpha} \leq N_{1/\alpha} \leq \bar{N}_{g_\alpha},
\end{equation}
deterministically, provided that $1 < c_\alpha < 1/\alpha $. 
 \end{lemma}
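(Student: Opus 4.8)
The plan is to prove the two inequalities $N_{c_\alpha}\le N_{1/\alpha}$ and $N_{1/\alpha}\le\bar N_{g_\alpha}$ separately and pathwise. The first is essentially immediate: the quantity being thresholded in the definitions of $N_{c_\alpha}$ and $N_{1/\alpha}$, namely $\Sigma_n:=\sum_k\omega_k\prod_{i=1}^n L_i(k)$, is the same process, only the cutoff changing, so from $c_\alpha<1/\alpha$ we get $\{n:\Sigma_n\ge 1/\alpha\}\subseteq\{n:\Sigma_n\ge c_\alpha\}$ and hence $N_{c_\alpha}\le N_{1/\alpha}$ by taking infima. For the second inequality I would reduce to a purely pointwise claim: for every fixed $n\ge 1$, on the event that the generalized likelihood ratio statistic $\sup_{\lambda\in(\lambda_L,\lambda_U)}\sum_{i=1}^n\log L_i^{(\lambda)}$ has reached $g_\alpha$, one already has $\Sigma_n\ge 1/\alpha$. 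Taking infima over $n$ then yields $\bar N_{g_\alpha}\ge N_{1/\alpha}$, as the stopping rule defining $\bar N_{g_\alpha}$ triggers no earlier than the one defining $N_{1/\alpha}$.

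To prove this pointwise inclusion I would exploit the exponential structure \eqref{eq::exponential_base}. Writing $S_n:=\sum_{i=1}^n s(X_i)$ and $V_n:=\sum_{i=1}^n v(X_i)>0$ (strictly positive since $v>0$), we have $\sum_{i=1}^n\log L_i^{(\lambda)}=\lambda S_n-\psi(\lambda)V_n$ and $\prod_{i=1}^n L_i(k)=\exp\{\lambda_k S_n-\psi(\lambda_k)V_n\}$. Since $\psi$ is convex and continuously differentiable on $\Pi^{\mathrm{o}}$, the map $\lambda\mapsto\lambda S_n-\psi(\lambda)V_n$ is concave and unimodal in the ratio $\Delta_n:=S_n/V_n$; when its unconstrained maximiser (the solution of $\nabla\psi(\cdot)=\Delta_n$) lies in $(\lambda_L,\lambda_U)$ the constrained supremum equals $V_n\psi^*(\Delta_n)$. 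Because $\Sigma_n\ge\omega_k\exp\{\lambda_k S_n-\psi(\lambda_k)V_n\}$ for every $k$, it then suffices to exhibit a single index $k$ with $\lambda_k S_n-\psi(\lambda_k)V_n\ge\log\!\big(1/(\alpha\omega_k)\big)$, for then $\omega_k\exp\{\lambda_k S_n-\psi(\lambda_k)V_n\}\ge 1/\alpha$.

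The existence of such an index is exactly what the grid $\{\lambda_k\}$, the weights $\{\omega_k\}$ and the boundary value $g_\alpha$ returned by \texttt{computeBaseline} (\cref{alg::compute_base}) are engineered to guarantee, and verifying this is the step I expect to be the main obstacle. The argument I would use is the geometric covering argument of \citet[Theorem~2]{shin2021nonparametric}, transplanted from log-likelihood ratios to the log-baseline increments $\lambda s(\cdot)-\psi(\lambda)v(\cdot)$: the $\Delta_k$'s are placed on a geometric mesh on the $\psi^*$-scale (with ratio governed by the tuning parameter $\eta$) spanning $[\psi^*(\Delta_L),\psi^*(\Delta_U)]$, so that for the index $k$ matched to the observed variance $V_n$ one has $\lambda_k S_n-\psi(\lambda_k)V_n\ge\frac{1}{\eta}\sup_{\lambda\in(\lambda_L,\lambda_U)}\{\lambda S_n-\psi(\lambda)V_n\}$ — the factor $1/\eta$ being the combined price of snapping the optimal tilt onto the mesh and of the data crossing at a variance possibly larger than the grid scale — while simultaneously the weights are calibrated, through the defining relation for $g_\alpha$ and the normalisation $\sum_k\omega_k\le 1$, so that $\log\!\big(1/(\alpha\omega_k)\big)\le g_\alpha/\eta$ for that same $k$. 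On the crossing event $\sup_{\lambda}\{\lambda S_n-\psi(\lambda)V_n\}\ge g_\alpha$, chaining these two bounds gives $\lambda_k S_n-\psi(\lambda_k)V_n\ge g_\alpha/\eta\ge\log\!\big(1/(\alpha\omega_k)\big)$, hence $\Sigma_n\ge 1/\alpha$, which is what we need. The residual work — pinning down the exact geometric spacing produced by \cref{alg::compute_base}, checking that the matched index satisfies $\Delta_n\ge\Delta_k$ whenever the GLR boundary is crossed, and doing the attendant weight bookkeeping so that both displayed bounds hold with the stated constants — is routine given that machinery, but it is where all the care is concentrated.
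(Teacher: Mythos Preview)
Your proposal is correct and takes essentially the same approach as the paper: the first inequality is immediate from $c_\alpha<1/\alpha$, and for the second you invoke the same geometric covering argument from \citet{shin2021nonparametric}, matching a grid index $k$ to the observed variance $V_n$ and using the calibration of $\{\omega_k\}$ and $g_\alpha$ in \cref{alg::compute_base} so that the single term $\omega_k\exp\{\lambda_k S_n-\psi(\lambda_k)V_n\}$ already exceeds $1/\alpha$ on the GLR-crossing event. The only detail the paper spells out that you relegate to ``residual work'' is the separate treatment of the boundary regime $V_n<V_U:=g_\alpha/\psi^*(\Delta_U)$, where the index $k=0$ with threshold $g_\alpha$ (and weight $\omega_0\propto e^{-g_\alpha}$) rather than $g_\alpha/\eta$ is the one that fires.
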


\begin{proof}[Proof of \cref{lemma::alg_1_result} and \cref{alg::compute_base}]
Throughout this proof, we set $D_L := \psi^*\left(\Delta_L\right) < \psi^*\left(\Delta_U\right) =: D_U$. 
The first inequality $N_{c_\alpha} \leq N_{1/\alpha}$ follows directly from the definition of the stopping time in \eqref{eq::N_sr_sub_psi} along with the condition that $c_\alpha \leq 1/\alpha$. To prove the second inequality $N_{1/\alpha} \leq \bar{N}_{g_\alpha}$, we will exploit on general geometric construction introduced in \citep{shin2021nonparametric} to analyze the performance of sequential generalized likelihood ratio tests. To that effect,  set $\hat{\mu}_n := S_n / V_n$. Then, for each fixed $\lambda > 0$ such that  $\Delta = \nabla\psi(\lambda)$, \cref{prop::nonneg_div} and the identity $\lambda = \nabla \psi^*(\Delta)$ imply that the function $\hat{\mu}_n \mapsto V_n^{-1}\sum_{i=1}^n\log L_i^{(\lambda)} = \lambda \hat{\mu}_n - \psi(\lambda) = \lambda\left(\hat{\mu}_n - \Delta\right) + \psi^*(\Delta)$  is a mapping of $\hat{\mu}_n$ into the tangent line of the function $z\mapsto \psi^*(z)$ at $z = \Delta$. 
Next, set $V_U := g_\alpha / D_U $ and $V_L := g_\alpha / D_L$, and define the set 
\begin{equation}
    R := \left\{(z,y) \in [0,\infty)^2 : y \leq \psi^*(z)\right\}.
\end{equation}
Then, the stopping event of $\bar{N}_{g_\alpha}$ can be expressed as \begin{align*}
         &\left\{\exists n \geq 1: \sup_{\lambda \in (\lambda_L, \lambda_U)} \sum_{i = 1}^n \log L_i^{(\lambda)} \geq g_\alpha \right\}  \\
         & =\left\{\exists n \geq 1: V_n < V_U, \left(\hat{\mu}_n, \frac{g_\alpha}{V_n}\right) \in H(\Delta_U) \right\} \cup \left\{ \exists n \geq 1: V_n \geq V_U, \left(\hat{\mu}_n, \frac{g_\alpha}{V_n}\right) \in R \setminus H(\Delta_L) \right\} \\
         &= \left\{\exists n \geq 1 :V_n < V_U,  \sum_{i = 1}^n \log L_i^{(\lambda_U)} \geq g_\alpha \right\}  \cup \left\{\exists n \geq 1: V_n \geq V_U, \sup_{\lambda > \lambda_L} \sum_{i = 1}^n \log L_i^{(\lambda)} \geq g_\alpha \right\},
   \end{align*}
 $H(\Delta_U)$ and $H(\Delta_L)$ are half spaces contained in and tangent to $R$ at $\left(\Delta_U, g_\alpha / V_U\right)$ and $\left(\Delta_L, g_\alpha / V_L\right)$, respectively. See \cref{fig::GLR-CP} for an illustration of the stopping event of $\bar{N}_{g_\alpha}$.
   
   \begin{figure}
    \begin{center}
    \includegraphics[scale =  0.75]{./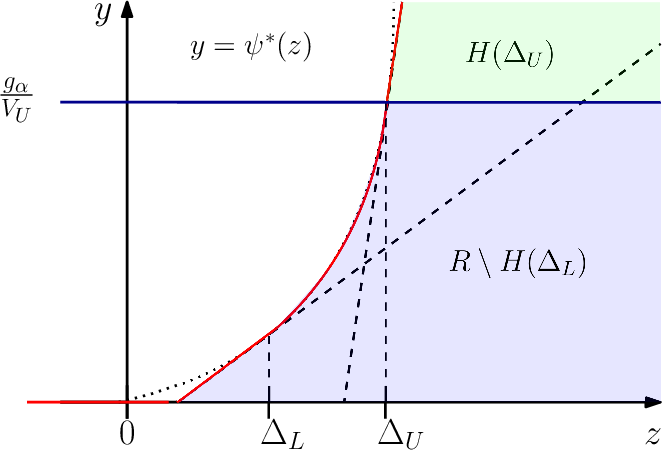}
    \end{center}
    \caption{Illustration of the stopping event of $\bar{N}_{g_\alpha}$ defined in \eqref{eq::glr_st_main}, and related regions $H(\Delta_U), H(\Delta_L)$ and  $R$. The stopping time $\bar{N}_{g_\alpha}$ is the first time when $(\hat{\mu}_n, g_\alpha / V_n)$ is located in one of the colored areas. } 
    \label{fig::GLR-CP}
\end{figure}
   
Note that the first decomposition part $\left\{\exists n \geq 1 : V_n < V_U,\sum_{i = 1}^n \log L_i^{(\lambda_U)} \geq g_\alpha \right\}$ is nonempty only if $V_U > \min_{x}v(x) := v_{\min}$, which is equivalent to $g_\alpha > v_{\min} D_U$. For the second part, a straightforward extension of Lemma 1 in the appendix of \citep{shin2021nonparametric} implies that, for any fixed $\eta > 1$, the second part can be further decomposed by sets of simple events as follows:
\begin{equation}
\begin{aligned}
    \left\{\exists n \geq 1: V_n \geq V_U, \sup_{\lambda > \lambda_L} \sum_{i = 1}^n \log L_i^{(\lambda)} \geq g_\alpha \right\}  
    & \subset \bigcup_{k=1}^{K(\eta) - 1}\left\{\exists n \geq 1: V_n \in \left[V_U \eta^{k-1}, V_U \eta^k \right), \sum_{i = 1}^n \log L_i^{(\lambda_k)} \geq g_\alpha / \eta \right\} \\
    &\qquad\cup \left\{\exists n \geq 1 : V_n \geq V_U \eta^{K(\eta) - 1},\sum_{i = 1}^n \log L_i^{(\lambda_{K(\eta)})} \geq g_\alpha / \eta  \right\},
\end{aligned}
\end{equation}
where $K(\eta)$ is a positive integer defined by
\begin{equation}
    K(\eta) := \left\lceil \log_{\eta} \left( \frac{D_U}{ D_L}\right)\right\rceil,
\end{equation}
 and, for $k = 1, \dots, K(\eta)-1$,  $\lambda_k$ is given by $\lambda_k := \psi^*\left(\Delta_k\right)$, with  $\Delta_k$ the solution with respect to $z > 0$ of the  equation 
\begin{equation} \label{eq::def_of_mu_k_appen}
    \psi^*(z)= \frac{D_U}{\eta^k},
\end{equation}
 while  $\Delta_{K(\eta)}:= \Delta_L$.
It can be checked that $ \lambda_U := \lambda_0 >\lambda_1 > \lambda_2 > \cdots > \lambda_{K(\eta)} = \lambda_L$. Decomposing the stopping event of $\bar{N}_{g_\alpha}$, we can lower bound the stopping time $\bar{N}_{g_\alpha}$ for any $\eta > 1$ as:
\begin{align*}
    \bar{N}_{g_\alpha}&= \inf\left\{n \geq 1: \sup_{\lambda \in (\lambda_L, \lambda_U)} \sum_{i = 1}^n \log L_i^{(\lambda)} \geq g_\alpha \right\} \\
    & \geq \inf \left\{n \geq 1: e^{-g_\alpha}  \prod_{i = 1}^n  L_i^{(\lambda_0)}\mathbbm{1}\left(g_\alpha > v_{\min}D_U\right)+\sum_{k=1}^{K(\eta)} e^{-g_\alpha/\eta}\prod_{i=1}^n L_{i}^{(\lambda_k)} \geq 1 \right\} \\
    & = \inf \Bigg\{n \geq 1: \alpha^{-1}e^{-g_\alpha} \mathbbm{1}\left(g_\alpha > v_{\min}D_U\right)\exp\left\{\lambda_0 S_n - \psi(\lambda_0) V_n\right\} \\ 
    &\left.\qquad\qquad\qquad\qquad+\sum_{k=1}^{K(\eta)} \alpha^{-1}e^{-g_\alpha/\eta} \exp\left\{\lambda_k S_n - \psi(\lambda_k) V_n\right\} \geq 1/\alpha \right\} \\ 
    &:= \inf \left\{n\geq 1 : \sum_{k = 0}^{K(\eta)}  \omega_k(\eta) \exp\left\{\lambda_k S_n - \psi(\lambda_k) V_n\right\}  \geq 1/\alpha \right\} ~ ~ ~ ~ := N(\eta),
\end{align*}
where $\omega_0(\eta) := \alpha^{-1}e^{-g_\alpha} \mathbbm{1}\left(g_\alpha > v_{\min}D_U\right)$ and $\omega_k(\eta):=\alpha^{-1}e^{-g_\alpha/\eta}$ for each $\eta > 1$ and $k = 1, \dots, K(\eta)$. As a quick remark, note that  $\omega_0(\eta)$ does not depend on $\eta$ and each $\omega_k(\eta)$ in fact does not depend on the index $k$ but we use this notation just for consistency. 

\begin{algorithm}
\DontPrintSemicolon

\KwInput{ARL parameter $\alpha \in (0,1)$, Boundary values $0<\Delta_L < \Delta_U$,\newline Maximum number of baselines $K_{\max} \in \mathbb{N}$.}
\KwOutput{Parameters of baseline increments $\lambda_U = \lambda_0 >  \lambda_1 > \cdots > \lambda_{K_{\alpha}} = \lambda_L$, \newline Mixing weights $\omega_0, \omega_1, \dots, \omega_{K_\alpha} \in [0,1]$, \newline Auxiliary values used to compute baseline increments $\left\{g_\alpha, K_\alpha, \eta_\alpha, W\right\}$.}

Compute parameters for boundary values by $\lambda_L := \nabla \psi^* (\Delta_L)$ and $\lambda_U := \nabla \psi^* (\Delta_U)$. 

\tcc{If the separation is large enough, use a single baseline.}
\If{$\log(1/\alpha) \leq v_{\min}\psi^*(\Delta_L)$}{
Set $K_\alpha := 1$, $\lambda_1 := \lambda_L$ and $\omega_1 := 1$ \;
\Return Parameter $\lambda_1$ and mixing weight $\omega_1$ \;
}

Compute the threshold $g_\alpha > \log(1/\alpha)$ given by
\begin{equation} \label{eq::g_alpha_explicit_main}
    g_\alpha:= \inf \left\{g > \log(1/\alpha): e^{-g} \mathbbm{1}(g > v_{\min}D_U) + \min_{k \in [K_{\max}]} k \exp\left\{-g\left(\frac{D_U}{D_L}\right)^{-1/k}\right\} \leq \alpha\right\}.
\end{equation}
(See \cref{alg::compute_threshold} for an explicit way to compute it) \;
Compute the number of baselines $K_\alpha \in \mathbb{N}$ by
\begin{equation} \label{eq::k_alpha}
    K_\alpha = \argmin_{k \in [K_{\max}]} k \exp\left\{-g_\alpha \left(\frac{D_U}{ D_L}\right)^{-1/k}\right\},
\end{equation}
where $D_L := \psi^*(\Delta_L) < \psi^*(\Delta_U) =: D_U$.\; \label{line::def_K_alpha_in_alg}
Compute the spacing parameter $\eta_\alpha :=  \left(\frac{D_U}{ D_L}\right)^{1/K_{\alpha}} $.\;

\tcc{Compute parameters of baseline increments and mixing weights}
Set $\lambda_0 := \lambda_U$ and $\lambda_{K_\alpha} := \lambda_L$.\;
\If{$K_\alpha \geq 2$}{
\For{$k = 1, \dots, K_\alpha - 1$}{
Compute $\Delta_k$ as the solution of the equation $\psi^*\left(z\right)= D_U \eta^{-k}$
with respect to $z > 0$.\;
Compute the $k$-th parameter as $\lambda_k := \nabla \psi^*\left(\Delta_k\right)$. 
}
}
Set $W:=e^{-g_\alpha}\mathbbm{1}\left(g_\alpha > v_{\min}D_U\right) + K_\alpha e^{-g_\alpha/\eta_\alpha}$  and compute mixing weights by 
\begin{equation}
\omega_0 = W^{-1} e^{-g_\alpha}\mathbbm{1}\left(g_\alpha > v_{\min}D_U\right)~~\text{and}~~\omega_k = W^{-1} e^{-g_\alpha/\eta_\alpha},~~ \forall k \in [K_\alpha].
\end{equation}
 \;

\Return $\left\{ \lambda_0, \lambda_1, \dots, \lambda_{K_{\alpha}}\right\}$, $\left\{\omega_0,\omega_1, \dots, \omega_{K_\alpha}\right\}$, $\left\{ g_\alpha, K_\alpha, \eta_\alpha, W\right\}$ \;
 \caption{ Pseudo-code of \texttt{computeBaseline} function}
 \label{alg::compute_base}
\end{algorithm}

Finally, set $\eta_\alpha := \left(\frac{D_U}{D_L}\right)^{1/K_\alpha}$ where $K_\alpha$ is the integer defined in \eqref{eq::k_alpha} of \cref{alg::compute_base}. It can be easily checked that $K(\eta_\alpha) = K_\alpha$. Therefore, once we choose $\eta = \eta_\alpha$ then, from the definition of the mixing weights in \cref{alg::compute_base}, we conclude that  $N(\eta_\alpha) \geq N_{1/\alpha}$ provided that
\begin{equation} \label{eq::validity_of_alg1}
    W := e^{-g_\alpha}\mathbbm{1}\left(g_\alpha > v_{\min}D_U\right) + K_\alpha e^{-g_\alpha / \eta_\alpha} \leq \alpha,
\end{equation}
where the constant $g_\alpha$ is the constant used in \cref{alg::compute_base} and given by 
\begin{equation} \label{eq::g_alpha_explicit}
    g_\alpha:= \inf \left\{g > \log(1/\alpha): e^{-g} \mathbbm{1}(g > v_{\min}D_U) + \min_{k \in [K_{\max}]} k \exp\left\{-g\left(\frac{D_U}{D_L}\right)^{-1/k}\right\} \leq \alpha\right\}.
\end{equation}
By the definition of  $ K_\alpha$, we can immediately check  that the inequality~\eqref{eq::validity_of_alg1} holds, which proves the claimed inequality $N_{1/\alpha} \leq N(\eta_\alpha) \leq \bar{N}_{g_\alpha}$, as desired. 

\end{proof}

We conclude this section with a formal proof of the validity of the upper bound in \eqref{eq::g_upper_explicit_main_sec3}.

 \begin{proposition}\label{prop::upper_bound_on_g}
      Let  $K_{\max}$ be a large enough integer such that
      
       \begin{equation}\label{eq:Kmax.large.enough}
            K_\alpha =  \argmin_{k \in \{1,\ldots, K_{\max}\}} k \exp\left\{-g_\alpha \left(\frac{D_U}{D_L}\right)^{-1/k}\right\} = \argmin_{k \in \mathbb{N}} k \exp\left\{-g_\alpha \left(\frac{D_U}{D_L}\right)^{-1/k}\right\}.
    \end{equation}
      Then, the quantity $g_\alpha$ \cref{alg::mixture_SR} specified in \eqref{eq::g_alpha_explicit_main} is such that
        \begin{equation} \label{eq::g_upper_explicit_appen}
            g_\alpha < \inf_{\eta > 1} \eta \left[\log(1/\alpha) + \log\left(1 + \left\lceil \log_\eta \frac{\psi^*(\Delta_U)}{\psi^*(\Delta_L)}\right\rceil \right)\right] .
        \end{equation}
    \end{proposition}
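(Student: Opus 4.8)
The plan is to bound the infimum $g_\alpha$ from \eqref{eq::g_alpha_explicit_main} above by exhibiting, for every admissible $\eta>1$, a point of its defining feasible set that is no larger than $\eta\bigl[\log(1/\alpha)+\log\bigl(1+\lceil\log_\eta(D_U/D_L)\rceil\bigr)\bigr]$, where I write $D_U:=\psi^*(\Delta_U)$ and $D_L:=\psi^*(\Delta_L)$ as in the proof of \cref{lemma::alg_1_result}. Fix $\eta>1$ and set $k_\eta:=\lceil\log_\eta(D_U/D_L)\rceil$, a positive integer since $D_U>D_L$; I will restrict attention to those $\eta$ with $k_\eta\le K_{\max}$, which (as explained at the end) includes the infimizing $\eta$ once $K_{\max}$ is large enough in the sense of \eqref{eq:Kmax.large.enough}. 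The inequality $k_\eta\ge\log_\eta(D_U/D_L)$ rearranges to $(D_U/D_L)^{1/k_\eta}\le\eta$, i.e.\ $(D_U/D_L)^{-1/k_\eta}\ge1/\eta$, which is the one algebraic fact driving the estimate. Define the candidate threshold $\bar g_\eta:=\eta\bigl[\log(1/\alpha)+\log(1+k_\eta)\bigr]$; this is exactly the expression inside the infimum in \eqref{eq::g_upper_explicit_appen}, after identifying $\lceil\log_\eta(D_U/D_L)\rceil$ with $\lceil\log_\eta(\psi^*(\Delta_U)/\psi^*(\Delta_L))\rceil$.

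Next I would verify that $\bar g_\eta$ lies in the feasible set of \eqref{eq::g_alpha_explicit_main}. First, $\bar g_\eta>\log(1/\alpha)$ because $\eta>1$, $\log(1/\alpha)>0$, and $\log(1+k_\eta)>0$. For the main constraint, bound the indicator term by $e^{-\bar g_\eta}$, and bound the inner minimum by its value at $k=k_\eta$ (legitimate since $k_\eta\le K_{\max}$): $\min_{k\in[K_{\max}]}k\exp\{-\bar g_\eta(D_U/D_L)^{-1/k}\}\le k_\eta\exp\{-\bar g_\eta(D_U/D_L)^{-1/k_\eta}\}\le k_\eta e^{-\bar g_\eta/\eta}$, the last step using $(D_U/D_L)^{-1/k_\eta}\ge1/\eta$ and $\bar g_\eta>0$. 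Since $\eta>1$ and $\bar g_\eta>0$ give $e^{-\bar g_\eta}<e^{-\bar g_\eta/\eta}$, the left-hand side of the constraint is strictly below $(1+k_\eta)e^{-\bar g_\eta/\eta}$, and since $\bar g_\eta/\eta=\log\bigl((1+k_\eta)/\alpha\bigr)$ one has $(1+k_\eta)e^{-\bar g_\eta/\eta}=\alpha$. Hence the constraint holds at $\bar g_\eta$, with strict inequality; because the function defining the feasible set in \eqref{eq::g_alpha_explicit_main} is left-continuous (its only discontinuity being the upward jump of $e^{-g}\mathbbm{1}(g>v_{\min}D_U)$ as $g$ increases past $v_{\min}D_U$), feasibility persists on a left-neighbourhood of $\bar g_\eta$, so $g_\alpha<\bar g_\eta$.

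Finally I would take the infimum over admissible $\eta$. Because $\eta\mapsto\bar g_\eta$ is piecewise of the form $\eta\cdot(\text{constant})$ — the constant changing only at the countably many $\eta$ at which $\lceil\log_\eta(D_U/D_L)\rceil$ jumps — its infimum over $\eta>1$ is attained at a point $\eta^\star=(D_U/D_L)^{1/k^\star}$ with $k^\star\in\mathbb N$, and equals the right-hand side of \eqref{eq::g_upper_explicit_appen}. Applying the bound $g_\alpha<\bar g_\eta$ from the previous paragraph at $\eta=\eta^\star$ yields the claim, provided $k_{\eta^\star}=k^\star\le K_{\max}$; this is precisely what the hypothesis \eqref{eq:Kmax.large.enough} is there to ensure — it forces the minimum over $[K_{\max}]$ in the definition of $g_\alpha$ to agree with the minimum over all of $\mathbb N$, which pins down the optimal index well within $[1,K_{\max}]$ (one checks that the minimizer $K_\alpha$ of \eqref{eq::k_alpha} and the minimizer $k^\star$ of $k\mapsto(D_U/D_L)^{1/k}[\log(1/\alpha)+\log(1+k)]$ differ by only a small additive discrepancy coming from the $\log k$ versus $\log(1+k)$). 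The inequality chain itself is elementary; I expect the only real care to be in this last step, reconciling the truncation parameter $K_{\max}$ with the minimizers appearing in $g_\alpha$ and in the target bound.
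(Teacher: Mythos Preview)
Your argument is correct and follows essentially the same route as the paper: for each $\eta>1$ verify that $\bar g_\eta=\eta[\log(1/\alpha)+\log(1+\lceil\log_\eta(D_U/D_L)\rceil)]$ is feasible for \eqref{eq::g_alpha_explicit_main} via the inequality $(D_U/D_L)^{-1/k_\eta}\ge 1/\eta$, then optimize over $\eta$. The only organizational difference concerns the $K_{\max}$ hypothesis, precisely the step you flagged as needing care: rather than restricting to $\eta$ with $k_\eta\le K_{\max}$ and then arguing that the infimizer $k^\star$ lands in that range, the paper invokes \eqref{eq:Kmax.large.enough} up front to replace $\min_{k\in[K_{\max}]}$ by $\min_{k\in\mathbb N}$ in the definition of $g_\alpha$, and then uses the identity $\min_{k\in\mathbb N}k\exp\{-g(D_U/D_L)^{-1/k}\}=\inf_{\eta>1}\lceil\log_\eta(D_U/D_L)\rceil\,e^{-g/\eta}$, so that every $\eta>1$ is admissible and your final reconciliation of $K_\alpha$ with $k^\star$ is unnecessary.
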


\begin{proof}[Proof of \cref{prop::upper_bound_on_g}]

Once $K_{\max}$ is large enough to satisfy \eqref{eq:Kmax.large.enough} then the constant $g_\alpha$ can be written as
        \begin{equation} \label{eq::g_alpha_for_large_K}
            g_\alpha = \inf \left\{g > \log(1/\alpha): e^{-g} \mathbbm{1}(g > v_{\min}D_U) + \inf_{\eta >1} \left\lceil\log_\eta \left(\frac{D_U}{D_L}\right)\right\rceil e^{-g / \eta} \leq \alpha\right\},
        \end{equation}
because the following equality holds for each $g > 0$ and $D_U > D_L$. 
\begin{equation}
    \min_{k \in \mathbb{N}} k \exp\left\{-g\left(\frac{D_U}{D_L}\right)^{-1/k}\right\} = \inf_{\eta >1} \left\lceil\log_\eta \left(\frac{D_U}{D_L}\right)\right\rceil e^{-g / \eta}. 
\end{equation}
Finally, to prove the claimed bound in \eqref{eq::g_upper_explicit_appen}, first note that from \eqref{eq::g_alpha_for_large_K}, we have $g_\alpha \leq g(\eta)$ where $g(\eta) > 0$ is given by
\begin{align*}
    g(\eta) &:= \inf\left\{g>\log(1/\alpha): e^{-g} + \left\lceil\log_\eta \left(\frac{D_U}{D_L}\right)\right\rceil e^{-g / \eta} \leq \alpha \right\} \\    
    & \leq \eta \left[\log(1/\alpha) + \log\left(1 + \left\lceil \log_\eta \frac{D_U}{D_L}\right\rceil \right)\right],
\end{align*}
for each $\eta > 1$. By taking infimum over $\eta > 1$, we get the upper bound in \eqref{eq::g_upper_explicit_appen}, as desired.
\end{proof}

\begin{proof}[Proof of \cref{lem::condition_for_SR_to_glrt_general}]
The proof of \cref{lem::condition_for_SR_to_glrt_general} is similar to the one of \cref{lemma::alg_1_result} except the previous threshold $g_\alpha$ being replaced with $g\left(V_0 \eta^{K(j)} \right)$. Also note that, in this proof, the terms $\Delta_0$ and $V_0$ play a similar role of  $\Delta_U$ and $V_U$ in the previous proof of \cref{lemma::alg_1_result}.

As same as the \cref{lemma::alg_1_result} case,  set $S_n := \sum_{i=1}^n s(X_i)$, $V_n := \sum_{i=1}^n v(X_i)$, and $\hat{\mu}_n := S_n / V_n$. Then, for each fixed $\lambda > 0$ with $\Delta = \nabla\psi(\lambda)$, the function $\hat{\mu}_n \mapsto V_n^{-1}\sum_{i=1}^n\log L_i^{(\lambda)} = \lambda \hat{\mu}_n - \psi(\lambda) = \lambda\left(\hat{\mu}_n - \Delta\right) + \psi^*(\Delta)$  is a mapping of $\hat{\mu}_n$ to the tangent line of the function $z\mapsto \psi^*(z)$ at $z = \Delta$. Now, since the boundary function $g$ is non-decreasing, the stopping event of $\bar{N}_g(j)$ can be bounded as
\begin{align*}
         &\left\{\exists n \geq 1: \sup_{\lambda \in (\lambda_{K(j)}, \lambda_0)} \sum_{i = 1}^n \log L_i^{(\lambda)} \geq g\left(V_0 \eta^{K(j)} \right) \right\}  \\
        & \subset\left\{\exists n \geq 1: V_n < V_0,  \sum_{i = 1}^n \log L_i^{(\lambda_0)} \geq g\left(V_0\right)  \right\} \\
         &\qquad\cup \left\{ \exists n \geq 1: V_n \in \left[V_0, V_0 \eta^{K(j)}\right), \sup_{\lambda \in (\lambda_{K(j)}, \lambda_0)} \sum_{i = 1}^n \log L_i^{(\lambda)} \geq g\left(V_n \right)  \right\} \\
         &\qquad\qquad\cup \left\{ \exists n \geq 1: V_n \geq V_0 \eta^{K(j)}, \sum_{i = 1}^n \log L_i^{(\lambda_{K(j)})} \geq g\left(V_0 \eta^{K(j)} \right)  \right\} \\
         & =\left\{\exists n \geq 1: V_n < V_0, \left(\hat{\mu}_n, \frac{g\left(V_0\right)}{V_n}\right) \in H(\Delta_0) \right\} \\
         &\qquad\cup \left\{ \exists n \geq 1: V_n \in \left[V_0, V_0 \eta^{K(j)}\right), \left(\hat{\mu}_n, \frac{g(V_n)}{V_n}\right) \in R \right\} \\
         &\qquad\qquad\cup \left\{ \exists n \geq 1: V_n \geq V_0 \eta^{K(j)}, \left(\hat{\mu}_n, \frac{g\left(V_0 \eta^{K(j)}\right)}{V_n}\right) \in H(\Delta_{K(j)}) \right\} \\
         &= \left\{\exists n \geq 1 :V_n < V_0,  \sum_{i = 1}^n \log L_i^{(\lambda_0)} \geq g\left(V_0\right) \right\} \\
         &\qquad\cup \left\{\exists n \geq 1:  V_n \in \left[V_0, V_0 \eta^{K(j)}\right), \sup_{\lambda \in (\lambda_{K(j)}, \lambda_0)} \sum_{i = 1}^n \log L_i^{(\lambda)} \geq g(V_n) \right\}\\
         &\qquad\qquad\cup \left\{\exists n \geq 1:  V_n \geq  V_0 \eta^{K(j)},  \sum_{i = 1}^n \log L_i^{(\lambda_{K(j)})} \geq g\left(V_0 \eta^{K(j)}\right) \right\},
   \end{align*}
where  $V_0 := \inf\left\{t \geq 1: D_0 \geq g(t) / t\right\}$ and the set  $R$ is defined by
\begin{equation}
    R := \left\{(z,y) \in [0,\infty)^2 : y \leq \psi^*(z)\right\},
\end{equation}
and $H(\Delta_0)$ and $H(\Delta_{K(j)})$ are half spaces contained in and tangent to $R$ at $\left(\Delta_0, \frac{g(V_0)}{V_0}\right)$ and $\left(\Delta_L, \frac{g\left(V_0 \eta^{K(j)}\right)}{V_0 \eta^{K(j)}}\right)$, respectively. See \cref{fig::GLR-CP-no-sep} for an illustration of the upper bound of the stopping event of $\bar{N}_g(j)$.
   
   \begin{figure}
    \begin{center}
    \includegraphics[scale =  0.75]{./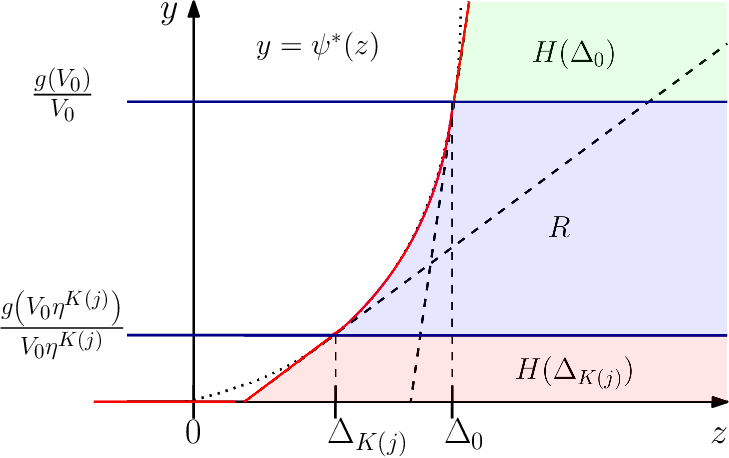}
    \end{center}
    \caption{Illustration of the stopping event of $\bar{N}_g(j)$ defined in \eqref{eq::N_G_j}, and related regions $H(\Delta_0), H(\Delta_{K(j)})$ and  $R$. The stopping time $\bar{N}_g(j)$ is the first time when $\left(\hat{\mu}_n, g\left(\overline{V}_n^{(j)} \right)/ V_n\right)$ is located in one of the colored areas, where $\overline{V}_n^{(j)} := \max\left\{V_0, \min\left\{V_n, V_0 \eta^{K(j)}\right\}\right\}$ } .
    \label{fig::GLR-CP-no-sep}
\end{figure}
   
Note that the first decomposition part $\left\{\exists n \geq 1 : V_n < V_0,\sum_{i = 1}^n \log L_i^{(\lambda_0)} \geq g(V_0) \right\}$ is nonempty only if $V_0 > v_{\min}$, which is equivalent to $g(V_0)> v_{\min} D_0$. For the second part, a straightforward extension of Lemma 1 in the appendix of \citep{shin2021nonparametric} implies that, for any fixed $\eta > 1$, the second part can be further decomposed by sets of simple events as follows:
\begin{equation}
\begin{aligned}
    & \left\{\exists n \geq 1:  V_n \in \left[V_0, V_0 \eta^{K(j)}\right), \sup_{\lambda \in (\lambda_{K(j)}, \lambda_0)} \sum_{i = 1}^n \log L_i^{(\lambda)} \geq g(V_n) \right\} \\
    & \subset \bigcup_{k=1}^{K(j)}\left\{\exists n \geq 1: V_n \in \left[V_0 \eta^{k-1}, V_0 \eta^k \right), \sum_{i = 1}^n \log L_i^{(\lambda_k)} \geq g\left(V_0 \eta^{k}\right)/ \eta \right\} ,
\end{aligned}
\end{equation}
where each $\lambda_k$ is given by $\lambda_k := \psi^*\left(\Delta_k\right)$ and  each $\Delta_k$ is the solution of the equation~\eqref{eq::const_of_SR_no_sep}
for $k = 1, \dots, K(j)$.

From this decomposition of the stopping event of $\bar{N}_g(j)$, for any fixed $\eta > 1$, we can lower bound the stopping time $\bar{N}_g(j)$ as follows:
\begin{align*}
    \bar{N}_g(j)&= \inf\left\{ n \geq 1: \sup_{\lambda \in (\lambda_{K(j)}, \lambda_0)} \sum_{i = 1}^n \log L_i^{(\lambda)} \geq g\left(V_0 \eta^{K(j)} \right) \right\} \\
    & \geq \inf \left\{n \geq 1: e^{-g(V_0)}  \prod_{i = 1}^n  L_i^{(\lambda_0)}\mathbbm{1}\left(g(V_0) > v_{\min}D_0\right)+\sum_{k=1}^{K(j)} e^{-g\left(V_0 \eta^k\right)/\eta}\prod_{i=1}^n L_{i}^{(\lambda_k)} \geq 1 \right\} \\
    & = \inf \left\{n \geq 1:  \sum_{k=0}^{K(j)} \omega_k\prod_{i=1}^n L_{i}^{(\lambda_k)} \geq 1/\alpha \right\} \\
    &= N_{1/\alpha}(j),
\end{align*}
which proves the claimed inequality. 
\end{proof}

\begin{proof}[Proof of \cref{cor::explicit_upper_no_sep}]
From  \cref{thm::upperbound_general_form} and \cref{lem::condition_for_SR_to_glrt_general}, worst average delays of adaptive e-SR and e-CUSUM procedures are upper bounded by $\max_{\nu \geq 0} \min_{j \geq 1} \left[ \mathbb{E}_{0,Q} N_{G} (j+\nu) + j - 1\right]$ where $\bar{N}_g(j)$ is a stopping time defined
        by
        \begin{equation}
            \bar{N}_g(j) := \inf\left\{n \geq 1 : \sup_{\lambda \in (\lambda_{K(j)}, \lambda_0)} \sum_{i = 1}^n \log L_i^{(\lambda)} \geq g\left(V_0 \eta^{K(j)} \right)\right\},
        \end{equation}
for each $j \geq 1$. Now, let us first consider the case $\lambda^\op \geq \lambda_0$. In this case, we use the following simple upper bound:
\begin{align*}
     \min_{j \geq 1} \left[ \mathbb{E}_{0,Q} N_{G} (j) + j - 1\right]
    & \leq  \mathbb{E}_{0,Q} N_{G} (1) \\
    & = \mathbb{E}_{0,Q}\inf\left\{n \geq 1 : \sup_{\lambda \in (\lambda_{K(1)}, \lambda_0)} \sum_{i = 1}^n \log L_i^{(\lambda)} \geq g\left(V_0 \eta^{K(1)} \right)\right\} \\
    & \leq \mathbb{E}_{0,Q}\inf\left\{n \geq 1 :  \sum_{i = 1}^n \log L_i^{(\lambda_0)} \geq g_{r\alpha}\right\}.
\end{align*}
Since $\mathbb{E}_{0,Q}\log L_1^{(\lambda_0)} = \sigma^2\left(\lambda_0 \Delta^\op - \psi(\lambda_0)\right) \geq \sigma^2\left(\lambda_0 \Delta_0 - \psi(\lambda_0)\right) = \sigma^2 \psi^*\left(\Delta_0\right)$, by the same argument of \cref{prop::upperbound_simple_form}, the last term above can be further upper bounded by
\begin{equation}
     \frac{g_{r\alpha}}{D(Q||\mathcal{P})}\frac{\psi^*\left(\Delta^\op\right)}{\psi^*\left(\Delta_0\right)} +  \frac{\mathbb{V}_{0,Q} \left[\log L_1^{(\lambda_0)}\right]}{\left[D(Q||\mathcal{P})\right]^2}\left[\frac{\psi^*\left(\Delta^\op\right)}{\psi^*\left(\Delta_0\right)}\right]^2 + 1.
\end{equation}
Since we are in the case $\lambda^\op \geq \lambda_0$, we have $\frac{\psi^*\left(\Delta^\op\right)}{\psi^*\left(\Delta_0\right)} \geq 1$, which can be understood as a measure of inefficiency due to the misspecified upper bound of the oracle $\lambda^\op$.

Now, consider the case $\lambda^\op < \lambda_0$ where we correctly specified the upper bound. In this case, let $j^\op$ be the smallest integer satisfying $\lambda_{K(j^\op)} := \lambda_{K^\op} < \lambda^\op < \lambda_0$. Then, we can further upper bound the worst average delays by 
\begin{align*}
     \min_{j \geq 1} \left[ \mathbb{E}_{0,Q} N_{G} (j) + j - 1\right] 
    & \leq \mathbb{E}_{0,Q}\inf\left\{n \geq 1 :  \sum_{i = 1}^n \log L_i^{(\lambda^\op)} \geq g\left(V_0 \eta^{K^\op} \right)\right\} + j^\op - 1 \\
    &:= \mathbb{E}_{0,Q}N_\op+ j^\op - 1.
\end{align*}
By \cref{eq::glr_bound_general}, we have the following intermediate upper bound on the worst average delays,
    \begin{equation}
        \mathbb{E}_{0,Q}N_\op+ j^\op - 1 \leq   \frac{g\left(V_0 \eta^{K^\op} \right)}{D(Q||P)} +  \frac{\mathbb{V}_{0,Q} \left[\log L_1^{(\lambda^\op)}\right]}{\left[D(Q||P)\right]^2} + j^\op.    
    \end{equation}
Note that if $j^\op = 1$ then $\lambda_{K(1)} = \lambda_L < \lambda^\op$. Thus, in this case, we also correctly specified the lower bound, and the above bound is reduced to the same upper bound on the worst average delays in \cref{thm::upper_bound_well_sep_general} of the well-separation case except the ARL parameter $\alpha$ being replaced by $r\alpha$. 

Finally, to get an explicit upper bound on $j^\op$ for the case $j^\op > 1$, fist note that, from the definition of  $\lambda_{K(j^\op - 1)}$ with the fact $ \lambda^\op < \lambda_1^{(j^\op - 1)} \Leftrightarrow \Delta^\op < \Delta_{K(j^\op -1)}$, we have 
    \begin{equation}
       \frac{g\left(V_0 \eta^{K(j^\op - 1)}\right)}{V_0 \eta^{K(j^\op - 1)}} = \psi^*\left(\Delta_{K(j^\op -1)}\right)  > \psi^*\left(\Delta^\op\right).
    \end{equation}
    Also, the condition $K(j) \geq K_L + m \log_\eta j$ implies
    \begin{equation}
        j \leq \left[\frac{\eta^{-K_L}}{V_0} V_0 \eta^{K(j)}\right]^{1/m},
    \end{equation}
    for each $j \geq 1$. By combining two inequalities above, we have
    \begin{align*}
        j^\op - 1 &\leq \left[\frac{\eta^{-K_L}}{V_0} V_0 \eta^{K(j^\op - 1)}\right]^{1/m} ~
         < ~ \left[\frac{1}{V_0\eta^{K_L}} \frac{g\left(V_0 \eta^{K(j^\op - 1)}\right)}{\psi^*\left(\Delta^\op\right)}\right]^{1/m} 
         \leq  \left[\frac{\psi^*\left(\Delta_L\right)}{\psi^*\left(\Delta^\op\right)} \frac{g\left(V_0 \eta^{K^\op}\right)}{g_{r\alpha}}\right]^{1/m}.
    \end{align*}
In sum, by combining all bounds above, we have
\begin{align*} 
&\min_{j \geq 1} \left[ \mathbb{E}_{0,Q} N_{G} (j) + j - 1\right]\\
&\leq
    \begin{cases} 
     \frac{g_{r\alpha}}{D(Q||\mathcal{P})}\frac{\psi^*\left(\Delta^\op\right)}{\psi^*\left(\Delta_0\right)} +  \frac{\mathbb{V}_{0,Q} \left[\log L_1^{(\lambda_0)}\right]}{\left[D(Q||\mathcal{P})\right]^2}\left[\frac{\psi^*\left(\Delta^\op\right)}{\psi^*\left(\Delta_0\right)}\right]^2 + 1 &\mbox{if } \lambda^\op\geq \lambda_0 \\
   \frac{g_{r\alpha}}{D(Q||P)} +  \frac{\mathbb{V}_{0,Q} \left[\log L_1^{(\lambda^\op)}\right]}{\left[D(Q||P)\right]^2} + 1    & \mbox{if } \lambda^\op \in (\lambda_L, \lambda_0) \\
    \frac{g_{r\alpha} + s\eta\log\left(1 + K^\op - K_L\right)}{D(Q||P)} +  \frac{\mathbb{V}_{0,Q} \left[\log L_1^{(\lambda^\op)}\right]}{\left[D(Q||P)\right]^2} +  \left[\frac{\psi^*\left(\Delta_L\right)}{\psi^*\left(\Delta^\op\right)} \frac{g_{r\alpha} + s\eta\log\left(1 + K^\op - K_L\right)}{g_{r\alpha}}\right]^{1/m}   &\mbox{if } \lambda^\op\leq \lambda_L
    \end{cases},
\end{align*}
as desired. 
\end{proof}

\section{An explicit way to compute the threshold in Algorithm~\ref{alg::compute_base}}

The following pseudo-code describes how to compute the threshold $g_\alpha$ defined by
\begin{equation} 
    g_\alpha:= \inf \left\{g > \log(1/\alpha): e^{-g} \mathbbm{1}(g > v_{\min}D_U) + \min_{k \in [K_{\max}]} k \exp\left\{-g\left(\frac{D_U}{D_L}\right)^{-1/k}\right\} \leq \alpha\right\}.
\end{equation}
\begin{algorithm}
\DontPrintSemicolon

\KwInput{ARL parameter $\alpha \in (0,1)$, Boundary values $0<\Delta_L < \Delta_U$,\newline Maximum number of baselines $K_{\max} \in \mathbb{N}$, Tolerance  $\epsilon > 0$.}
\KwOutput{Threshold $g_\alpha > 0$ that is defined and used in \cref{alg::compute_base}.}
Set $D_L := \psi^*(\Delta_L) < \psi^*(\Delta_U) =: D_U$ and define a function $f$ on $\mathbb{R}_+$ as 
\begin{equation}
f(g):= \min_{k \in [K_{\max}]} k \exp\left\{-g \left(\frac{D_U}{D_L}\right)^{-1/k}\right\}.
\end{equation}

\eIf{$f(v_{\min}D_U) \leq \alpha$}{
    Compute $g_\alpha := \inf\left\{g \in (\log(1/\alpha), v_{\min}D_U]: f(g) \leq \alpha\right\}$ by using the bisection method to the function $g \mapsto f(g) - \alpha$ with endpoints $\left\{\log(1/\alpha), v_{\min}D_U\right\}$ and tolerance $\epsilon$.\;
 }{
 Compute $g_\alpha := \inf\left\{g \in \left(v_{\min}D_U,\frac{D_U}{D_L}\log(2/\alpha)\right): e^{-g} + f(g) \leq \alpha\right\}$ by using the bisection method to the function $g \mapsto e^{-g} + f(g) - \alpha$ with endpoints $\left\{v_{\min}D_U,\frac{D_U}{D_L}\log(2/\alpha)\right\}$.
}
\Return $g_\alpha > 0 $ \;
 \caption{ Pseudo-code of \texttt{computeThreshold} function}
 \label{alg::compute_threshold}
\end{algorithm}

\end{document}